\documentclass[a4paper,USenglish,cleveref,autoref,thm-restate]{lipics-v2021}

\usepackage{mathtools}
\usepackage{xspace}
\usepackage{complexity}
\usepackage{amsfonts}
\usepackage{amssymb}
\usepackage{algorithm}
\usepackage[noend]{algorithmic}

\usepackage{booktabs}
\usepackage[textsize=small]{todonotes}
\usepackage{changepage}
\usepackage{moresize}
\usepackage{graphicx}
\usepackage{amsmath}
\usepackage{tikz}
\usetikzlibrary{arrows.meta}
\usepackage[x11names]{xcolor}

\newcommand{\tikzcurveddownarrow}[2][black]{%
  \smash[b]{%
    \tikz[baseline=-0.6ex]{%
      \draw[
        draw=#1,
        line width=1pt,
        -{Stealth[length=2mm]},
      ]
      (0,0) to[out=360, in=70] (#2,-.35);
    }%
  }%
}

\newcommand{\bZ}{\mathbb{Z}}

\newcommand{\hF}{F^1}
\newcommand{\cA}{\mathcal{A}}

\newcommand{\zero}{\ensuremath{I}\xspace}
\newcommand{\opt}{\ensuremath{\mathit{OPT}}\xspace}

\DeclareMathOperator{\OPF}{OPF}
\DeclareMathOperator{\IPF}{IPF}

\newcommand{\retrieve}{ \hspace{-0.05cm }\ensuremath{\odot}_{\hspace{-0.05cm}\textit{\ssmall P}} \hspace{-0.03cm}}
\DeclareMathOperator{\precorder}{\prec}
\DeclareMathOperator{\universal}{Universal}
\DeclareMathOperator{\runtime}{Running-Time}

\newcommand{\newstr}{Algorithm~\ref{alg:strategy_simple}-\ref{alg:subroutine}\xspace}

\graphicspath{{./figures/}}

\hideLIPIcs
\title{Instance and Universally Optimal Bounds for Imprecise Pareto Fronts}
\titlerunning{Instance and Universally Optimal Bounds for Imprecise Pareto Fronts}

\author{Sarita de Berg}{Department of Computer Science, IT University of Copenhagen, Denmark}{debe@itu.dk}{https://orcid.org/0000-0001-5555-966X}{}

\author{Nynne Maria Foldager Bække}{Technical University of Denmark, Denmark}{nynne4000@hotmail.com}{}{}

\author{Frida Astrup Eriksen}{Technical University of Denmark, Denmark}{}{}{}

\author{Ivor van der Hoog}{IT University of Copenhagen, Denmark}{ivva@itu.dk}{https://orcid.org/0009-0006-2624-0231}{}

\author{Eva Rotenberg}{IT University of Copenhagen, Denmark}{erot@itu.dk}{0000-0001-5853-7909 }{}{}

\author{Daniel Rutschmann}{Institute of Science and Technology Austria (ISTA), Austria}{rutschmann@proton.me}{ https://orcid.org/0009-0005-6838-2628}{}

\authorrunning{S. de Berg, N. M. F. Bække, F. A. Eriksen, I. van der Hoog, E. Rotenberg, D. Rutschmann}

\Copyright{Sarita de Berg, Nynne Maria Foldager Bække, Frida Astrup Eriksen, Ivor van der Hoog, Eva Rotenberg, Daniel Rutschmann}

\ccsdesc{Theory of computation~Computational geometry}

\keywords{Pareto front, imprecise geometry, instance optimality, universal optimality, preprocessing model, partial information}

\funding{This work was supported by the the VILLUM Foundation grant (VIL37507) ``Efficient Recomputations for Changeful Problems''.}

\nolinenumbers

\EventEditors{Mickey Mouse}
\EventNoEds{1}
\EventLongTitle{1st International Symposium on Magical Artifacts (MAGIC 2024)}
\EventShortTitle{MAGIC 2024}
\EventAcronym{MAGIC}
\EventYear{2024}
\EventDate{June 26-- July 29, 2024}
\EventLocation{Perth, Australia}
\EventLogo{}
\SeriesVolume{1}
\ArticleNo{1}
\begin{document}

\maketitle

\begin{abstract}
    In the imprecise geometry model, the input is an imprecise point set, which is a family of regions $F = (R_1, \ldots,R_n)$, where for each $R_i$ one may retrieve the true point $p_i \in R_i$. By preprocessing $F$, we can construct the output, in our case the Pareto front, on $P$ faster.

    We efficiently construct the Pareto front of an imprecise point set in the plane. Efficiency is interpreted in two ways: minimizing (i) the number of retrievals, and (ii) the computation time used to determine the set of regions that must be retrieved and to construct the Pareto front. 

    We present an algorithm to construct the Pareto front for possibly overlapping rectangles that is \emph{instance-optimal} with respect to the number of retrievals, meaning that for every fixed input $(F, P)$,  there is no algorithm that retrieves asymptotically fewer regions to compute the output. This is a strong algorithmic quality, as it means that our algorithm is competitive even to clairvoyant algorithms which know a correct guess of the output and only have to verify its correctness. In terms of algorithmic running time, instance-optimality is provably unobtainable. We instead present an algorithm which is within a $\log n$-factor of  instance-optimality. This generalizes earlier results to overlapping input regions, at only a minor cost in running time. 

    For unit squares, we present an algorithm that is not only instance-optimal in the number of retrievals, but also \emph{universally} optimal in terms of running time, meaning that for any fixed set of regions $F$, no algorithm has a better worst-case running time for all possible point sets $P$. This is the first universally optimal algorithm for overlapping planar input. Compared to previous work, this result improves the degree of overlap, the preprocessing time, the number of retrievals, and the running time.
\end{abstract}

\setcounter{page}{0}

\newpage
\section{Introduction}
\label{sec:introduction}
Computational geometry has a long tradition of studying data structures on planar point sets whose constructions reduce to sorting. 
Prominent examples include quadtrees, Euclidean minimum spanning trees, Delaunay triangulations, convex hulls, and Pareto fronts. 
Since these problems admit an $\Omega(n \log n)$ lower bound on a Real RAM, classical algorithms construct these structures in $\Theta(n \log n)$ time.
A long-standing line of research asks when this $\Omega(n \log n)$ barrier can be circumvented.
Under additional assumptions on the input, many algorithms are known that run in $o(n \log n)$ time~\cite{afshani2017instance,BuchinMulzer2011Delaunay,Chan1996,devillers2011delaunay,EppsteinEtAlCCCG2025Entropy,FranceschiniMuthukrishnanPatrascu2007,graham1972efficient,HanThorup2002,vanderhoog2025Combinatorial,vanDerHoogRustenmann2025TightUniversal,kirkpatrick1986ultimate,van2010preprocessing,loffler2013unions,loffler2010delaunay}. 
These assumptions take several forms:
One may assume that the input is already sorted~\cite{BuchinMulzer2011Delaunay,graham1972efficient, SEIDEL1985319} or partially sorted~\cite{EppsteinEtAlCCCG2025Entropy, vanDerHoogRustenmann2025TightUniversal}. 
Alternatively, one may assume integer coordinates, allowing faster-than-comparison-based sorting~\cite{FranceschiniMuthukrishnanPatrascu2007,HanThorup2002}. 
A different approach is to exploit structural properties of the input (e.g., by bounding the output complexity)~\cite{afshani2017instance,vanderhoog2025Combinatorial,kirkpatrick1986ultimate}.

We consider the case where we are given additional \emph{geometric} information about the input to speed up computation.
This assumption is formalized by the now-standard \emph{imprecise geometry} model, introduced by Held and Mitchell~\cite{held2008triangulating}.
This is a well-studied framework for obtaining faster algorithms for geometric problems~\cite{ACHARYYA2022116,deberg2025ImpreciseConvex,devillers2011delaunay, evans2011possible,ezra2013convex, held2008triangulating,van2019preprocessing,van2022preprocessing,van2010preprocessing,loffler2013unions, loffler2025preprocessing, loffler2010delaunay}, where the input is an imprecise point set. 
An imprecise point set is specified as a family of regions $F = (R_1,\ldots,R_n)$, where for each region $R_i$ one may retrieve the (unknown) true point $p_i \in R_i$. 
By preprocessing $F$, we can construct the output on $P$ faster.
In this paper, we consider the Pareto front which is the ordered sequence of maximal points.
When $F$ is a set of (overlapping) axis-aligned rectangles, we construct the Pareto front of $P$ with an optimal number of retrievals, generalizing the result of van der Hoog, Kostitsyna, L{\"o}ffler, and Speckmann~\cite{van2022preprocessing}. 
Moreover, when the regions are restricted to unit squares, we show that our algorithm achieves universally optimal running time, yielding the first universally optimal algorithm for overlapping regions in $\mathbb{R}^2$.

\subparagraph{Imprecise geometry.} An imprecise point set~\cite{held2008triangulating} is defined as a family of regions $F = (R_1, \ldots, R_n)$, where each region $R_i$ contains a unique but unknown point $p_i$. A \emph{realization} $P \sim F$ is a sequence $P = (p_1, \ldots, p_n)$ where $p_i \in R_i$. An input instance is a pair $(F, P)$. A \emph{retrieval} operation reveals the precise location of a point $p_i$, replacing $R_i$ with $p_i$.

Let $F \retrieve B$ denote the family~$F$ after retrieving all $R_i \in B$, where $B \subset F$. 
The aim of a \emph{reconstruction algorithm} is to identify a subset $B \subset F$ such that for all realizations $P_1, P_2 \sim F \retrieve B$, the algorithm’s output is the same (in our case, the ordered points on the Pareto front, see Figure~\ref{fig:problem}).
We evaluate algorithms by three criteria: the preprocessing time, the total number of retrievals, and the running time of the reconstruction algorithm.
We aim for a complexity analysis that goes beyond the (trivial) worst-case scenarios. 

\begin{figure}[h]
    \centering
    \includegraphics{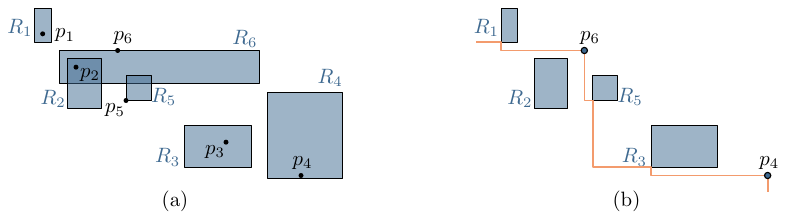}
    \caption{(a) A family of regions $F = (R_1,\ldots,R_6)$ and a sequence $P = (p_1,\ldots,p_6)$ with $P \sim F$. (b) After retrieving $R_4$ and $R_6$ the Pareto front equals $(p_1,p_6,p_5,p_3,p_4)$ for \emph{any} $P'\sim F$.
    }
    \label{fig:problem}
\end{figure}

\subsection{Worst-case analysis and beyond}

Consider an imprecise point set $(F,P)$.  
For any algorithm $\mathcal{A}$ and input $(F,P)$, we define the \emph{cost} of its execution $\texttt{cost}(\mathcal{A},F,P)$ to be either its running time $\texttt{runtime}(\mathcal{A},F,P)$ or the number of retrievals $\texttt{retrievals}(\mathcal{A},F,P)$ it performs before termination.

In \emph{worst-case} analysis, the worst case cost of an algorithm $\mathcal{A}$ is defined as
\[
\hfill
\texttt{Worst-Case}(\mathcal{A}) := \max_{\text{regions }F}\ \max_{P \sim F}\ \texttt{cost}(\mathcal{A},F,P).
\hfill
\]

\noindent
For a fixed cost measure, an algorithm $\mathcal{A}$ is \emph{worst-case optimal} if there exists a constant~$c$ such that for every algorithm $\mathcal{A}'$, $
\texttt{Worst-Case}(\mathcal{A}) \le c \cdot \texttt{Worst-Case}(\mathcal{A}')$.
If the regions in $F$ mutually overlap in some open area $A$, then $F$ conveys essentially no information about $P$: any point configuration can be scaled and translated to lie entirely inside $A$ and thus be realized within $F$.  
Consequently, for overlapping regions, the worst-case running time of a reconstruction algorithm (for all aforementioned geometric problems) is $\Omega(n \log n)$, and the worst-case number of retrievals is $\Omega(n)$.  
These bounds are matched by the naive algorithm that ignores $F$, retrieves all points, and then runs a standard construction algorithm.
To obtain meaningful worst-case bounds, much of the literature therefore restricts $F$ to consist of disjoint regions.  
Under this assumption, researchers have studied the reconstruction of Delaunay triangulations~\cite{devillers2011delaunay, held2008triangulating, van2010preprocessing,loffler2010delaunay}, convex hulls~\cite{evans2011possible, ezra2013convex}, Gabriel graphs~\cite{loffler2010delaunay}, and onion decompositions~\cite{loffler2013unions}.  
For these problems, any reconstruction algorithm must, in the worst case, perform $\Omega(n)$ retrievals.
When the regions in $F$ are pairwise disjoint unit disks~\cite{devillers2011delaunay, evans2011possible, held2008triangulating, van2010preprocessing,loffler2013unions, loffler2010delaunay}, the best known worst-case guarantees achieve $O(n \log n)$ preprocessing time, $O(n)$ retrievals and $O(n)$ reconstruction time, which is worst-case optimal.
To go beyond such worst-case guarantees, we turn to stricter notions of algorithmic analysis.

\subparagraph{Universal optimality.}
Universal optimality is based on the observation that a running-time analysis can often be strengthened by turning the outer maximum into a universal quantifier. 
For any fixed set of regions $F$, we define the \emph{universal cost} of an algorithm $\mathcal{A}$ as
\[
\hfill
\texttt{Universal}(\mathcal{A},F) := \max_{P \sim F} \texttt{cost}(\mathcal{A},F,P).
\hfill
\]
For a fixed cost measure, an algorithm $\mathcal{A}$ is \emph{universally optimal} if there exists a constant $c$ such that, for \emph{all sets of regions} $F$ and all algorithms $\mathcal{A}'$, 
$\texttt{Universal}(\mathcal{A},F) \le c \cdot \texttt{Universal}(\mathcal{A}',F)$.
For a fixed $F$, the value $\min_{\mathcal{A}} \texttt{Universal}(\mathcal{A},F)$ may range from $O(1)$ to $O(n \log n)$. E.g., there exist inputs $F$ where all realizations $P \sim F$ yield the same output and no retrieval nor reconstruction running time is needed. 
Universal optimality is considerably stronger than worst-case optimality: $\mathcal{A}$ must be efficient for all $F$, yet for each fixed $F$ it competes with algorithms tailored to $F$. 
Consequently, universally optimality is difficult to obtain.

Universal optimality has gained considerable recent attention~\cite{afshani2017instance, cardinal_sorting_2013, efremenko_et_al:LIPIcs.ITCS.2026.55, haeupler2025bidirectional, Haeupler25, hladik_et_al:LIPIcs.FORC.2025.6,  van2019preprocessing, van2022preprocessing, vanderhoog2025Combinatorial, van2025simpler, vanDerHoogRustenmann2025TightUniversal, van2024tight,kahn_entropy_1992, kahn_balancing_1984}.
For imprecise geometry, only a few results are known. 
Bruce, Hoffmann, Krizanc, and Raman~\cite{bruce2005efficient} give a polynomial-time algorithm with a universally optimal number of retrievals for very general $F$. 
L\"offler and Raichel~\cite{loffler2025preprocessing} obtain universal optimal number of retrievals and running time for convex hulls when $F$ consists of unit disks of constant overlap (ply).  Their results apply to the Pareto front also. 
De~Berg et al.~\cite{deberg2025ImpreciseConvex} achieve universally optimal retrievals for convex hulls where $F$ contains overlapping $O(1)$-gons, with reconstruction running time within an $O(\log^3 n)$ factor of universal optimality. 
Van~der~Hoog, Kostitsyna, L\"offler, and Speckmann~~\cite{van2019preprocessing} give a universally optimal algorithm if $F$ contains one-dimensional intervals and the goal is to sort $P$. Follow up work~\cite{van2022preprocessing} extends this to the two-dimensional Pareto front where $F$ consists of $n$ disjoint axis-aligned rectangles.

\subparagraph{Instance-optimality.}
Instance-optimality is an even stronger notion that replaces all maxima by universal quantifiers. 
For a fixed cost measure, instance-optimality of an algorithm $\mathcal{A}$ requires a constant $c$ such that, for all imprecise point sets $(F,P)$ and all algorithms $\mathcal{A}'$, 
$\texttt{Cost}(\mathcal{A},F,P) \le c \cdot \texttt{Cost}(\mathcal{A}',F,P)$.
This is an extremely strong requirement, as $\mathcal{A}$ must compete even with algorithms $\mathcal{A}'$ that have (deterministically) guessed the correct output and only verify their guess.
For the aforementioned geometric problems, instance-optimal running times are impossible. 
A representative example is the Pareto front with overlapping unit squares $F$: 
Consider the family of algorithms $\mathcal{A}_\pi$ indexed by permutations $\pi$, and the family of inputs $P_\pi$ whose Pareto front equals $P$ in the order $\pi$. 
Given $(F,P)$, the algorithm $\mathcal{A}_\pi$ retrieves all points and verifies in linear time whether its guess $\pi$ is correct, that is, whether $P$ forms a staircase in the order $\pi$; otherwise it constructs the Pareto front in $O(n^2)$ time. 
For each input $P_\pi$, there exists an algorithm $\mathcal{A}_i$ which it runs in linear time, yet no single algorithm can run in linear time for all $P_\pi$. 
Thus, although every $\mathcal{A}_\pi$ is individually inefficient in the worst case, they together make instance-optimal running times unattainable.

Perhaps surprisingly, instance-optimality \emph{is} achievable when the cost measure is the number of retrievals. 
We therefore call an algorithm \emph{instance-optimal} if, for some constant $c$, for all inputs, it uses at most a factor $c$ more retrievals than any other algorithm.
The retrievals performed by Bruce, Hoffmann, Krizanc, and Raman~\cite{bruce2005efficient} are not only universally optimal, they are instance-optimal. 
Similarly, the sorting algorithm of van der Hoog, Kostitsyna, L\"offler, and Speckmann~\cite{van2019preprocessing}, as well as their algorithm for computing the Pareto front of a point set when $F$ consists of disjoint axis-aligned rectangles~\cite{van2022preprocessing}, are also instance-optimal. 
Finally, de~Berg et al.~\cite{deberg2025ImpreciseConvex} give an instance-optimal algorithm for constructing the convex hull when $F$ consists of overlapping $O(1)$-gons. 
A summary is given in Table~\ref{tab:my_label}.

\begin{table}[H]
    \centering
    \begin{minipage}{\textwidth}
    \begin{tabular}{@{}llllllll@{}}
    \toprule
       Shapes 
       & Overlap & Structures & Preprocess  & Retrievals \hspace{3mm} & Reconstruction time & Source \\
    \midrule
             Unit disks  & No & many & $O(n \log n)$ & $O(n)$ & $O(n)$  & \hspace{-1.5cm}\cite{devillers2011delaunay, evans2011possible,held2008triangulating, loffler2013unions, loffler2010delaunay, van2010preprocessing} \\
          Smooth & Yes & front, hull & $O(\poly \, n) $ & $O(\texttt{instance})$ & $O( \poly \, n )$  & \cite{bruce2005efficient} \\
           Intervals & Yes &  sorting & $O(n \log n)$ &  $O(\texttt{instance})$ & $O( \texttt{universal} )$ & \cite{van2019preprocessing} \\

             \noalign{\medskip}
                    Unit disks & $k$ & hull & $O(k^3 n)$ &  $O(\texttt{universal})$ & $O( \texttt{universal} \cdot k^3)$ & \cite{loffler2025preprocessing}  \\
                           Unit disks & $k$ & hull & $O(k n \log^3 n)$ &  $O(\texttt{instance})$ & $O(  \texttt{instance} \cdot k \log^3 n)$ &  \cite{deberg2025ImpreciseConvex} \\
        
                    $k$-gons & Yes & hull & $O(kn \log^3 n)$ &  $O(\texttt{instance})$ & $O(  \texttt{instance} \cdot k \log^3 n)$ & \cite{deberg2025ImpreciseConvex} \\

             \noalign{\medskip}
      Axis rect. & No\hspace{0.5mm}\tikzcurveddownarrow[Green3]{0.05}
      & front & $O(n \log n)$ &   $O(\texttt{instance})$ & $O( \texttt{universal})$\hspace{0.5mm}\tikzcurveddownarrow[red]{0.05}& \cite{van2022preprocessing} \\   
      Axis rect. & Yes & front & $O(n \log n)$  &   $O(\texttt{instance})$ & $O( \texttt{instance} \cdot \log n)$ & Thm.~\ref{thm:rectangles} \\
          Unit disks & $k$\hspace{0.5mm}\tikzcurveddownarrow[Green3]{0.05} & front & $O(k^3 n \log n)$\hspace{0.5mm}\tikzcurveddownarrow[Green3]{0.05} &  $O(\texttt{universal})$\hspace{0.5mm}\tikzcurveddownarrow[Green3]{0.05} & $O( \texttt{universal} \cdot k^3)$\hspace{0.5mm}\tikzcurveddownarrow[Green3]{0.05} & \cite{loffler2025preprocessing}  \\
      Unit squa. & Yes & front & $O(n \log n)$  &   $O(\texttt{instance})$ & $O( \texttt{universal})$ & Thm.~\ref{thm:unit_squares} \\
      \bottomrule
    \end{tabular}
    \caption{
    Results that compute a \emph{sorting}, Pareto \emph{front}, or convex \emph{hull}.  
    For reference, our cost measures have $O( \texttt{instance}) \subseteq O(\texttt{universal}) \subseteq O(n \log n)$. Green arrows indicate that we make an improvement over the state-of-the-art. 
}
    \label{tab:my_label}
    \end{minipage}
\end{table}

\subparagraph{Our contributions.}
We present an instance-optimal Pareto front algorithm for overlapping regions. 
Its running time is within a $\log n$ factor of instance (and thus universal) optimality. 
This generalizes the result of~\cite{van2019preprocessing} to overlapping regions, at only a minor running time cost. 
For unit squares, we achieve the first universally optimal algorithm for overlapping planar input.
This can be compared with the result of L\"offler and Raichel~\cite{loffler2025preprocessing} for unit disks of ply $k$, which can also compute the Pareto front. 
In comparison, we improve the degree of overlap, the preprocessing time, the number of retrievals, and the reconstruction running time.
Table~\ref{tab:my_label} summarizes our results.

To obtain these results, we present a general instance-optimal reconstruction strategy in Section~\ref{sec:simple_strategy}. In Section~\ref{sec:overview}, we give an overview of the challenges and techniques to execute this strategy efficiently for rectangles and unit squares. We then present the algorithm to execute the strategy efficiently for rectangles in Section~\ref{sec:program_rectangles} and for unit squares in Section~\ref{sec:squares}. 

\subparagraph{Advantages of our strategy.}
Our objective is to achieve, after preprocessing, a reconstruction running time that is within a $\log n$ factor of instance-optimality. This goal is nontrivial, since this quantity depends on the unknown point set $P$ and therefore cannot be determined in preprocessing.
Existing instance-optimal reconstruction strategies are typically adaptive to the \emph{current} set of regions $F$~\cite{bruce2005efficient,van2022preprocessing}. These approaches iteratively attempt to identify, based on the information currently available about $P$, those regions in $F$ that must be retrieved by any correct algorithm. Performing this identification on-the-fly is costly: prior techniques require either polynomial time~\cite{bruce2005efficient}, time polynomial in the overlap~\cite{loffler2025preprocessing} (for convex hulls), or restrict $F$ to disjoint regions~\cite{deberg2025ImpreciseConvex,van2022preprocessing} (for convex hulls or the Pareto front).

In contrast, our strategy relies almost entirely on the \emph{original} regions. During reconstruction, the algorithm only checks whether a region had some dependency property among the original regions and whether it currently lies on the Pareto front. This design shifts much of the algorithmic intricacy to the preprocessing phase. As a result, the reconstruction phase is substantially simpler and uses logarithmic time per retrieval even for overlapping regions.

\section{Preliminaries}\label{sec:preliminaries}
The algorithmic input is an \emph{imprecise point set}, defined as a family of geometric regions $F = (R_1,\ldots,R_n)$ and a realization $P \sim F$. 
A realization $P \sim F$ is defined as a sequence $P$ of $n$ points such that $p_i \in R_i$. 
One is allowed to preprocess $F$ using polynomial time and space. Afterwards, a \emph{reconstruction algorithm} has access to the retrieve operation which replaces a region $R_i$ by its corresponding point $p_i$.  We write $F\retrieve A$ for the family that is obtained by retrieving all regions in $A \subseteq F$. When $A$ is equal to a single region $R_a$, we slightly abuse notation and use $F\retrieve R_a$ to denote the family obtained by retrieving $R_a$. We denote by $F^0$ the original family of regions before any retrievals. We denote for a family of regions $F$ by $F - A$ the family obtained by removing all regions in $A$ from $F$. 

\subparagraph{Pareto front and outer Pareto front.}
We say that a point $p$ \emph{dominates} $q$ if $p$ lies in the open top-right quadrant of $q$. 
The Pareto front $\PF(P)$ are all points of $P$ that are not dominated by some $q \in P$, ordered along the staircase bounding the area of all points $p' \in \mathbb{R}^2$ that are not dominated by some point $q \in P$. 
Note that, if $P$ does not lie in general position, then this ordering is a partial order.
We define for a family of regions $F$ the \emph{outer Pareto front} $\OPF(F)$ as all regions $R_a \in F$ such that there is a point $p \in R_a$ that is not dominated by any point $q \in R_b$ with $a \neq b$ and $R_b \in F$. 
If $F$ consists of rectangles, then the outer Pareto front corresponds to the Pareto front of the top-right corners of the regions in $F$. 
We define the \emph{inner Pareto front} $\IPF(F)$ as the area of points $p' \in \mathbb{R}^2$ such that for all $p \sim F$, $p'$ is dominated.
Note that this is an open area and that if $F$ consists of rectangles, the inner Pareto front corresponds to the area below the staircase of the Pareto front of the bottom-left corners of the regions in $F$, see Figure~\ref{fig:dependencies}. If some $R_i \in F^0$ is contained within $\IPF(F^0)$ then $p_i$ will never appear on $\PF(P)$.
Since we can detect such regions $R_i$ in $O(n \log n)$ time by constructing a Pareto front and doing intersection queries~\cite{de2008computational}, and we use at least $O(n \log n)$ preprocessing time, we henceforth assume that the input $F^0$ contains no such regions. 

\subparagraph{Restricting $F$ and $P$.} 
In this paper, we consider two scenarios. In our most general scenario, $F$ is a family of (possibly overlapping) closed axis-aligned rectangles. In the other scenario, $F$ is a family of (possibly overlapping) unit squares. 
We note that, to allow for instance-optimality, we \emph{cannot} assume that $P$ lies in general position, i.e. that all points in $P$ have distinct $x$- and $y$-coordinates. Indeed, let $F = (R_1, \ldots, R_n)$ be $n$ identical unit squares.
Consider a family of algorithms $\mathcal{A}_i$ where $\mathcal{A}_i$ retrieves $p_i$ first and a family of inputs $P_i \sim F$ where $p_i$ lies on the top-right corner of $R_i$. 
If the input is $(\mathcal{A}_i, P_i)$ then $\mathcal{A}_i$ terminates in constant time as per general position, $p_i$ dominates all remaining points in $P$.
Yet, since all regions are indistinguishable, there exists for any algorithm $\mathcal{A}$ an input $(F, P_i)$ on which $\mathcal{A}$ does $\Omega(n)$ retrievals before finding $p_i$.

\subparagraph{Dominating regions.}
Recall that by $F^0$ we denote the original input regions. If we retrieve a region $R_i$ then we replace $R_i$ by $p_i$, updating the set of regions. We denote at all times by $F$ the `current' set of regions. 
We subsequently define:

\begin{definition} Let $F$ be the current set of regions, and let $R_a \in F$. Then $R_a$ is:
\begin{itemize}
    \item \emph{always}  if for all $P \sim F$ we have that $p_a \in \PF(P)$,
    \item \emph{dominated} if for all $P \sim F$ we have that $p_a \notin \PF(P)$.
\end{itemize}
\end{definition}

\noindent
A point $p$  \emph{dominates} a region $R_a$ if $R_a$ is contained in the bottom-left open quadrant of $p$.

\subsection{Reconstruction algorithms and instance-optimality}

After preprocessing $F^0$, a \emph{reconstruction algorithm} can retrieve points to construct the Pareto front of $P$.
We are interested in the combinatorial structure, not in the points of the front itself.
For example, if the input is a family $F$ where all $P \sim F$ have the same Pareto front, then we do not wish to retrieve any points. 
We formalize this, by defining a Pareto front as a partial order and stopping the algorithm when the partial order is the same for all $P \sim F$. 

\begin{definition}
    Given $P \sim F$, let $\precorder(\PF(P))$ be the partial order on $[n]$ induced by traversing $\PF(P)$ along the corresponding staircase. I.e.,
    for $p_a, p_b \in P$, we have $a \prec b$ if and only if $p_a$ and $p_b$ both lie on $\PF(P)$ and either $p_a$ lies strictly to the left or strictly above of $p_b$.
\end{definition}
A family of regions $F$ is \emph{finished} if for any $P_1,P_2 \sim F$ we have that $\precorder(\PF(P_1))=\precorder(\PF(P_2))$.
A reconstruction algorithm retrieves some $A \subseteq F$ such that $ F \retrieve A$ is finished. We denote by $r(F,P)$  the minimum number of retrievals needed by any algorithm such that the resulting family of regions is finished.
We distinguish between two types of reconstruction algorithms: A \emph{reconstruction strategy} is analyzed only by the number of retrievals made.
 A \emph{reconstruction program} is executed on a RAM, and produces a pointer structure representing $\PF(P)$. Reconstruction programs are analyzed by both the number of retrievals and instructions.

\section{A simple instance-optimal reconstruction strategy}\label{sec:simple_strategy}

Prior instance-optimal reconstruction strategies~\cite{van2019preprocessing, van2022preprocessing, deberg2025ImpreciseConvex, bruce2005efficient} all use the same technique, introduced in~\cite{bruce2005efficient}.
    Let $F$ be the current family of regions and $A \subseteq F$. We say $A$ is a \emph{witness} if for all $P' \sim (F - A)$ the family $A \cup P'$ is not finished.
Any strategy that iteratively considers~$F$ and retrieves all regions in a  constant size witness, is instance-optimal~\cite{bruce2005efficient}.
For the Pareto front (and convex hulls) there always exists a witness of size $3$ and prior reconstruction strategies focus on finding, for the current family of regions, such a witness set. 
In this paper, we propose a simpler strategy with a different proof for instance-optimality. To this end, we define a dependency relation between regions (see Figure~\ref{fig:dependencies}):

\begin{figure}[b]
    \centering
    \includegraphics{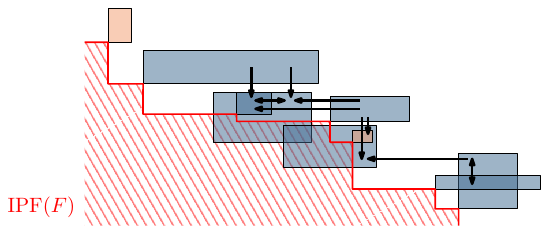}
    \caption{The dependencies between the regions in $F_0$: an outgoing arrow represents an outgoing dependency. The two orange regions have no outgoing dependencies are are thus \emph{independent} regions.}
    \label{fig:dependencies}
\end{figure}

\begin{definition}\label{def:independent}
    A region $R_a \in F_0$ has an \emph{outgoing dependency} in $F_0$ to a region $R_b \in F_0 - R_a$ if there is a point $q \in R_b$, such that all of the following conditions hold:
    \begin{enumerate}[I]
        \item\label{itm:3} the top-right corner of $R_b$ does not dominate $R_a$, and
        \item\label{itm:2} $q$ is above the inner Pareto front of $F_0$, and
        \item\label{itm:1} either $q \in R_a$, or, the top-right corner of $R_a$ dominates $q$.
    \end{enumerate}
\end{definition}

\begin{definition}
        A region $R_a \in F_0$ is \emph{independent} if it has no outgoing dependency in $F^0$.
        We denote by $I^0 \subseteq F^0$ the regions that are independent in $F^0$. We initially set $I \gets I^0$ and, whenever we retrieve $p \sim R \in I^0$, we replace the region $R$ in $I$ (but not $I^0$) by $p$.
\end{definition}

The following lemma implies that we can essentially handle the independent regions separately after dealing with the dependent regions.
\begin{lemma}\label{lem:zero_independent}
    Let $F$ be the current family of regions. Retrieving a region $R_a \in \zero$ will not cause any non-point region in $F$ to appear on the outer Pareto front.
\end{lemma}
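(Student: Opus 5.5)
We argue by contradiction. Suppose that before retrieving $R_a \in \zero$ some non-point region $R_b \in F$ is not on $\OPF(F)$, but afterwards $R_b \in \OPF(F \retrieve R_a)$. The key observation is that the only change in the family is that $R_a$ is replaced by $p_a \in R_a$. Since $p_a$ dominates a subset of what the top-right corner $t_a$ of $R_a$ dominates, the set of points dominated by the other regions can only shrink, and only because of $R_a$. We first note that $R_b \in F^0$ is an original, unretrieved region, and it was not on the outer Pareto front.

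\textbf{Step 1: isolate the blocker.} Because $R_b \notin \OPF(F)$ before the retrieval, every point of $R_b$ is dominated by some point of some other region. Because $R_b \in \OPF$ afterwards, there is a point $q \in R_b$ that is undominated in $F \retrieve R_a$. Before the retrieval, $q$ was dominated by some point of another region. Only $R_a$ changed, so the only possible dominator was a point of $R_a$. Hence $t_a$ dominates $q$, i.e.\ $q$ lies in the open bottom-left quadrant of $t_a$, while $p_a$ does not dominate $q$.

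\textbf{Step 2: verify the dependency conditions.} We show that $R_a$ has an outgoing dependency to $R_b$ in $F^0$, witnessed by the point $q$, which contradicts $R_a \in \zero$.
\begin{itemize}
\item Condition~\ref{itm:1} holds directly, since $t_a$ dominates $q$.
\item Condition~\ref{itm:2} requires $q$ to lie above $\IPF(F^0)$. The current $F$ is obtained from $F^0$ by retrievals, which only shrink regions, so $\IPF(F^0) \subseteq \IPF(F \retrieve R_a)$. A point in $\IPF(F\retrieve R_a)$ is dominated in every realization. Using the top-right corners, this means $q$ is dominated by a point of some region of $F \retrieve R_a$ other than $R_b$ (such a region must exist, since the preprocessing removed regions contained in $\IPF$). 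This contradicts $q$ being undominated after the retrieval. One should double-check this step, since $\IPF$ is defined via bottom-left corners; any such corner dominating $q$ gives a point dominating $q$, so the argument goes through.
\item Condition~\ref{itm:3} requires that $t_b$ does not dominate $R_a$. Suppose instead that $t_b$ dominates $R_a$. Then $t_b$ strictly dominates $t_a$, which in turn strictly dominates $q \in R_b$. This is consistent so far, so the contradiction must come from $p_a$: since $R_a$ lies in the open quadrant of $t_b$, $p_a$ is dominated by $t_b$, which is not yet a contradiction. Instead, observe that whether $t_b$ dominates $R_a$ does not depend on $q$. So we replace $q$ by $t_b$ itself: $t_b$ is undominated after the retrieval, because it dominates $q$ and any point dominating $t_b$ would also dominate $q$. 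Before the retrieval, $t_b$ must have been dominated by a point of $R_a$, but $t_b$ dominates all of $R_a$, a contradiction. Hence $t_b$ does not dominate $R_a$.
\end{itemize}

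The main obstacle is Condition~\ref{itm:2}: relating the inner Pareto front of the original family to undominatedness in the current family. The argument rests on the monotonicity of $\IPF$ under retrievals.
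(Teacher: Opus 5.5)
Your proof is correct and follows essentially the same route as the paper: you identify $R_a$ as the only region whose domination of a point of $R_b$ is lost by the retrieval, then verify the three conditions of Definition~\ref{def:independent} to contradict $R_a \in \zero$. The paper takes the witness to be the top-right corner of $R_b$ from the start, which makes Condition~\ref{itm:3} immediate; your detour in that bullet in effect shows that this corner works. In Condition~\ref{itm:2}, the dominating point comes from a region other than $R_b$ simply because a bottom-left corner that strictly dominates $q \in R_b$ cannot be $R_b$'s own corner, not because of the preprocessing.
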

\begin{proof}
    Suppose for contradiction that a region $R_b \in F$ appears on the outer Pareto front after retrieving $R_a$. As $R_b$ was not on the outer Pareto front before retrieving $R_a$, there must be a region in $F$ whose top-right corner strictly dominates the top-right corner $r$ of $R_b$. In particular, this region must be equal to $R_a$, as otherwise $R_b$ would still not appear on the outer Pareto front after retrieving $R_a$. Furthermore, $r$ must be above the inner Pareto front of $F^0$, as it would otherwise never appear on the outer Pareto front.
    However, the point $r$ then implies that $R_a$ has an outgoing dependency, contradiction $R_a \in \zero$.
\end{proof}

\subparagraph{A simple strategy.}
Our reconstruction strategy (Algorithm~\ref{alg:strategy_simple}) is a simple three-stage process.
In preprocessing, we compute $I^0$ and set $I = I^0$ and $F = F^0 - I^0$.
In the first stage, while there exists a region $R_a \in \OPF(F)$ that has an outgoing dependency \emph{in the original input} $F^0$, we retrieve $R_a$. Recall that $F$ is updated after each retrieval, while $F^0$ remains fixed.
In the second and third stage, we consider all regions $R_b \in \zero$ and retrieve them if it is not yet clear whether they will appear on the Pareto front (Stage 2), or if it is not yet clear \emph{where} they will appear on the Pareto front (Stage 3).  
We first prove correctness:

\begin{algorithm}[tb]
    \caption{Our simple instance-optimal reconstruction strategy.}
    \label{alg:strategy_simple}
    \begin{algorithmic}[1]
        \STATE Preprocessing: $F = F^0 - I^0$ and $\zero = I^0$
        \WHILE[Stage 1]{there exists a non-point region $R_a \in \OPF(F)$}
            \STATE Retrieve $R_a$.
        \ENDWHILE
        \FOR[Stage 2]{ all $R_a \in \zero$ }
            \IF{the interior (or the top or right facet) of $R_a$ intersects the staircase $\OPF(F)$}
                \STATE Retrieve $R_a$.
            \ENDIF
        \ENDFOR
        \FOR[Stage 3]{ all vertices $p_i \in \OPF(F)$}
            \STATE Retrieve the $R_a\in\zero$ hit by the upward ray from $p_i$, unless it hits the bottom-right.
            \STATE Retrieve the $R_b\in\zero$ hit by the rightward ray from $p_i$, unless it hits the top-left.
        \ENDFOR
    \end{algorithmic}
\end{algorithm}

\begin{lemma}\label{lem:Li_done}
    Let $F$ be the family of regions after the while loop of Algorithm~\ref{alg:strategy_simple}, then all regions in $F$ are either retrieved or dominated in $F$.
\end{lemma}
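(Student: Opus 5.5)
The plan is to argue directly from the termination condition of the while loop. When Stage 1 stops, every region in $\OPF(F)$ is a point, i.e., every region still on the outer Pareto front has already been retrieved. Here $F$ is the current family with $F = F^0 - I^0$ after retrievals. I will show that every non-point region $R_b \in F$ that is not on $\OPF(F)$ is dominated in $F$; for rectangles this means $R_b$ lies entirely in the open bottom-left quadrant of some retrieved point $p_a$. Since $R_b$ is not on $\OPF(F)$, its top-right corner $r_b$ is dominated by the top-right corner of some other region of $F$.

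First I take a maximal such region. Among regions of $F$ whose top-right corner dominates $r_b$, choose one whose top-right corner is on the Pareto front of the top-right corners of $F$; this corresponds to an element $R_a \in \OPF(F)$. By the loop condition, $R_a$ is a point $p_a$. So $p_a$ dominates $r_b$, and because $R_b$ is an axis-aligned rectangle with top-right corner $r_b$, $p_a$ dominates all of $R_b$. Therefore $p_b \notin \PF(P)$ for every $P \sim F$, so $R_b$ is dominated in $F$.

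The main obstacle is non-strict domination and ties. If the top-right corners are not in general position, $r_b$ may only be weakly dominated, for example by sharing an $x$-coordinate with another corner. Then $R_b$ may fail to be on $\OPF(F)$ under the ``not dominated by any point $q \in R_c$'' definition, while no single point strictly dominates all of $R_b$. I would handle this by using the definition of $\OPF$ literally. $R_b \notin \OPF(F)$ means that every $p \in R_b$, in particular $r_b$, is dominated by some point in some other region $R_c$. This gives open domination of $r_b$ by some $q \in R_c$, hence by the top-right corner of $R_c$. Then I iterate to a maximal region as above and use that dominance is transitive and open, so $p_a$ strictly dominates $r_b$.

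A second subtlety is that $F$ excludes $I^0$. The lemma only asks about regions of $F$ itself, so domination within $F$ suffices, and independent regions are treated in Stages 2 and 3. I would also remark that regions which never had an outgoing dependency in $F^0$ were never in $F$, so the loop condition covers every non-point region of $F$.
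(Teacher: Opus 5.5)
Your argument is correct and is essentially the paper's: a non-retrieved region not on $\OPF(F)$ has its top-right corner strictly dominated by the top-right corner of some region on $\OPF(F)$, the loop condition forces that region to be a retrieved point, and that point then dominates the whole rectangle. The paper also treats the case where this region on $\OPF(F)$ is not a point (via outgoing dependencies and $\IPF(F^0)$), but, as you note, that case is vacuous because $F = F^0 - I^0$ and the loop retrieves every non-point region on $\OPF(F)$; likewise your worry about ties is unfounded, since, as your own resolution shows, $R_b \notin \OPF(F)$ already forces strict domination of $r_b$ by some top-right corner.
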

\begin{proof}
    Suppose for contradiction that there is a region $R_a \in F$ (then $R_a \not \in \zero$) that has not been retrieved and is not dominated. Then $R_a$ is not on $\OPF(F)$, as otherwise it would have been retrieved. Thus there is a region $R_b \in \OPF(F)$ such that $R_a$ is dominated by the top-right corner of $R_b$. If $R_b$ is a point region, i.e. $R_b = p_b$, then $R_a$ is dominated, a contradiction. Otherwise,  either there is an outgoing dependency from $R_b$ to $R_a$, contradicting that the while loop of Algorithm~\ref{alg:strategy_simple} has terminated, or any point $q \in R_a$ dominated by the top-right corner of $R_b$ must be below the inner Pareto front of $F^0$. As the top-right corner of $R_b$ dominates $R_a$, it follows that $R_a$ is in the inner Pareto front and thus dominated.
\end{proof}

\begin{lemma}\label{lem:finished}
    The family of regions $F^*$ at the termination of Algorithm~\ref{alg:strategy_simple} is finished.
\end{lemma}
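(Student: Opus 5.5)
The proof needs two things: for every realization, we must know \emph{which} indices lie on $\PF(P)$, and we must know \emph{how} they are ordered. We take the family $F^*$ at termination and write $S$ for the staircase $\OPF(F)$ computed after Stage~1.

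First we fix the structure after Stage~1. By Lemma~\ref{lem:Li_done}, every region of $F^0 - I^0$ is either a point or dominated. The remaining point regions that are not dominated form a fixed staircase: its vertices are exactly the retrieved points of $\OPF(F)$. So $S$ is determined, and the regions in $F - I^0$ contribute a fixed set of indices, in a fixed order, to the partial order.

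The point regions of $S$ can still be dominated by independent regions. To make this precise, we use the following consequence of Lemma~\ref{lem:zero_independent}: retrieving regions of $\zero$ never brings a dependent non-point region onto the outer Pareto front. Hence, during Stages~2 and~3, no new non-point region outside $\zero$ becomes relevant. The only remaining uncertainty concerns the regions of $\zero$ and how they interact with the staircase $S$.

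Next we treat membership of each $R_a \in \zero$. If $R_a$ was retrieved in Stage~2 or~3, it is a point and there is nothing to decide. Otherwise, by Stage~2, neither the interior nor the top or right facet of $R_a$ meets $S$. Since $R_a$ is connected, it lies entirely on one side of $S$, with only its bottom or left boundary possibly touching $S$.
\begin{itemize}
\item If $R_a$ lies below $S$, then every point of $R_a$ is weakly dominated by a vertex of $S$. I would argue it is in fact strictly dominated. The case of touching on the bottom or left facet needs a careful boundary check, since domination is by the open quadrant.
\item If $R_a$ lies above $S$, then $p_a$ is never dominated by a vertex of $S$.
\end{itemize}
I would then show that it is also not dominated by any other independent region. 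The idea is that two such independent regions cannot dominate one another. Otherwise the top-right corner of one would dominate a point of the other lying above $\IPF(F^0)$, which would create an outgoing dependency and contradict independence. This needs conditions I and III of Definition~\ref{def:independent} to be checked.

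Finally we fix the order. Every unretrieved independent region above $S$ lies in a "pocket" between two consecutive vertices $p_i, p_{i+1}$ of $S$. Stage~3 retrieves the regions hit by the upward and rightward rays from each $p_i$, unless they are hit at a corner. This forces every surviving region to sit strictly between the relevant vertices in both coordinates, so its relative order to all points of $S$ is the same for every realization.

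It remains to order two unretrieved independent regions relative to each other. I would argue that they are separated in $x$ and in $y$. If they overlapped in $x$-range while both lay above $S$, then one region's top-right corner would witness a dependency as above.

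The main obstacle is this last pair of steps: ruling out ambiguity between two independent regions, including overlap and boundary ties, using only Definition~\ref{def:independent}. I would first attempt it by taking the region with the larger top-right corner and showing that conditions I–III hold with a point $q$ chosen in the other region near its top-right.
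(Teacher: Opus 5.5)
Your overall route is the same as the paper's. You use Lemma~\ref{lem:Li_done} for the dependent regions, Stage~2 to split surviving independent regions into those strictly below and those above the staircase $S$, the Stage~3 rays for how independent regions interact with the vertices of $S$, and the absence of dependencies to order two independent regions.

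The gap is whether the vertices of $S$ themselves lie on $\PF(P)$. Your claim that they contribute a fixed set of indices does not follow from $S$ being determined; as you note yourself, independent regions may dominate them. You never come back to prove that each vertex is always or dominated. Your pocket argument would supply this implicitly, but its premise is false: a surviving independent region can lie entirely in the open top-right quadrant of a vertex. Take $F^0=\{R_i,R\}$ with $R_i=[0,10]^2$ and $R=[4,5]^2$. Then $R$ is independent (condition~I fails), $R_i$ is dependent, and if $p_i=(0,0)$, neither Stage~2 nor Stage~3 touches $R$. So $R$ lies in no pocket of $S=\{p_i\}$.

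What is missing is the paper's first step. Suppose a surviving $R_b\in I$ has its top-right corner dominating a vertex $p_i$ but its bottom-left corner does not. Then $R_b$ meets the vertical line through $p_i$ above $p_i$, or the horizontal line through $p_i$ to its right. So the upward or rightward ray from $p_i$ hits $R_b$ away from the excluded corner, and Stage~3 retrieves it. Hence every vertex falls into one of two cases:
\begin{itemize}
\item it is dominated by all of some surviving region, so it lies in $\IPF(F^0)$ and is dominated in every realization; or
\item it is dominated by no point of any surviving region.
\end{itemize}
Only vertices of the second kind should enter your pocket argument.

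Two smaller points. First, Algorithm~\ref{alg:strategy_simple} does not skip every corner hit: it skips only the bottom-right corner for the upward ray and the top-left corner for the rightward ray. Under your paraphrase, a region whose left facet lies on the vertical line through $p_i$, above $p_i$, would survive while partially dominating $p_i$. Second, Stage~3 retrieves just the region hit by each ray. You therefore need the paper's remark that at most one surviving independent region per direction can partially dominate $p_i$.
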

\begin{proof}
    Let $F = F^* - \zero$. By Lemma~\ref{lem:Li_done}, $\OPF(F)$ consists of only point regions. Let $\PF^*$ denote the Pareto front of these points. We first show that every region in $F^*$ is either always or dominated, and then prove that the order along the Pareto front is fixed.

    First, consider a (retrieved) region $R_a = p_a$ on $\OPF(F)$. Assume for contradiction that $p_a$ is neither always nor dominated. Then there exists a $R_b \in \zero$ such that the top-right corner of $R_b$ dominates $p_a$ and the bottom-left corner of $R_b$ does not dominate $p_a$. This implies that the upward or rightward ray from $p_a$  hits $R_b$, and thus $R_b$ is retrieved by Algorithm~\ref{alg:strategy_simple}. There can be at most one such region in each direction, as otherwise at one of these regions would have an outgoing dependency. Thus, all regions in $F$ are either always or dominated. 

    Next, consider a region $R_c \in \zero$. When the algorithm terminates, any region in $\zero$ that intersects the staircase of $\PF^*$ in its interior, or top or right facet, has been retrieved. Thus, $R_c$ is either above or strictly below the staircase of $\PF^*$. If $R_c$ is strictly below $\PF^*$, then it is dominated. If $R_c$ is above $\PF^*$, then the point $p_c$ could be dominated only by a point in a region in $\zero$. Lemma~\ref{lem:zero_independent} thus implies that $R_c$ is always (we assumed that none of the regions in $F^0$ are dominated in $F^0$). We conclude that all regions in $F^*$ are always or dominated.

    What remains is to prove that the order of the always regions in $F^*$ along the Pareto front is fixed, i.e. for any $P_1,P_2 \sim F^*$, we have that $\precorder(\PF(P_1))=\precorder(\PF(P_2))$. Suppose there are $P_1,P_2 \sim F^*$, such that $\precorder(\PF(P_1))\neq \precorder(\PF(P_2))$. Then there are two regions $R_a$ and $R_b$ such that $a \prec b$ for $P_1$ but $a \nprec b$ for $P_2$. Note that at least one of $R_a$ and $R_b$ must be in $\zero$, as otherwise Lemma~\ref{lem:Li_done} implies that both $R_a$ and $R_b$ have been retrieved in $F^*$. Without loss of generality, assume that $R_b \in \zero$. If $R_a \in \zero$ also then the fact that there are no outgoing dependency between the two regions and both of them begin always, implies that for any $s \in R_a$ and $t \in R_b$ we have $s \prec t$, as $a \prec b$ for $P_1$. Contradicting that $a \nprec b$ for $P_2$. If $R_a \notin \zero$, then $R_a$ is retrieved. In this case, $a \prec b$ for $P_1$ but $a \nprec b$ for $P_2$ implies that $p_a \in R_b$, or the top-right corner of $R_b$ dominates $p_a$ and the bottom-left corner of $R_b$ does not dominate $p_a$. As before, we conclude that $R_b$ is retrieved, a contradiction.
\end{proof}

\subsection{Instance-optimality}\label{sec:instance_optimal_rectangles}
In this section we prove that Algorithm~\ref{alg:strategy_simple} is an instance-optimal retrieval strategy, i.e. the strategy retrieves $\Theta(r(F^0,P))$ regions.
To this end, we call a region $R \in F$ \emph{semi-independent} if $R$ has not been retrieved and $R$ had no outgoing dependencies in $F \cup \zero$ (see Definition~\ref{def:independent}). 

\begin{lemma}\label{lem:new_semi-independent}
    Retrieving a region $R_a \in (F \cup \zero)$ causes at most $4$ regions to become semi-independent.
\end{lemma}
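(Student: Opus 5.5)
Throughout, "outgoing dependency in $F \cup \zero$" is read as in Definition~\ref{def:independent}, with $F \cup \zero$ in place of $F^0$ and the inner Pareto front still that of $F^0$. A region $R_c$ can become semi-independent only if every outgoing dependency it had (within $F \cup \zero$) ended at $R_a$. So the plan is to show there are at most $4$ unretrieved regions $R_c \neq R_a$ whose \emph{only} outgoing dependency goes to $R_a$, and whose dependency on $R_a$ disappears once $R_a$ is replaced by $p_a$.

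First I would note what happens when $R_a$ becomes $p_a$. Any dependency $R_c \to R_a$ that survives is witnessed by the single point $q = p_a$. So only regions that lose their dependency on $R_a$ matter. For each such $R_c$, take a witness point $q \in R_a$ as in Definition~\ref{def:independent}. Condition~\ref{itm:3} says the top-right corner $t_a$ of $R_a$ does not dominate $R_c$. Condition~\ref{itm:1} says $q \in R_c$, or $t_c$ dominates $q$. Condition~\ref{itm:2} says $q$ lies above $\IPF(F^0)$.

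The core step is a packing argument. Suppose $R_c$ and $R_d$ both depend only on $R_a$. Then neither may have an outgoing dependency to the other. I would classify each such $R_c$ by how it sits relative to $R_a$. Because $t_a$ does not dominate $R_c$, the region $R_c$ must reach at or above the top edge of $R_a$, or at or to the right of the right edge of $R_a$. Because $R_c$ relates to some point of $R_a$ via $q$, it must overlap $R_a$ or lie to its top-right. This splits the candidates into four types: those exceeding $R_a$ upward only, rightward only, in both directions, and (degenerately) those touching the relevant boundary.

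Within one type, I would show that two candidates $R_c, R_d$ create a dependency between each other. Say both stick out above $R_a$ and both reach a common point of $R_a$ above $\IPF(F^0)$. Let $R_c$ be the one whose top-right corner is lower or more to the left. Then either some point of $R_d$ near the shared part of $R_a$ lies in $R_c$ or is dominated by $t_c$, and $t_d$ does not dominate $R_c$ because of the ordering. That is a dependency $R_c \to R_d$, so $R_c$ stays dependent. The witness point is chosen above $\IPF(F^0)$, using that both regions contain or dominate points near $q$, which is above the inner front. Hence each type contributes at most one region, giving the bound $4$.

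The main obstacle is exactly this pairwise step. One must choose the witness point inside $R_d$ so that Condition~\ref{itm:2} holds, and handle ties and closed boundaries. If the four-way classification does not close cleanly, I would fall back to an extremal choice: among candidates, keep those minimal under a suitable lexicographic order on top-right corners in each of the quadrant directions, and argue all others depend on one of these.
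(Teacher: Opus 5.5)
\textbf{There is a genuine gap, in the counting step.} Under your frozen-front reading, reducing to the regions whose only outgoing dependency goes to $R_a$ is fine. The trouble is that your four types are not classes in which at most one candidate can survive. The pairwise claim behind them is false: two candidates of the same type that share a witness point of $R_a$ above $\IPF(F^0)$ need not have a dependency between them.

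A counterexample: take $R_a=[0,10]\times[0,10]$, $R_c=[3,4]\times[9,11]$, $R_d=[1,2]\times[12,13]$, with top-right corners $t_c=(4,11)$ and $t_d=(2,13)$.
\begin{itemize}
\item The inner Pareto front lies below the staircase through $(1,12)$ and $(3,9)$, and no region lies inside it.
\item Both $R_c$ and $R_d$ depend on $R_a$, with witnesses $(4,10)$ and $(1.5,10)$. The point $(1.5,10)$ is dominated by both $t_c$ and $t_d$, so they even share a witness.
\item Both exceed $R_a$ upward only.
\item Neither depends on the other. Every point of $R_d$ has $y\ge 12$, so none lies in $R_c$ or is dominated by $t_c$. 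Every point of $R_c$ has $x\ge 3$, so none lies in $R_d$ or is dominated by $t_d$.
\end{itemize}
Retrieving $R_a$ as $p_a=(9,1)$ makes both semi-independent, so your one-per-type count fails. A shared witness in $R_a$ says nothing about how the two regions sit relative to each other: one can be high and to the left, the other low and to the right, with no interaction.

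\textbf{The missing idea is the right classification.} The paper splits the candidates into four classes: regions crossing the right facet of $R_a$, regions lying fully to its right, regions crossing its top facet, and regions lying fully above it. In the example, $R_c$ crosses the top facet while $R_d$ lies fully above. Within one class the two regions share a coordinate range, and that is what makes the pairwise argument work.
\begin{itemize}
\item Order the two regions by top side (or by right side).
\item If the dominant one has no dependency on the other, then the part of the other that it overlaps or dominates lies in $\IPF(F^0)$.
\item Because both cross the same facet, every witness of the other inside $R_a$ is weakly dominated by a point of that part. So that witness is also in $\IPF(F^0)$, a contradiction.
\item For the two classes of regions lying fully to the side, one also uses the bottom-left corner of the dominant region to push the remaining witnesses into $\IPF(F^0)$.
\end{itemize}

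\textbf{A secondary point on your reading.} Freezing the inner Pareto front at $\IPF(F^0)$ makes the claim that only dependencies into $R_a$ can be lost automatic. The paper's proof instead works with the current inner Pareto front. It treats separately the case where $p_a$ lands above that front, since then condition~II can fail for targets other than $R_a$. It is this current-front notion that Case~3 of the instance-optimality theorem relies on.
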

\begin{proof}
    Consider a region $R_b \in F - R_a$ that becomes a semi-independent region in $F \retrieve R_a$. Then $R_b$ not being semi-independent before implies that there is a region $R_c$, $c \neq b$, that $R_b$ had an outgoing dependency to, and after retrieving $R_a$ one of the conditions~\ref{itm:3}, \ref{itm:2}, or~\ref{itm:1} fails. We consider two cases: either the retrieved point $p_a$ is in the inner Pareto front of $F$ or not.

    \textsf{\textbf{Case 1: \boldmath$p_a$ in $\IPF(F)$.} }
    As $p_a$ is in $\IPF(F)$, retrieving $R_a$ does not change $\IPF(F)$. Considering conditions~\ref{itm:3}, \ref{itm:2}, or~\ref{itm:1}, it must be that $c = a$ and specifically that there is an outgoing dependency from $R_b$ to $R_a$ in $F$ and this is the only outgoing dependency for $R_b$. We will show that there are at most four regions that become semi-independent by the retrieval of $R_a$ because the outgoing dependency to $R_a$ is removed.
    
    First, note that any region with an outgoing dependency to $R_a$ is by definition not  dominated by the top-right corner of $R_a$, and thus either intersects the top or right boundary of $R_a$, or is fully above or right of $R_a$. Next, we show that there is at most one region that intersects the right boundary of $R_a$ and at most one region fully right of $R_a$ that become semi-independent by retrieving $R_a$. Figure~\ref{fig:columns_empty}(c) gives an example where both of these regions exist. By symmetry, there is also at most one region that intersects the top boundary of $R_a$ and at most one region fully above $R_a$ that become semi-independent by retrieving $R_a$.

    \begin{figure}
    \centering
    \includegraphics[page=2]{figures/new_semi-independent.pdf}
    \caption{If the shaded red area lies in the inner Pareto front, then there is no outgoing dependency from $R_d$ to $R_a$. In (c) retrieving $R_a$ will make both $R_b$ and $R_d$ semi-independent.}
    \label{fig:columns_empty}
\end{figure}

    Suppose there are two such regions $R_b$ and $R_d$ that intersect the right boundary of $R_a$, see Figure~\ref{fig:columns_empty}(a). Without loss of generality, assume that the top of $R_b$ is at least as high as the top of $R_d$. Then $R_b$ being semi-independent after retrieving $R_a$ implies that there is no outgoing dependency from $R_b$ to $R_d$. Thus, all points in $R_d$ that are dominated by the top-right corner of $R_b$, or intersect $R_b$, (the shaded area in Figure~\ref{fig:columns_empty}(a)) must be inside the (original) inner Pareto front $\IPF(F^0)$. However, that means that all points in $R_a$ that are dominated by the top-right corner of $R_d$ must also lie inside the original inner Pareto front, contradicting that $R_d$ has an outgoing dependency to $R_a$.

    Next, suppose there are two such regions $R_b$ and $R_d$ that are fully right of $R_a$, see Figure~\ref{fig:columns_empty}(b). Again, assume without loss of generality that the top of $R_b$ is at least as high as the top of $R_d$. By definition of the inner Pareto front, no point dominated by the bottom-left corner of $R_b$ is above the inner Pareto front. Thus, if $R_d$ is fully below $R_b$, then there is no point in $R_a$ that is above the inner Pareto front that is dominated by the top-right corner of $R_d$, contradiction that $R_d$ has an outgoing dependency to $R_a$. 
    If $R_d$ is not fully below $R_b$, and the top-right corner of $R_b$ dominates $R_d$, then there being no outgoing dependency from $R_b$ to $R_d$ implies that $R_b$ is inside the original inner Pareto front. This contradicts our assumption that $F^0$ contains no dominated regions. If $R_d$ is not fully below $R_b$, and the top-right corner of $R_b$ does not dominate $R_d$, then all points in $R_b$ dominated by the top-right corner of $R_d$ must be inside the original inner Pareto front, see Figure~\ref{fig:columns_empty}(b). However, then all points in $R_a$ that are dominated the top-right corner of $R_d$ must also be inside the inner Pareto front, contradicting that $R_d$ has an outgoing dependency to $R_a$.

    \textsf{\textbf{Case 2: \boldmath$p_a$ above $\IPF(F)$.}} Because $p_a$ is above $\IPF(F)$, a region $R_b$ can only lose its outgoing dependency to another region $R_c$ with $c \neq a$. In particular, it must be that condition~\ref{itm:2} fails for $R_c$ after retrieving $R_a$. This means that any point $q \in R_c$ that is dominated by the top-right corner of $R_b$ and is above the inner Pareto front of $F$, must be dominated by $p_a$. The proof of case 1 actually implies that there are at most two regions that have their top-right corner right or above of $p_a$, and have outgoing dependencies only to points that are dominated by $p_a$. As in case 1, it thus follows there are at most four regions that become semi-independent by the retrieval of $R_a$.
\end{proof}

\begin{lemma}\label{lem:not_finished_before_independent}
    For any $R_a \in \zero$ that is retrieved by Algorithm~\ref{alg:strategy_simple}, any algorithm that does \emph{not} retrieve $R_a$ is not finished.
\end{lemma}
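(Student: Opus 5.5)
The plan is to show that each retrieval of some $R_a \in \zero$ in Stage 2 or Stage 3 is forced. I would do this by exhibiting, for any family $F' = F^0 \retrieve B$ with $R_a \notin B$, two realizations $P_1, P_2 \sim F'$ with $\precorder(\PF(P_1)) \neq \precorder(\PF(P_2))$. Both realizations will agree with the true $P$ on every region other than $R_a$; only $p_a$ is moved inside $R_a$.

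\textbf{Step 1: the comparison front.} I would fix the true realization $P$ and look at the staircase $S$ of $\PF(P \setminus \{p_a\})$. I would then compare $S$ with the staircase $\OPF(F)$ at the moment $R_a$ is handled. After Stage 1, $\OPF(F)$ consists only of retrieved points of $F^0 - I^0$ (Lemma~\ref{lem:Li_done}). These are true points of $P$ and in particular not dominated by anything in $F^0 - I^0$. By Lemma~\ref{lem:zero_independent} and the definition of $\zero$, no other region of $\zero$ dominates the relevant portion of these points inside the area near $R_a$. Otherwise that region would have an outgoing dependency into the part of $R_a$ above the inner Pareto front, since $R_a$ intersects the staircase above $\IPF(F^0)$. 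Hence, locally around $R_a$, the staircase of $P \setminus \{p_a\}$ coincides with $\OPF(F)$.

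\textbf{Step 2: Stage 2 retrievals.} Suppose $R_a$ is retrieved because its interior, or its top or right facet, meets the staircase $\OPF(F)$. Then I would choose $q_1 \in R_a$ strictly above the staircase and $q_2 \in R_a$ strictly below it; if only a top or right facet touches it, I would take $q_1$ on that facet, exceeding a staircase vertex in one coordinate. Setting $p_a = q_1$ puts $a$ on the front, while setting $p_a = q_2$ makes $p_a$ dominated by some point of $P \setminus \{p_a\}$. So the partial orders differ, and any $F'$ with $R_a$ unretrieved is not finished.

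\textbf{Step 3: Stage 3 retrievals.} Suppose $R_a$ is hit by the upward ray from a vertex $p_i$, and the ray does not hit its bottom-right corner. Then I would choose one point of $R_a$ strictly left of $p_i$'s $x$-coordinate and one with $x$-coordinate at least that of $p_i$, or one equal and one strictly right. These two choices change the relation between $a$ and $i$: either $a \prec i$ versus not, or $p_i$ dominated versus not. The rightward ray is handled symmetrically. The exclusion of the corner case is exactly what guarantees that such two choices exist.

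\textbf{Main obstacle.} The hardest step is Step 1: justifying that other regions of $\zero$, which are still unretrieved in $F'$, cannot ``shield'' $R_a$. In other words, I must show that the witnessing points $q_1, q_2$ still produce different orders when the remaining regions are realized by $P$. The approach I would try first is to use the independence of the regions in $\zero$ pairwise. If some $R_c \in \zero$ dominated $q_1$ or changed the order near $q_1$, then condition~\ref{itm:1} together with~\ref{itm:2} and~\ref{itm:3} would give an outgoing dependency from $R_c$ to $R_a$, or from $R_a$ to $R_c$, contradicting that both are in $I^0$.
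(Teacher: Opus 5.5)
\textbf{Your route matches the paper's, but Step~3 has a gap, and the statement itself fails in the Stage~3 case.} Like the paper, you keep every other region at its true point, exhibit two placements of $p_a$ in $R_a$ with different partial orders, and conclude for every algorithm that leaves $R_a$ unretrieved.

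For Stage~2 this goes through if you take $q_1$ to be the top-right corner of $R_a$. That corner lies above $\IPF(F^0)$, because $R_a\not\subseteq\IPF(F^0)$ and $\IPF(F^0)$ is down-closed. It is also undominated by the retrieved points, because some point of $R_a$ is. Your pairwise-independence argument then shows that no other point of $\zero$ dominates it. You also need some point of $R_a$ strictly below the staircase, and this can fail when only a facet touches it.

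In Stage~3 the situation is different. The witness is the status of $p_i$, a point outside $R_a$. Your Step~1 and ``main obstacle'' arguments only control shielding of points of $R_a$, and they use that $R_a$ meets the staircase, which is false in this case. The true point of another independent region can dominate $p_i$, for instance the region hit by the rightward ray from the same $p_i$. Independence does not forbid this. Both regions may lie in the open bottom-left quadrant of the top-right corner of $R_i$, which makes condition~\ref{itm:3} fail for any dependency toward $R_i$, and they need not interact with each other at all.

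Here is a concrete instance. Let $R_i=[-\tfrac12,3]^2$, $R_a=[-1,1]\times[1,2]$ and $R_c=[\tfrac32,2]\times[-1,\tfrac12]$, with true points $p_i=(0,0)$ and $p_c=(2,\tfrac12)$.
\begin{itemize}
\item We have $\zero=\{R_a,R_c\}$: the top-right corner $(3,3)$ of $R_i$ dominates both, they are disjoint, and neither top-right corner dominates a point of the other. Meanwhile $R_i$ depends on $R_a$ via $(0,\tfrac32)$.
\item Stage~1 retrieves $R_i$ and Stage~2 retrieves nothing.
\item Stage~3 retrieves $R_a$, since the upward ray hits $(0,1)$, and $R_c$, since the rightward ray hits $(\tfrac32,0)$.
\end{itemize}
Yet retrieving only $R_i$ and $R_c$ already gives a finished family. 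For every $p_a\in R_a$, $p_i$ is dominated by $p_c$, and the front consists of $p_a,p_c$ with $a\prec c$.

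So the statement is false as written, and no refinement of your Step~3 can repair it. The paper's own proof has the same hole: it claims that placing $p_a$ at the ray's hit point keeps $p_i$ on the front for arbitrary $P'$.

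What survives is the Stage~2 part. Stage~3 retrievals should instead be counted: there are at most two per vertex of $\OPF(F)$, hence at most twice the number of Stage~1 retrievals. That bound is enough for the instance-optimality theorem.
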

\begin{proof}
    We first consider Algorithm~\ref{alg:strategy_simple}.
    Let $F$ be the family of regions (excluding $\zero$) when $R_a$ is retrieved and let $P' \sim (F \cup \zero) - R_a$ be arbitrary. Note that no point in $R_a$ is dominated by any point in another region in $\zero$, as that would imply that $R_a$ was dominated in $F^0$.  We consider why $R_a$ was retrieved. 
    
    First, assume $R_a$ is retrieved because the interior or the top or bottom facet of $R_a$ intersects $\OPF(F)$.
    Let $p \in R_a$  be a point on the staircase of $ \OPF(F)$ and let $q \in R_a$ be a point below the staircase of $\OPF(F)$.
    Then setting $p_a = p$ results in $p_a$ being on $\PF(P' \cup \{p_a\})$, whereas setting $p_a = q$ results in $p_a$ not being on $\PF(P' \cup \{p_a\})$.

    Second, assume that $R_a$ is retrieved because the upwards ray from a vertex $p_i \in \OPF(F)$ hits $R_a$ not in its bottom-right corner. The case for being hit by a rightward ray is symmetric. Let $p$ be the the point where this ray hits $R_a$, let $q$ be the top-right corner of $R_a$. 
    Setting $p_a = p$ will result in both $p_a$ and $p_i$ being on $\PF(P' \cup \{p_a\})$. Setting $p_a = q$ instead, will result in $p_i \notin \PF(P' \cup \{p_a\})$.

    In both cases, we constructed two point sets $P' \cup \{q\}$ and $P' \cup \{p\}$ that only differ in the point $p_a$, but have different Pareto fronts. Thus, the algorithm $\mathcal{A}'$ that retrieves all  $R \in (F^0 - R_a)$ is not finished. 
    Any algorithm that does not retrieve $R_a$ retrieves a subset of the regions retrieved by $\mathcal{A}'$ and is therefore also not finished.  
\end{proof}

\begin{theorem}
    Algorithm~\ref{alg:strategy_simple} is an instance-optimal retrieval strategy.
\end{theorem}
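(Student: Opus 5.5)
We need to show that the number of retrievals made by Algorithm~\ref{alg:strategy_simple} is $O(r(F^0,P))$. Correctness is Lemma~\ref{lem:finished}, so only the counting remains. We split the retrievals into two groups: Stage~1 retrievals of regions in $F^0 - I^0$, and Stage~2/3 retrievals of regions in $\zero$.

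\textbf{Stages 2 and 3.} Lemma~\ref{lem:not_finished_before_independent} says that every region of $\zero$ retrieved in these stages must be retrieved by every algorithm that finishes on $(F^0,P)$. So these retrievals number at most $r(F^0,P)$.

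\textbf{Stage 1: charging scheme.} Fix an optimal algorithm $\mathcal{A}^*$, and let $B^*$ be its retrieved set, so $|B^*| = r(F^0,P)$ and $F^0\retrieve B^*$ is finished. We show that every Stage~1 retrieval of a region $R_a\notin B^*$ can be charged to an earlier retrieval, with each retrieval charged $O(1)$ times.

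\textbf{Stage 1: why $R_a$ becomes semi-independent.} Suppose Stage~1 retrieves such an $R_a$ with current family $F$. Then $R_a\in\OPF(F)$ and $R_a$ is not semi-independent, because the while loop only retrieves regions with an outgoing dependency. We claim that when $\mathcal{A}^*$ skips $R_a$, some region $R_c$ with $R_a \to R_c$ (a dependency surviving in $F\cup\zero$) must lie in $B^*$. Otherwise we can build two realizations consistent with $F^0\retrieve B^*$ that disagree on the front, exactly as in Lemma~\ref{lem:not_finished_before_independent}:
\begin{itemize}
\item place $p_a$ at its top-right corner, and put $q\in R_c$, the point witnessing the dependency;
\item conditions~\ref{itm:3}--\ref{itm:1} ensure that $q$ is above the inner Pareto front and that $p_a$ can be moved so that $q$ either is or is not dominated (or so that their order changes).
\end{itemize}
This contradicts finishedness. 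The key point is that retrieving $R_a$ leaves $R_c$ unretrieved only until the algorithm handles it. Hence, following Lemma~\ref{lem:new_semi-independent}, we may argue instead that $R_a$ is semi-independent in $F^0\retrieve B^*$ but not in the algorithm's current family.

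\textbf{Stage 1: the counting.} Semi-independence is lost only through retrievals, and by Lemma~\ref{lem:new_semi-independent} each retrieval creates at most $4$ new semi-independent regions. So the number of regions $R_a\notin B^*$ retrieved in Stage~1 is bounded by $4$ times the number of retrievals in $B^*$ plus the algorithm's own retrievals. We organize this as an amortized argument. The potential is the number of regions that are semi-independent with respect to $F^0\retrieve B^*$ but not yet with respect to the current family. It starts at $\le 4|B^*|$, by applying Lemma~\ref{lem:new_semi-independent} along $\mathcal{A}^*$'s retrievals. Each Stage~1 retrieval outside $B^*$ decreases it by one, and we bound the increases separately. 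Combining both stages gives $O(r(F^0,P))$ retrievals.

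\textbf{Main obstacle.} The hardest step is the Stage~1 claim: every $R_a\in\OPF(F)$ with a surviving outgoing dependency that is not retrieved by $\mathcal{A}^*$ forces a distinct charge, with the charging injective up to the factor $4$ from Lemma~\ref{lem:new_semi-independent}. I would first try to prove that in $F^0\retrieve B^*$ every unretrieved region outside $\zero$ must be semi-independent, by the two-realization argument. Then I would charge each Stage~1 retrieval of such a region to the retrieval in $B^*$ that made it semi-independent.
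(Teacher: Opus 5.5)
There is a genuine gap in Stage~1. Your handling of Stages~2 and~3 via Lemma~\ref{lem:not_finished_before_independent} is fine; it is exactly Case~2 of the paper's proof. The Stage~1 counting, however, is never actually carried out, and several steps are wrong as written.

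\begin{enumerate}
\item Stage~1 does not only retrieve regions that are non-semi-independent in the current family. It retrieves every non-point region of $\OPF(F)$ outside $I^0$, i.e.\ regions with an outgoing dependency in the \emph{original} $F^0$. That dependency may no longer hold in the current family.
\item Your two-realization argument uses a dependency ``surviving in $F\cup\zero$'', which is the algorithm's family. To contradict finishedness of $F^0\retrieve B^*$, the witness $q$ must lie above $\IPF(F^0\retrieve B^*)$. So the dependency has to exist in \opt's final family, not yours.
\item The potential argument does not hold together. Lemma~\ref{lem:new_semi-independent} bounds how many regions \emph{become} semi-independent, not how many lose it. Nothing shows that a Stage~1 retrieval outside $B^*$ decreases your potential, and the increases are never bounded. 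The bound ``$4|B^*|$ plus the algorithm's own retrievals'' is circular.
\item Knowing that some dependency target $R_c$ of $R_a$ lies in $B^*$ gives no $O(1)$ bound per charged region, since many regions may depend on the same $R_c$.
\end{enumerate}

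The missing idea is to make the charging depend on \opt's run alone. Let $T$ be the set of regions that become semi-independent while \opt executes. Lemma~\ref{lem:new_semi-independent} then gives $|T|\le 4|B^*|$ with no reference to the algorithm. Next, show that every region $R_a$ the algorithm retrieves lies in $B^*\cup T$ by ruling out three alternatives:
\begin{itemize}
\item $R_a\in\zero$: excluded by Lemma~\ref{lem:not_finished_before_independent}.
\item $R_a$ not dominated but with an outgoing dependency in $F^0\retrieve B^*$: then $R_a$ is always, hence on the outer front. The witness $q$ in the target region, together with two placements of $p_a$, gives two different fronts, contradicting finishedness.
\item $R_a$ dominated in $F^0\retrieve B^*$.
\end{itemize}

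The last case needs its own argument, and your plan has none. A single retrieved point can dominate arbitrarily many regions, so these regions cannot be charged through Lemma~\ref{lem:new_semi-independent}. The paper instead shows they are never retrieved. The region $R_b$ that dominates $R_a$ in \opt's family has an outgoing dependency to $R_a$ already in $F^0$, witnessed by the top-right corner of $R_a$, which lies above $\IPF(F^0)$. So $R_b$ is handled by Stage~1 first, and afterwards $R_a$ is dominated and never reaches $\OPF(F)$. Together these cases give at most $5\,r(F^0,P)$ retrievals.

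Your closing paragraph points toward this plan. Without fixing the family in which semi-independence is measured, and without the separate domination case, your charge is not bounded.
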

\begin{proof}
    Let $(F^0, P)$ be the input.
    Fix an algorithm $\opt$ that retrieves $r(F^0, P)$ regions.
    Let $S$ denote the family of regions retrieved by $\opt$, and let $F^\opt$ be the family of regions after retrieving all of $S$, i.e. $F^\opt := F^0 \retrieve S$.
    Furthermore, let $T$ be the family of regions that become semi-independent during the execution of $\opt$.
    Lemma~\ref{lem:new_semi-independent} implies that $|T| \leq 4|S|$.
    Any region $R_a$ retrieved by our algorithm is either in $S \cup T$, or in at least one the of following cases: (1)  $R_a$ is dominated in $F^\opt$, (2) $R_a \in \zero$, or
        (3) $R_a$ has an outgoing dependency in $F^\opt$.
    Next, we show that each of these cases leads to a contradiction. Thus, any region retrieved by our strategy is in $S \cup T$, so at most $5|S| = 5\,r(F^0,P)$ regions are retrieved.

    \textsf{\textbf{Case 1.}} \normalfont Let $R_b$ be the region on the Pareto front in $F^\opt$ that dominates $R_a$. In particular, the top-right corner of (the original region) $R_b$ must dominate~$R_a$. Furthermore, $R_a$ not being dominated in $F^0$ implies that the top-right corner $q$ of $R_a$ is above $\IPF(F^0)$. The point $q$ implies an outgoing dependency from $R_b$ to $R_a$ in $F^0$, which means $R_b$ is retrieved by our strategy before $R_a$ is considered. Thus, $R_a$ is not retrieved by Algorithm~\ref{alg:strategy_simple}.

    \textsf{\textbf{Case 2.}} We apply Lemma~\ref{lem:not_finished_before_independent} to observe that \emph{any} algorithm must  retrieve $R_a$ to obtain a finished point set. So, $F^\opt$ is not finished, a contradiction.

    \textsf{\textbf{Case 3.}} By $F^\opt$ being finished and $R_a$ not being dominated, it must be that $R_a$ is always and thus on $\OPF(F^\opt)$. As $R_a$ has an outgoing dependency in $F^\opt$, there must be a region $R_b$ with a point $q \in R_b$ such that $q$ is above $\IPF(F^\opt)$, and $q \in R_a$ or the top-right corner of $R_a$ dominates $q$. In both cases, setting $p_b = q$ and $p_a$ as the top-right or bottom-left corner of $R_a$ results in two distinct Pareto fronts. We conclude that $F^\opt$ is not finished, a contradiction.
\end{proof}

\section{Overview of the reconstruction programs}\label{sec:overview}
In the remainder, we present two reconstruction programs that implement our reconstruction strategy (Algorithm~\ref{alg:strategy_simple}). We emphasize the techniques used and the challenges that we face.

\subparagraph{Preprocessing.}
The reconstruction strategy requires preprocessing the original input $F^0$ by classifying regions into those that are independent and those that have at least one outgoing dependency. The full set of dependencies has quadratic size. Fortunately, it suffices to determine whether each region has \emph{any} outgoing dependency.
A region $R$ has an outgoing dependency to a region $R'$ if certain geometric conditions are satisfied (Definition~\ref{def:independent}) \emph{and} the top-right corner of $R'$ does not dominate that of $R$. The main difficulty lies in identifying these geometric conditions efficiently for \emph{only} those regions $R'$ whose top-right corner does not dominate. Such filtering could be achieved using standard techniques at the cost of additional logarithmic factors in the preprocessing time.
To avoid any compromise in preprocessing time, we construct for each region two representative points: the leftmost point of its top facet, and the bottommost point of its right facet that lies above the inner Pareto front. We show that a region $R$ has \emph{an} outgoing dependency if and only if its top-right corner dominates at least one of these points, or is in a slightly degenerate case where one of its facets intersects a facet of another region. This characterization reduces the preprocessing task to performing $O(n)$ orthogonal range counting queries, resulting in $O(n\log n)$ preprocessing time.

\subparagraph{Reconstructing from rectangles.}
When $F$ is a family of (overlapping) rectangles, our algorithm achieves a running time of $O(r(F^0, P)\log n)$. The main challenge is to repeatedly identify a non-retrieved region $R$ on $\OPF(F)$ in amortized logarithmic time.
Our approach preprocesses $F = F^0 - I^0$ into a \emph{layer decomposition} by iteratively removing $\OPF(F)$ from~$F$ until no regions remain. During reconstruction, we retrieve  regions that are not dominated in the current family of regions $F$, layer-by-layer. The key observation is that a point retrieved from layer $i$ can never dominate any region in the same layer. Consequently, all regions that lie on $\OPF(F)$ at the beginning of processing layer $i$ remain on $\OPF(F)$ throughout that layer.
After completing layer $i$, we must remove from the remaining layers all regions that are dominated by the newly retrieved points. By carefully propagating information between layers, we are able to perform these removals implicitly, spending only logarithmic time per retrieved point. As a result, Stage~1 of the reconstruction runs in $O(r(F^0, P)\log n)$ time.
The remaining stages are handled by merging the Pareto front $\Phi(F)$ of the retrieved points with the Pareto front of the independent regions, which we do in $O(|\Phi(F)|\log n)$ time.

\subparagraph{Reconstructing from squares.}
When $F$ consists of unit squares, our goal is to achieve universal optimality. This is a very strong performance requirement: in particular, even spending $O(\log n)$ time per retrieved point is already too expensive.
To reason about universal optimality, fix an input $F^0$ and consider the number of combinatorially distinct Pareto fronts $\PF(P)$ that can arise over all realizations $P \sim F^0$. Let $U(F^0)$ denote this number. Universal optimality restricts us to a total running time of $\Theta( \log (U(F^0)))$. Equivalently, every unit of running time must be charged to a set of realizations $\{P_j\} \sim F^0$ that collectively generate sufficiently many distinct Pareto fronts $\PF(P_j)$ to pay for that cost.
Establishing this running time requires a highly technical charging and counting scheme that we detail in Section~\ref{sec:squares}.

\section{An efficient reconstruction program for rectangles}\label{sec:program_rectangles}
In this section we present a reconstruction program that executes Algorithm~\ref{alg:strategy_simple} in $O(\log n)$ time per retrieval using $O(n \log n)$ preprocessing time. We first adapt Algorithm~\ref{alg:strategy_simple} slightly, such that we have more control over the order the dependent regions, i.e. the order that regions in $F^0 - I^0$, are retrieved in. To this end, we partition $F^0 - I^0$ into several \emph{layers}.

\begin{definition}[Figure~\ref{fig:layers}(a)]
    The family $F = F^0 - I^0$ is partitioned into \emph{layers} $F_1,\dots, F_k$:
    \begin{equation*}
        F_i := \begin{cases}
            \{R_a \mid R_a \in \OPF(F^0)\} &\text{if $i = 1$}\\
            \left\{R_a \mid R_a \in \OPF\left(F^0 - \bigcup_{j=1}^{i-1} F_{j}\right)\right\} &\text{if $i > 1$ and this family is non-empty}.
        \end{cases} 
    \end{equation*}    
\end{definition}

This layer decomposition is \emph{static}, it is solely defined based on the original regions $(F^0 - I^0)$ and remains fixed during the reconstruction phase.
The first layer contains all regions of $F^0$ that are on $\OPF(F^0)$.
The subsequent layers contain all regions that are on the outer Pareto front after removing all regions in earlier layers.

\begin{observation}\label{obs:layer_order}
    A region $R_a$ in layer $F_i$ is never dominated by a point $p \in R_b$ with $R_b \in F_j$, where $j \geq i$.
\end{observation}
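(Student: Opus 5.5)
The plan is to argue by contradiction, using only the definition of the layers via outer Pareto fronts of the top-right corners. For rectangles, a region $R_a$ lies on $\OPF(G)$ exactly when its top-right corner $t_a$ is not strictly dominated by the top-right corner of any other region in $G$.

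Suppose $R_a \in F_i$ is dominated by a point $p \in R_b$ with $R_b \in F_j$ and $j \geq i$. Then $R_a$ lies in the open bottom-left quadrant of $p$. In particular $t_a$ is strictly dominated by $p$, and $p$ is weakly dominated by the top-right corner $t_b$ of $R_b$ (coordinatewise $\le$). So $t_b$ strictly dominates $t_a$: both of its coordinates are strictly larger.

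Now consider the family $G := F^0 - \bigcup_{\ell=1}^{i-1} F_\ell$, from which layer $F_i$ is taken as $\OPF(G)$. Since $j \geq i$, the region $R_b$ has not been removed before layer $i$, so $R_b \in G$. (If $j=i$ it lies in $G$ itself; if $j>i$ it is removed only later.) Also $b \neq a$, because no point of $R_a$ can strictly dominate the whole of $R_a$: the point $t_a$ is not in the open bottom-left quadrant of any point of $R_a$.

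Hence every point $x \in R_a$ is strictly dominated by $t_b \in R_b$, since $x \le t_a < t_b$ coordinatewise. So no point of $R_a$ is undominated by $R_b$, which means $R_a \notin \OPF(G) = F_i$, a contradiction.

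One subtlety needs checking. The layers are defined on $F^0 - I^0$, but the displayed formula uses $F^0$; I would read both consistently as removing $I^0$ too. The argument is unaffected either way, since $R_b$ lies in the same family $G$ as $R_a$. The only potential obstacle is the strict-versus-weak dominance at the corners. This is handled by the fact that dominance of a region means containment in the \emph{open} quadrant, which gives the strict inequality $t_a < p \le t_b$.
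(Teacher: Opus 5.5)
Your proof is correct. It is exactly the unpacking of the layer definition that the paper leaves implicit; the paper states this as an observation without proof. A point of $R_b$ dominating $R_a$ forces the top-right corner of $R_b$ to strictly dominate that of $R_a$, and since $R_b$ is still present when layer $F_i$ is peeled off, $R_a$ could not have lain on that outer Pareto front. Your handling of the case $a \neq b$ and of the $F^0$ versus $F^0 - I^0$ ambiguity is also fine.
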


\subsection{An adapted reconstruction strategy}\label{sec:strategy_rectangles}
In the adapted reconstruction strategy, we will go through the layers one by one, and retrieve some regions from each layer along the way.
However, for the current family of regions $F$, we might no longer be interested in some of the regions in a layer.
We therefore define the following subfamily of each layer $F_i$ called the \emph{active layer}.
\begin{definition}[Figure~\ref{fig:layers}(b)]
    For the current family of regions $F$, an \emph{active layer} $L_i(F)$ is the subfamily of regions in $F_i$, where $R_a \in L_i(F)$ if and only if the following conditions hold:
    \begin{itemize}
        \item $R_a \in F_i$, and
        \item $R_a$ is not dominated in $F$, and
        \item $R_a$ is not retrieved.
    \end{itemize}
\end{definition}

\begin{figure}
    \centering
    \includegraphics{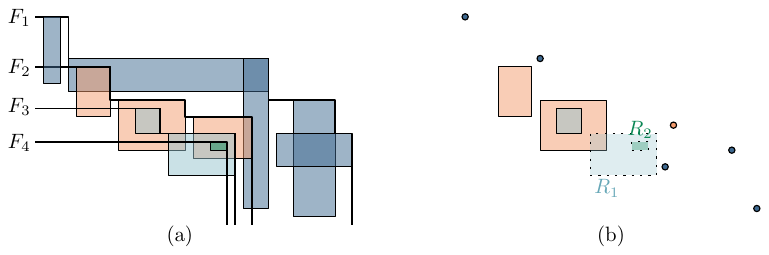}
    \caption{(a) The partition of the regions in $F_0$ into layers $F_1,F_2,F_3,F_4$. (b) After retrieving some regions, the active layer $F_i(F)$, $i \in [4]$, consists of the regions with solid boundaries.  $R_1$ and $R_2$ are not in $L_3(F)$ and $L_4(F)$ because they are dominated, point-regions are never in $R_i(F)$.\vspace{-0.3cm}}
    \label{fig:layers}
\end{figure}

To obtain an adapted reconstruction strategy, we replace Stage 1 of Algorithm~\ref{alg:strategy_simple} by the subroutine Algorithm~\ref{alg:subroutine}. Specifically, instead of retrieving any region on $\OPF(F)$ with an outgoing dependency in $F^0$, we first retrieve the regions in the active layer $L_1(F)$, then the regions in $L_2(F)$, etc. We denote this adapted strategy by \newstr.

\begin{algorithm}[h]
    \caption{A subroutine for Algorithm~\ref{alg:strategy_simple} to retrieve regions on $\OPF(F)$.}
    \label{alg:subroutine}
    \begin{algorithmic}
        \FOR{ $i = 1$ up to $k$ }
            \WHILE{$L_i(F) \neq \emptyset$}
                \STATE Retrieve any region $R_a \in L_i(F)$.
            \ENDWHILE
        \ENDFOR
    \end{algorithmic}
\end{algorithm}

\noindent
When processing an active layer $L_i(F)$, all regions in $L_i(F)$ are on the outer Pareto front:

\begin{lemma}\label{lem:layer_subset_OPF}
    When \newstr considers an active layer $L_i(F)$, then $L_i(F) \subseteq \OPF(F)$.
\end{lemma}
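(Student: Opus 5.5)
The plan is a proof by contradiction. Suppose that while \newstr processes layer $i$ there is a region $R_a \in L_i(F)$ with $R_a \notin \OPF(F)$. For rectangles, membership in the outer Pareto front is decided by the top-right corner, so let $r$ be the top-right corner of $R_a$. Then some other element $R_b \in F$ contains a point $q$ that dominates $r$; here $R_b$ is either a non-retrieved rectangle or a retrieved point $p_b$. Note that $R_b \notin \zero$, since $F$ excludes the independent regions. Every point of $R_a$ lies weakly below-left of $r$, so every point of $R_a$ lies in the open bottom-left quadrant of $q$. I would split into cases according to the layer $F_j$ that contains (the original region of) $R_b$.

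\textbf{Case $j \ge i$.} Here I would invoke Observation~\ref{obs:layer_order} directly: no point of an original region in $F_j$ with $j \ge i$ dominates $R_a \in F_i$. Since $q$ lies in the original region of $R_b$, and $q$ dominates $r$ and hence all of $R_a$, this is a contradiction.

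\textbf{Case $j < i$.} Subroutine~\ref{alg:subroutine} has already finished layer $j$, so $L_j(F) = \emptyset$. It follows that $R_b$ was either retrieved or is dominated in $F$. This part also needs the fact that $F$ only grows more precise, so a region that has left $L_j$ never re-enters it; that is immediate from the definitions.
\begin{itemize}
\item If $R_b = p_b$ is retrieved, then $q = p_b$ dominates $R_a$. So $p_a \notin \PF(P)$ for every $P \sim F$, that is, $R_a$ is dominated in $F$. This contradicts $R_a \in L_i(F)$.
\item If $R_b$ is not retrieved but is dominated in $F$, I would argue via the top-right corner $t_b$ of $R_b$, which dominates $q$ and hence $r$. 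Take any $P \sim F$. Form $P'$ from $P$ by replacing $p_b$ with $t_b$; this is still a realization of $F$. Since $R_b$ is dominated, $t_b$ is dominated in $P'$ by some $p_c$ with $c \neq b$. We also have $c \neq a$, because $p_a$ lies strictly below-left of $t_b$ and therefore cannot dominate it. So $p_c$ is a point of $P$ itself, and by transitivity $p_c$ dominates $p_a$ in $P$. As $P$ was arbitrary, $R_a$ is dominated in $F$, again contradicting $R_a \in L_i(F)$.
\end{itemize}

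I expect the main obstacle to be this last subcase: handling ``dominated'' as a statement over all realizations rather than by a single concrete point, while keeping the strict-versus-weak domination details correct. The substitution $p_b \mapsto t_b$ is what I would use to turn it into a transitivity argument.
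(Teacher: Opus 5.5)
Your proof is correct and follows the same route as the paper. You handle regions from earlier layers via $L_j(F)=\emptyset$ for $j<i$ and regions from layers $j\ge i$ via Observation~\ref{obs:layer_order}; you phrase it as a contradiction, whereas the paper directly shows that the top-right corner of $R_a$ is undominated. Your substitution $p_b \mapsto t_b$ in the dominated-but-unretrieved subcase usefully justifies a step the paper only asserts: that the top-right corner of a dominated earlier-layer region cannot dominate a non-dominated $R_a$.
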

\begin{proof}
    Let $R_a \in L_i(F)$ with top-right corner $r$. We have that $L_j(F) = \emptyset$ for all $j < i$, and thus every region in $\bigcup_{j = 0}^{i-1} F_j$ is either retrieved or dominated. It follows that $R_a$ is not dominated by the top-right corner of a region in $\bigcup_{j = 0}^{i-1} F_j$. By Observation~\ref{obs:layer_order}, the point $r$ is also not dominated by the top-right corner of any region in $\bigcup_{j = i}^{k} F_j$. It follows that $r$ is on the staircase of $\OPF(F)$, and thus $R_a \in \OPF(F)$.
\end{proof}
Together with Observation~\ref{obs:layer_order} this implies that \newstr executes Algorithm~\ref{alg:strategy_simple}.

\begin{corollary}
    \label{cor:newstr}
    \newstr is an instance-optimal retrieval strategy.
\end{corollary}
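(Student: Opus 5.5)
The plan is to show that the subroutine (Algorithm~\ref{alg:subroutine}) is a valid execution of Stage~1 of Algorithm~\ref{alg:strategy_simple}. Concretely, every region it retrieves is a non-point region of $\OPF(F)$ for the current $F$, and when it terminates no non-point region of $\OPF(F)$ remains. Stage~1 of Algorithm~\ref{alg:strategy_simple} allows any choice among the non-point regions on $\OPF(F)$, and the instance-optimality proof (the theorem above) never uses the order of retrievals, only that each retrieved region lay on $\OPF(F)$ at the moment of retrieval and that Stage~1 ends with $\OPF(F)$ consisting of points. The corollary then follows verbatim, with Stages~2 and~3 unchanged.

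\textbf{Step 1 (each retrieval is a legal Stage-1 move).} When the subroutine retrieves $R_a \in L_i(F)$, all earlier active layers are empty: the loop for index $j<i$ ended with $L_j(F)=\emptyset$. Later retrievals only add points, so a region that was retrieved or dominated stays retrieved or dominated, and these layers remain empty. Lemma~\ref{lem:layer_subset_OPF} then gives $R_a \in \OPF(F)$. Since $R_a \in L_i(F)$, it is not retrieved, hence a non-point region. So the subroutine's choice is one that Algorithm~\ref{alg:strategy_simple} could make.

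\textbf{Step 2 (termination condition matches).} After the loop for $i=k$, every $L_i(F)$ is empty. Suppose some non-point region $R_a \in \OPF(F)$ remains, with $F = F^0 - I^0$ updated. Then $R_a$ lies in some layer $F_i$, is not retrieved, and, being on $\OPF(F)$, its top-right corner is undominated. This is where the main care is needed: I must argue that $R_a$ is not dominated in $F$ in the sense of the definition of $L_i(F)$, that is, that not every realization excludes $p_a$ from the front. The natural argument is that a region whose top-right corner lies on the outer staircase can place its point at that corner. That point is then not dominated by any retrieved point, nor by any unretrieved region's realization, once those are placed at their bottom-left corners. There is one subtlety to check: the top-right corner of a region in $\IPF(F)$ would be dominated for every realization. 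To rule this out, I would use the preprocessing assumption that no region of $F^0$ lies in $\IPF(F^0)$, together with the fact that $\IPF$ of the current family is determined by retrieved points and bottom-left corners. If $R_a$ were dominated for all realizations while on $\OPF(F)$, then some retrieved point would dominate its top-right corner, contradicting $R_a\in\OPF(F)$. Hence $R_a\in L_i(F)\neq\emptyset$, a contradiction.

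\textbf{Step 3 (conclude).} Steps~1 and~2 show that the run of \newstr is a run of Algorithm~\ref{alg:strategy_simple} for a particular tie-breaking rule in Stage~1. The correctness lemmas (Lemma~\ref{lem:Li_done}, Lemma~\ref{lem:finished}) and the $5\,r(F^0,P)$ retrieval bound from the instance-optimality theorem therefore apply unchanged. Observation~\ref{obs:layer_order} is what underlies Lemma~\ref{lem:layer_subset_OPF}, so it is implicitly used in Step~1. I expect Step~2, reconciling ``not dominated in $F$'' with ``on $\OPF(F)$'', to be the only point requiring real argument.
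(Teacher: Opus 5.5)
Your proposal is correct and is essentially the paper's argument. The paper likewise gets the corollary from Lemma~\ref{lem:layer_subset_OPF} together with Observation~\ref{obs:layer_order}: \newstr is just an execution of Algorithm~\ref{alg:strategy_simple} with a particular Stage~1 order, and the instance-optimality proof does not depend on that order. You additionally spell out the termination check that the paper leaves implicit, and that check is simpler than you expect: $R_a\in\OPF(F)$ already means its top-right corner is not dominated by any point of any other region of $F$, and each not-yet-retrieved independent region can be realized at its bottom-left corner, which cannot dominate that corner because no region of $F^0$ lies in $\IPF(F^0)$, so $R_a$ is not dominated.
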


\subsection{An efficient reconstruction program}

The first step in the preprocessing is to compute the the set $I^0$ of independent regions.
Our goal is to do this in $O(n \log n)$ time and $O(n)$ space.
To this end, we derive a more convenient characterization of independent regions.
In \cref{def:independent}, there are infinitely many choices for the point $q \in R_b$.
We show that in most cases only two of these choices are relevant:
For a region $R \in F^0$, let $\ell(R)$ be the leftmost point on the top facet of $R$ that lies above $\IPF(F^0)$,
and let $r(R)$ be the bottommost point on the right facet of $R$ that lies above $\IPF(F^0)$, see Figure~\ref{fig:finding_independents}(a).

\begin{lemma} \label{lem:depend_facet_points}
    $R_a$ has an outgoing dependency to $R_b$ if and only if there is a point $q$ on the top or right facet of $R_b$ that satisfies \cref{def:independent}.
\end{lemma}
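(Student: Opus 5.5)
The reverse direction is immediate: a point $q$ on the top or right facet of $R_b$ that satisfies Conditions~\ref{itm:3}--\ref{itm:1} of \cref{def:independent} is in particular a point of $R_b$ satisfying them. So all the work is in the forward direction. Given an arbitrary witness $q \in R_b$, I will move it monotonically up and/or right inside $R_b$ until it reaches the top or right facet, and check that the three conditions survive.

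Condition~\ref{itm:3} does not depend on $q$, so it is automatically preserved. Condition~\ref{itm:2} is preserved by any move up or right. The reason is that $\IPF(F^0)$ is a down-closed set: a point dominated by some point is also dominated by anything that dominates that point. Hence if $q$ lies above $\IPF(F^0)$, so does every $q'$ with $q'_x \ge q_x$ and $q'_y \ge q_y$. The only delicate condition is~\ref{itm:1}. Write $t=(t_x,t_y)$ for the top-right corner of $R_a$, and let $R_b=[x_1,x_2]\times[y_1,y_2]$.

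I split into two cases according to which clause of Condition~\ref{itm:1} holds for $q$.

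In the first case, $q \in R_a \cap R_b$. The set $R_a \cap R_b$ is a rectangle, so I replace $q$ by its top-right corner $q'=(\min(t_x,x_2),\min(t_y,y_2))$. This point lies in $R_a \cap R_b$ and dominates-or-equals $q$. It lies on the right facet of $R_b$ if $x_2 \le t_x$, and on the top facet if $y_2 \le t_y$. If neither holds, then $q'=t$ is a point of $R_b$ strictly inside in both coordinates. I handle this sub-case separately: I slide $q'$ rightwards from $t$ within $R_b$. The slid point leaves $R_a$, but I then invoke the second clause by comparing against $t$. That clause needs strict domination, which fails on the line $y=t_y$. So instead I choose $q''=(x_2,t_y)$. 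It lies on the right facet of $R_b$ and on the line $y=t_y$, hence it is neither in $R_a$ nor strictly dominated by $t$.

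This is the main obstacle: when $t$ lies strictly inside $R_b$, no facet point satisfies~\ref{itm:1}. I expect the resolution to come from Condition~\ref{itm:3}. The top-right corner $(x_2,y_2)$ of $R_b$ strictly exceeds $t$ in both coordinates, so it dominates $R_a$ provided the open quadrant contains $R_a$. That holds because every point of $R_a$ has coordinates at most those of $t$, which are strictly less than $(x_2,y_2)$. This contradicts~\ref{itm:3}, so the sub-case cannot occur.

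In the second case, $t$ strictly dominates $q$, so $q_x<t_x$ and $q_y<t_y$. Consider moving $q$ up to $(q_x,\min(y_2,t_y))$. If $y_2<t_y$, this point is on the top facet and still strictly dominated by $t$, and we are done. Symmetrically, if $x_2<t_x$, we move right to the right facet. Otherwise $y_2\ge t_y$ and $x_2\ge t_x$. Since $x_1 \le q_x < t_x$ and $y_1\le q_y<t_y$, the point $t$ lies in $R_b$. Moreover, if both inequalities are strict, the argument above contradicts~\ref{itm:3}. So one of them is an equality, say $x_2=t_x$. Then $t$ lies on the right facet of $R_b$ and in $R_a$, so it satisfies the first clause of~\ref{itm:1}, and it dominates $q$, so Condition~\ref{itm:2} holds by down-closure. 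The case $y_2=t_y$ is symmetric, with $t$ on the top facet. In every surviving branch the new point is on the top or right facet and satisfies all three conditions, which completes the plan.
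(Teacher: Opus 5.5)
Your proof is correct and follows essentially the same route as the paper. Moving the witness monotonically up and to the right inside $R_b$ preserves conditions II and III until it reaches the top or right facet, and the only obstruction (the top-right corner $t$ of $R_a$ lying strictly below and to the left of the top-right corner of $R_b$) is excluded by condition I; the paper does this more compactly by taking the top-rightmost point of $R_b$ on the segment from the witness to $t$, every point of which satisfies II and III, which avoids your case split on the two clauses of III, and your exploratory detour through $q''=(x_2,t_y)$ can simply be deleted.
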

\begin{proof} 
    The ``if'' direction is trivial, so let's prove the ``only if'' direction.
    Let $R_a$ have an outgoing dependency to $R_b$, certified by a point $\tilde{q}$ that satisfies \cref{def:independent}.
    Consider the closed line segment $S$ from $\tilde{q}$ to the top-right corner of $R_a$.
    Then all points on $S$ satisfy conditions~\ref{itm:2} and~\ref{itm:1} of \cref{def:independent}.
    Let $q$ be the top-rightmost point of $S \cap R_b$.
    If $q$ is on the top or right facet of $R_b$, this $q$ shows the ``only if'' direction.
    Otherwise, $q$ is the top-right corner of $R_a$. But then, the top-right corner of $R_b$ dominates $R_a$ --  a contradiction to condition~\ref{itm:3} of \cref{def:independent}.
\end{proof}

\begin{corollary}[Figure~\ref{fig:finding_independents}(b)] \label{cor:fast_depend}
    $R_a$ has an outgoing dependency to $R_b$ if and only if either
    \begin{enumerate}[(1)]
        \item there is a point $q \in \{\ell(R_b), r(R_b)\}$ such that the top-right corner of $R_a$ dominates $q$; or
        \item the top or right facet of $R_a$ intersects a top or right facet of $R_b$ at a point above $\IPF(F^0)$.
    \end{enumerate}
\end{corollary}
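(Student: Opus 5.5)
We reduce the corollary to Lemma~\ref{lem:depend_facet_points} and then examine a single facet of $R_b$, namely its top facet; the right facet is handled symmetrically. Write $t$ for the top-right corner of $R_a$.

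\emph{The ``if'' direction.} In case (1) we take $q=\ell(R_b)$ or $q=r(R_b)$.
\begin{itemize}
\item Condition~\ref{itm:2} holds by the definition of $\ell(R_b)$ and $r(R_b)$.
\item Condition~\ref{itm:1} holds because $t$ dominates $q$.
\item Condition~\ref{itm:3}: suppose the top-right corner of $R_b$ dominated $R_a$. Then $R_b$ would lie weakly to the top-right of $t$. Since $q\in R_b$ and $t$ dominates $q$, strictly, we would need $q$ strictly below and left of $t$, yet also weakly above and right of the bottom-left of $R_b$. This is consistent, so condition~\ref{itm:3} needs a separate argument. Domination by the top-right corner of $R_b$ means $R_a$ lies strictly left of and strictly below $R_b$'s top-right corner $c$. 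A facet point $q$ of $R_b$ has $q_y=c_y$ or $q_x=c_x$, so $q$ has a coordinate equal to that of $c$. But $t$ dominates $q$, which gives $t_y>q_y$ and $t_x>q_x$. If $q_y=c_y$, then $t_y>c_y$, contradicting $t_y<c_y$; the case $q_x=c_x$ is symmetric.
\end{itemize}
In case (2) we take the intersection point $q$, which lies on a top or right facet of both regions. Then $q\in R_a$ gives condition~\ref{itm:1}, and condition~\ref{itm:2} is assumed. For condition~\ref{itm:3}, $q\in R_a$ lies on a facet of $R_b$, so one coordinate of $q$ equals that of $c$, and $q$ cannot lie strictly below-left of $c$.

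\emph{The ``only if'' direction.} By Lemma~\ref{lem:depend_facet_points}, some $q$ on the top or right facet of $R_b$ satisfies \cref{def:independent}; say it is on the top facet. The points of the top facet lying above $\IPF(F^0)$ form a segment whose left endpoint is $\ell(R_b)$, since the inner Pareto front is a down-closed set.

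\emph{Subcase: $t$ dominates $q$.} Then $\ell(R_b)$ has the same $y$-coordinate as $q$ and $x$-coordinate at most that of $q$, so $t$ dominates $\ell(R_b)$ as well, giving (1).

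\emph{Subcase: $q\in R_a$ but $t$ does not dominate $q$.} Here $q$ lies on the top or right facet of $R_a$. Indeed, $q\in R_a$ means $q_x\le t_x$ and $q_y\le t_y$, and non-domination forces equality in one coordinate. So $q$ is an intersection of facets of $R_a$ and $R_b$ above $\IPF(F^0)$, giving (2).

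The main subtlety is degenerate boundaries. We must confirm that ``dominates'' is strict, so the subcases split exactly, and that $\ell(R_b)$ exists, which it does because $q$ itself witnesses that the segment is nonempty. Since $\IPF(F^0)$ is open and down-closed, the set of top-facet points above it is closed on the left, so the leftmost point exists.
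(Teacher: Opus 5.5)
Your proof is correct and follows the paper's route. Both arguments reduce via Lemma~\ref{lem:depend_facet_points} to a witness $q$ on the top (or right) facet of $R_b$, and then observe that either $q$ lies on the top or right facet of $R_a$ (case (2)) or the top-right corner of $R_a$ dominates $q$ and hence also $\ell(R_b)$ or $r(R_b)$ (case (1)); the paper phrases the second branch by choosing the leftmost witness, while you transfer domination to $\ell(R_b)$ directly, which spells out the same step. Your check of condition~\ref{itm:3} in the ``if'' direction, that a facet point of $R_b$ shares a coordinate with its top-right corner, is also right, but delete the abandoned first attempt in that bullet, since its claim that $R_b$ lies weakly to the top-right of $t$ is false.
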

\begin{proof}
    We first show the ``only if'' direction.
    By \cref{lem:depend_facet_points}, there is a point $q$ on the, without loss of generality, top facet of $R_b$ that satisfies \cref{def:independent}.
    Let $q$ be the leftmost such point.
    Consider condition~\ref{itm:1}: If $q$ lies on the top or right facet of $R_a$, then case (2) applies.
    Otherwise $q \in \{\ell(R_b), r(R_b)\}$, and case (1) applies.

    For the ``if'' direction, in both cases, we first verify condition \ref{itm:3} of \cref{def:independent}.
    Indeed, in neither case can the top-right corner of $R_b$ dominate $R_a$.
    Then, in case (1), the point $q$ satisfies conditions \ref{itm:2} and \ref{itm:1}.
    In case (2), any intersection point satisfies these conditions.
\end{proof}

\begin{figure}
    \centering
    \includegraphics[page=2]{figures/dependencies.pdf}
    \caption{(a) The squares indicate the points $\ell(R)$ and $r(R)$. (b) $R_a$ has an outgoing dependency in case (1) because of the green points, while $R_b$ has an outgoing dependency in case (2). The orange regions are independent.}
    \label{fig:finding_independents}
\end{figure}

What remains is to turn \cref{cor:fast_depend} into an algorithm for finding the independent regions.
We first compute $\IPF(F^0)$ and build an orthogonal range query data structure over all $\ell(R), r(R)$ in $O(n \log n)$ time and $O(n)$ space~\cite{Chazelle88rangesearching}. 
For every region $R_a$, to check whether case (1) of \cref{cor:fast_depend} applies, we query the open area dominated by the top-right corner of $R_a$.
Next, we think of the top and right facet of each region as closed rectangles of width $0$, 
and build a rectangle-rectangle intersection counting data structure~\cite{Chazelle88rangesearching} over these.
For every region $R_a$, we query the top facet, and query the right facet. In total, there are always at least $4$ intersections (from the region $R_a$ itself).
Case (2) of \cref{cor:fast_depend} applies if and only if there are more than four.
The independent regions are precisely those for which neither case applies.
This shows:

\begin{lemma} \label{lem:compute_independent}
    Given $F^0$, we can compute $I^0$ in $O(n \log n)$ time and $O(n)$ space.
\end{lemma}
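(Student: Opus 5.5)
The plan is to turn \cref{cor:fast_depend} directly into an algorithm: a region $R_a$ is independent if and only if neither case (1) nor case (2) of \cref{cor:fast_depend} holds for any $R_b \neq R_a$. So it suffices to decide, for every $R_a$ and with $O(\log n)$ time per region, whether case (1) or case (2) is witnessed by some other region, after $O(n \log n)$ preprocessing using linear space.

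\textbf{Step 1: the staircase and the representative points.} We first compute $\IPF(F^0)$. It is bounded by the staircase of the Pareto front of the bottom-left corners, which we construct in $O(n \log n)$ time by sorting. For each region $R$, we then compute $\ell(R)$ and $r(R)$.
\begin{itemize}
\item $\ell(R)$ is the leftmost point of the top facet above the staircase. It is either the top-left corner of $R$, or the point where the top facet exits the staircase. We find it by binary search on the staircase in $O(\log n)$ time.
\item $r(R)$ is handled symmetrically on the right facet.
\item If a facet lies entirely inside $\IPF(F^0)$, the corresponding point does not exist and is simply omitted.
\end{itemize}
This gives at most $2n$ points in $O(n\log n)$ time.

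\textbf{Step 2: testing case (1) with dominance counting.} We build a static orthogonal range counting structure (Chazelle~\cite{Chazelle88rangesearching}) over the at most $2n$ points. It uses $O(n)$ space and answers queries in $O(\log n)$ time. For each $R_a$, we count the points in the open quadrant dominated by its top-right corner.

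The points $\ell(R_a)$ and $r(R_a)$ of $R_a$ itself lie on its top and right facets, so they are never in the open quadrant of its own corner. Hence any positive count certifies a genuine outgoing dependency to some other $R_b$.

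\textbf{Step 3: testing case (2) with segment intersection counting.} We treat each top and right facet as a degenerate closed rectangle. Before building the structure, we clip each facet to its part above $\IPF(F^0)$, so that every counted intersection lies above the staircase. The clipped part is a single segment, found by the same binary search as in Step 1, because the staircase is monotone. We then build a rectangle-intersection counting structure over these $2n$ segments, again in $O(n\log n)$ time and $O(n)$ space~\cite{Chazelle88rangesearching}.

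For each $R_a$, we query its clipped top facet and its clipped right facet. The two facets of $R_a$ meet each other at the top-right corner, and each facet meets itself. So the self-contribution is a fixed constant, namely $4$ when both clipped facets exist. Case (2) holds if and only if the total count exceeds this baseline. We compute the baseline per region, since one clipped facet may be empty.

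\textbf{Main obstacle.} The delicate step is Step 3: making sure the counts isolate exactly the intersections required by \cref{cor:fast_depend}. Two issues need care.
\begin{itemize}
\item \emph{Above the staircase.} Intersections inside $\IPF(F^0)$ must not be counted. Clipping the facets before building the structure resolves this.
\item \emph{Baseline when the corner is on the boundary.} The top-right corner of $R_a$ may lie on the boundary of $\IPF(F^0)$, and this can change the baseline. Since we assumed no region lies inside $\IPF(F^0)$, the corner is either above the staircase or on its boundary. We handle the boundary case by consistently treating $\IPF(F^0)$ as open.
\end{itemize}

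Summing up, there are $O(n)$ queries at $O(\log n)$ each, plus $O(n\log n)$ preprocessing in $O(n)$ space. This gives the bound of the lemma.
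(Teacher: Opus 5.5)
Your proposal is correct and follows the paper's proof: compute $\IPF(F^0)$, decide case (1) of \cref{cor:fast_depend} with an open-quadrant range count over the points $\ell(R), r(R)$ (which never counts $R_a$'s own two points), and decide case (2) with a rectangle-intersection count over the degenerate top and right facets, compared against the self-contribution of $4$. The one place you are more careful than the paper's write-up is clipping each facet to its part outside the open set $\IPF(F^0)$ before building the intersection structure. This step is needed: otherwise a facet of $R_a$ that crosses a facet of another region only inside $\IPF(F^0)$ would wrongly mark an independent $R_a$ as dependent. Under the standing assumption that no region lies in $\IPF(F^0)$, both clipped facets contain the top-right corner, so the baseline is always exactly $4$.
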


\subparagraph{Preprocessing phase.}

We first invoke \cref{lem:compute_independent} to find $I^0$ in $O(n \log n)$ time and $O(n)$ space.
We then set $F = F^0 - I^0$ and construct $\OPF(\zero)$ in $O(n \log n)$ time.
We also store $I^0$ in an orthogonal ray shooting data structure to prepare for Stages 2 and 3 of Algorithm~\ref{alg:strategy_simple}. 

Next, we partition $F$ into a layer decomposition~\cite{Chazelle2006ConvexLayers} which finds the partition $\{ F_i \}$ and constructs $\OPF(F_i)$ for each index $i$. 
We augment it with a planar point location data structure that for any query point $q$ returns the layer $F_i$ such that the downward vertical ray from $q$ hits $\OPF(F_i)$.  
This allows us to efficiently determine between which layers a retrieved point lies. The layer decomposition, and a corresponding point location data structure, can be computed in $O(n \log n)$ time and $O(n)$ space by a straightforward adaption of Chazelle's algorithm to compute the convex layers of a planar point set~\cite{Chazelle2006ConvexLayers} on the top-right corners of the regions.

\begin{definition}
    We store for each layer $i$ a tree $T_i$ that stores the regions in $\OPF(F_i)$ in-order as a leaf-based leaf-linked balanced binary tree.
    Moreover, we store a list of \emph{some} maximal intervals along $\OPF(F_i)$ that are dominated by points in $F_j$ for $j < i$. 
\end{definition}

This tree will later allow us to efficiently disregard all maximal intervals of consecutive leafs that are dominated in $F$. Specifically, it will allow us to maintain a doubly linked list storing all regions in $\OPF(F_i)$ that are not dominated in $F$.

\subparagraph{Reconstruction phase.}
As we execute the subroutine in Algorithm~\ref{alg:subroutine}, we retrieve all regions in the active layers, layer by layer.
We maintain the following invariant:

\begin{quote}
    When we process the active layer $L_i(F)$, we store \emph{all} maximal intervals of points on $\OPF(F_i)$ dominated by points in $F_j$ for $j < i$, and a doubly linked list of $L_i(F)$. 
\end{quote}

Note that this requirement only has to hold when processing $L_i(F)$. 
We achieve this invariant as follows. 
When we process an active layer $L_i(F)$, we assume the invariant to be true.
Note that this is trivially true when we start processing $L_1(F)$.
Moreover, observe that if it is true at the start of processing $L_i(F)$ then it remains true throughout as no region in $L_i(F)$ can be dominated by a point that is retrieved from the same layer. 

Our analysis relies on the following lemma, which intuitively states that two consecutive maximal intervals of dominated points along $\OPF(F_i)$ can be merged into one, \emph{if} there does not exists a region $R \in F_i$ between them:

\begin{lemma}\label{lem:safe_merge}
    Let $I_{p_a}$ and $I_{p_b}$ be two intervals along $\OPF(F_i)$ that are dominated by points $p_a$ and $p_b$, respectively.  Then either $I_{p_a} \cap I_{p_b} \neq \emptyset$, or there is region $R_a \in F_i$ in $\OPF(F_i)$ between $I_{p_a}$ and $I_{p_b}$.
\end{lemma}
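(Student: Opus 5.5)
The plan is to work directly with the geometry of the staircase $\OPF(F_i)$. Since $F_i$ consists of rectangles, this staircase is the boundary of the union of the closed bottom-left quadrants of the top-right corners $c_1,\dots,c_m$ of the regions in $\OPF(F_i)$, sorted by increasing $x$ and decreasing $y$. Between consecutive corners $c_j$ and $c_{j+1}$ the staircase is an L-shaped piece. It first runs vertically down from $c_j$ to the concave \emph{notch} $n_j=(c_j.x,\,c_{j+1}.y)$, and then horizontally right to $c_{j+1}$. Before $c_1$ there is an unbounded vertical ray, and after $c_m$ an unbounded horizontal ray.

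\textbf{Step 1: the intervals are connected.} A point $p$ dominates a staircase point $s$ exactly when $s$ lies in the open bottom-left quadrant of $p$. This quadrant is convex and the staircase is $x$-monotone and $y$-monotone, so $I_p$ is a connected interval along the staircase. Assume $I_{p_a}\cap I_{p_b}=\emptyset$, and order them so that $I_{p_a}$ comes first. Then there is a nonempty gap $G$ of staircase points strictly after $I_{p_a}$ and strictly before $I_{p_b}$, or else the two intervals share only a boundary point. If $G$, or the shared boundary point, contains some corner $c_j$, then the region of $F_i$ with top-right corner $c_j$ lies on $\OPF(F_i)$ between the intervals, and we are done.

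\textbf{Step 2: no corner between the intervals.} In the remaining case, $I_{p_a}$ and $I_{p_b}$ both touch the closure of a single corner-free piece of the staircase: one L-piece between $c_j$ and $c_{j+1}$, or one of the two unbounded rays. The key claim is that every $p$ whose interval meets this piece (including its endpoint corners) dominates the notch $n_j$. Take $s$ on the vertical leg, so $s=(c_j.x,y)$ with $y\ge c_{j+1}.y$. Then $p.x>c_j.x$ and $p.y>y\ge c_{j+1}.y$, so $p$ dominates $n_j$. Take $s$ on the horizontal leg, so $s=(x,c_{j+1}.y)$ with $x\ge c_j.x$. Then $p.x>x\ge c_j.x$ and $p.y>c_{j+1}.y$, so again $p$ dominates $n_j$. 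The same computation applies when $s$ is one of the corners $c_j$ or $c_{j+1}$ themselves. Hence $n_j\in I_{p_a}\cap I_{p_b}$, contradicting disjointness.

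For the unbounded rays the argument is analogous. On the leading vertical ray $\{(c_1.x,y): y\ge c_1.y\}$, the interval $I_p$ consists of the points with $y<p.y$. This is downward closed along the ray, so it contains points arbitrarily close to $c_1$. Two such intervals therefore intersect, and the trailing horizontal ray is symmetric.

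I expect the main obstacle to be the bookkeeping of boundary cases rather than any deep step. These cases include intervals that merely touch at an endpoint, intervals that contain a corner but spill into the adjacent L-piece, and the precise meaning of ``between'' when a corner is itself an endpoint of an interval. I would handle all of these uniformly through the notch computation, which covers corner points as well, together with the convention that a corner lying in the closure of the gap counts as a region between the intervals.
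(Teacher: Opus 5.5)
Your proof is correct in substance and follows the same route as the paper. The paper's proof simply asserts that disjoint intervals must have a top-right corner of the staircase $\OPF(F_i)$ between them. Your notch computation (any $p$ dominating a point of the closed L-piece from $c_j$ to $c_{j+1}$ dominates $n_j=(c_j.x,c_{j+1}.y)$) is exactly the justification the paper leaves implicit.

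One slip concerns the unbounded rays. For the boundary of the union of closed bottom-left quadrants, the unbounded pieces are the leftward horizontal ray $\{(x,c_1.y): x\le c_1.x\}$ and the downward vertical ray $\{(c_m.x,y): y\le c_m.y\}$. They are not the upward and rightward rays you describe. On the true leading ray, the points dominated by $p$ are those with $x<p.x$ (when $p.y>c_1.y$). This set contains the whole far-left tail rather than points near $c_1$, so two intervals meeting this ray still intersect. The trailing ray is symmetric.

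Separately, $I_p$ is an interval because $x<p.x$ cuts out a prefix of the staircase and $y<p.y$ cuts out a suffix. Convexity plus monotonicity alone would not suffice.
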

\begin{proof}
    Suppose that $I_{p_a} \cap I_{p_b} = \emptyset$, see Figure~\ref{fig:merge_intervals}. Then there must be a top-right corner $q$ of the staircase $\OPF(F_i)$ between the two intervals. However, this implies that there is a region $R_a \in \OPF(F_i)$ who's top-right corner is $q$. It follows that there is a region in $\OPF(F_i)$ between $I_{p_a}$ and $I_{p_b}$.
\end{proof}

\begin{figure}
    \centering
    \includegraphics[page=2]{figures/merge_intervals.pdf}
    \caption{The two intervals $I_{p_a}$ and $I_{p_b}$ not overlapping implies that there is a top-right corner $q \in \OPF(F_i)$ between them.}
    \label{fig:merge_intervals}
\end{figure}

\emph{Our retrieval update.}
Whenever our algorithm retrieves a region $R_a \in L_i(F)$, it will update the list of intervals on $\OPF(F_{j+1})$ for exactly \emph{one} $j$. A merging strategy then ensures that the invariant holds for the active layer as we process it. 
Specifically, when we retrieve a region $R_a \in L_i(F)$ we perform a point location with $p_a$ on the layer decomposition. If $p_a$ is in the inner Pareto front, we discard it. Otherwise, suppose that $p_a$ lies between $\OPF(F_j)$ and $\OPF(F_{j+1})$.
The point $p_a$ then dominates a contiguous interval along $\OPF(F_{j+1})$ (which corresponds to a contiguous interval of leaves in $T_{j+1}$).
We search along $T_{j+1}$ to identify this interval in $O(\log n)$ time. 
We then insert this interval $I_{p_a}$ into the ordered list of maximally dominated intervals, iteratively merging $I_{p_a}$ with any interval $I_{p_b}$ in the list that is intersected by $I_{p_a}$. 
A merge between two intervals replaces both by their union. 
Note that this merge takes amortized constant time since there are at most $O(1)$ such intervals $I_{p_b}$ that are not contained in $I_{p_a}$ and all intervals contained in $I_{p_a}$ are permanently removed. 

\emph{Maintaining our invariant.}
When we are finished processing the active layer $L_i(F)$, we must ensure that our invariant holds for $L_{i+1}(F)$. We can now rely on Lemma~\ref{lem:safe_merge}.
Per our invariant (and the fact that retrieving regions in $F_i$ does not change this invariant), we stored the set of maximal intervals along $\OPF(F_i)$ that are dominated by a point in $F_j$ for $j < i$. 
Per our merging strategy, we merged these intervals upon inserting them such that between any two consecutive intervals there is a (non-dominated) region $R$ in $F_i$ (which is thus a region of $L_i(F)$).
Every such region $R$ is retrieved by our retrieval strategy, and we charge $O(\log n)$ to every such interval. 
Each interval along $\OPF(F_i)$ corresponds to some fictive point $q$ that dominates the interval and we search the balanced tree storing  $\OPF(F_{i+1})$  in $O(\log n)$ time to find the maximal interval along  $\OPF(F_{i+1})$ that is dominated by $q$. 
With this total charge, we search for every such fictive $q$ for the maximal interval along  $\OPF(F_{i+1})$ that is dominated by $q$. 
We merge these new intervals with the intervals already stored along  $\OPF(F_{i+1})$  in the exact same manner as above at amortized constant overhead. 
Every point in $F_j$ with $j < i$  that dominates any point along $\OPF(F_{i+1})$ must have had an interval either along $\OPF(F_{i+1})$  or $\OPF(F_{i})$ by our invariant. 
It follows that this merging procedure maintains our invariant for $\OPF(F_{i+1})$ and charges logarithmic time to each retrieved region in $L_i(F)$. 
Via the instance-optimality of our retrieval strategy, we spend $O(r(F^0, P) \log n)$ total time on these retrievals.

\subparagraph{The independent regions.} 
For each region $R_a$ that we retrieve in Stage 1, we record the point $p_a$ that is retrieved. Let $F$ be the current family of regions then we denote by $\Phi(F)$ the Pareto front of all points that we retrieved, which we can maintain in $O(\log n)$ time per retrieved point.  
All regions in the layer decomposition have an outgoing dependency in $F^0$. So,  if there exists at least one region on $\OPF(F)$ that is not a point region then \newstr retrieves this region. As a result,  $\OPF(F) = \Phi(F)$ by the end of Stage 1. 

We then execute Stage 2 and Stage 3 using orthogonal range searching queries on $I^0$.
In particular, we note that for each edge $e$ of the staircase of $\Phi(F)$, only one region of $I$ can intersect $e$ in its interior, or bottom or left facet (if there are two such regions then one of the two has an outgoing dependency to the other). However, there may be many regions who's top or right facet intersect $e$, see figure~\ref{fig:program_rectangles}.
We thus shoot a ray from each vertex of $\Phi(F)$ in each cardinal direction, charging to the retrieved vertex, spending $O(r(F^0, P) \log n)$ time. If such a ray hits a region that intersects the corresponding edge only in its top (or right) facet, then we continue to the next region along $\OPF(I^0)$ as long as this next region also intersects the same edge only in its top (or right) facet.
As a consequence, we obtain a superset of all regions retrieved in Stage 2 and 3. For each of these regions, we check if they meet the requirement for retrieval at constant overhead and retrieve them if needed. It follows that we 
execute \newstr after $O(n \log n)$ preprocessing in $O(r(F^0, P) \log n)$ time to obtain a set of regions that is \emph{finished}.

Finally, we note that we can slightly strengthen the result.
We can not only find a set of regions that is \emph{finished}, we can also output a pointer to a balanced tree that stores the Pareto front in order. 
Specifically, we 
can merge the fronts $\OPF(I)$ and $\Phi(F)$ into a staircase of regions that has a 1:1 correspondence to $\PF(P)$.
As we retrieve regions from $I$ the left-to-right order along $\OPF(I)$ does not change. 
We can therefore use any balanced tree over $\OPF(I^0)$ as a balanced tree over $\OPF(I)$. 
Our next key observation is that each edge of a Pareto front intersects another Pareto front exactly once. For each edge of the staircase of $\Phi(F)$, we can binary search in $O(\log n)$ time to find the edge of $\OPF(I)$ that is intersected by $\Phi(F)$ (if any).
Given $O(|\Phi(F)|) \subset O(r(F^0, P))$ intersection points, we compute the joint maximum of $\OPF(I)$ and $\Phi(F)$ in $O(r(F^0, P))$ time which corresponds to $\PF(P)$.

\begin{figure}
    \centering
    \includegraphics[page=2]{figures/program_rectangles.pdf}
    \caption{For each region in $I^0$, the arrows indicate the ray shooting queries that imply the region should be retrieved.}
    \label{fig:program_rectangles}
\end{figure}

\begin{theorem}\label{thm:rectangles}
    Let $F$ be a family of $n$ axis-aligned rectangles. We can process $F$ in $O(n \log n)$ time and $O(n)$ space, such that given $P\sim F$ we reconstruct the Pareto front $\PF(P)$ in $O(r(F,P) \log n)$ time.
\end{theorem}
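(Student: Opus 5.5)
The proof assembles the components developed in this section; there are three parts: preprocessing, correctness, and running time.

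\textbf{Preprocessing.} By Lemma~\ref{lem:compute_independent} we compute $I^0$ in $O(n\log n)$ time and $O(n)$ space. We then build the following structures on $I^0$:
\begin{itemize}
    \item $\OPF(I^0)$ stored in a balanced tree;
    \item an orthogonal ray-shooting structure, with $O(\log n)$ query time and $O(n)$ space.
\end{itemize}
On $F^0 - I^0$ we compute the layer decomposition $F_1,\dots,F_k$ by the adaptation of Chazelle's convex-layers algorithm to the top-right corners. Each $\OPF(F_i)$ goes in a leaf-linked balanced tree $T_i$, and we add a planar point-location structure over the staircases. Each of these is a standard $O(n\log n)$-time, $O(n)$-space construction, and the trees have total size $n$ because the layers partition $F^0 - I^0$.

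\textbf{Correctness and retrieval count.} By Corollary~\ref{cor:newstr}, the program executes \newstr and retrieves $O(r(F,P))$ regions. Lemma~\ref{lem:finished} shows that the final family is finished. For the program to realize the strategy faithfully, the maintained doubly linked list must equal $L_i(F)$ whenever layer $i$ is processed. I would prove the stated invariant by induction on $i$. Retrievals from layer $i$ never dominate regions of $F_i$ (Observation~\ref{obs:layer_order}), so the invariant is preserved while a layer is processed. For the transition to layer $i+1$, take any point $p$ from an earlier layer that dominates part of $\OPF(F_{i+1})$. Either $p$ lies directly above $\OPF(F_{i+1})$, in which case its interval was inserted at retrieval time, or $p$ lies above $\OPF(F_i)$. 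In the second case its dominated interval along $\OPF(F_i)$ is contained in a stored maximal interval, and that interval's fictive dominating corner $q$ dominates everything $p$ dominates on $\OPF(F_{i+1})$. By Lemma~\ref{lem:safe_merge}, after merging, consecutive stored intervals on $\OPF(F_i)$ are separated by a non-dominated region of $F_i$. Such a region lies in $L_i(F)$ and is retrieved.

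\textbf{Running time.} Each retrieval in Stage 1 costs:
\begin{itemize}
    \item one point location;
    \item one $O(\log n)$ search in some $T_{j+1}$;
    \item amortized $O(1)$ merging, since intervals swallowed by a merge disappear permanently.
\end{itemize}
Propagating intervals from $\OPF(F_i)$ to $\OPF(F_{i+1})$ costs $O(\log n)$ per interval. By the separation property above, the number of intervals on $\OPF(F_i)$ is at most $|L_i(F)|+1$, so this cost is charged to retrieved regions of layer $i$. The extra $+1$ per layer only arises for layers that contain an interval at all, which requires an earlier retrieval, so it is charged there as well. Empty layers are skipped by keeping the linked list of non-dominated regions. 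Stage 1 therefore takes $O(r(F,P)\log n)$ time.

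Stages 2 and 3 use $O(1)$ ray-shooting queries per vertex of $\Phi(F)$. The walk along $\OPF(I^0)$ over regions touching an edge only in their top or right facet is charged to those regions, which are then retrieved. Merging $\Phi(F)$ with $\OPF(I)$ costs $O(\log n)$ per edge of $\Phi(F)$. Since $|\Phi(F)|$ is at most the number of retrievals, the total reconstruction time is $O(r(F,P)\log n)$.

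\textbf{Main obstacle.} The hardest step is the charging argument for the inter-layer propagation. One must show that the fictive-corner intervals faithfully represent every domination by earlier points, and that their number is bounded by retrievals rather than by $k$. I would handle this with the per-layer induction described above, resting on Lemma~\ref{lem:safe_merge}.
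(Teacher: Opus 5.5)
Your proposal follows the paper's proof essentially step by step. It uses the same preprocessing, the same layer-by-layer invariant built from merged dominated intervals and their fictive corners via Lemma~\ref{lem:safe_merge}, the same $O(\log n)$ charges, and the same ray-shooting treatment of Stages~2 and~3 with the final merge of $\Phi(F)$ and $\OPF(I)$. The overall route is right. However, the step you yourself flag as the main obstacle is not resolved by the argument you give.

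\textbf{The $+1$ charge.} Charging the extra $+1$ interval of a layer to ``an earlier retrieval'' is not an $O(1)$-per-retrieval charge. One retrieved point can, through repeated fictive-corner propagation, put an interval on every later layer. For example, take nested rectangles forming $\Theta(n)$ single-region layers, all dominated by the point retrieved from layer~$1$. Then $r(F,P)=O(1)$, yet running the \textbf{for} loop over all layers and propagating that one interval from layer to layer costs $\Omega(k)$. Saying ``empty layers are skipped'' does not explain how propagation through a skipped layer is avoided or paid for.

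The missing observation is that an empty active layer ends Stage~1:
\begin{itemize}
\item When layer $i$ is reached, no region of $F_i$ has been retrieved yet. So $L_i(F)=\emptyset$ means every region of $F_i$ is dominated by a retrieved point.
\item Every region of $F_{i+1}$ has its top-right corner strictly dominated by the top-right corner of some region of $F_i$. Hence all later active layers are empty as well.
\end{itemize}
You may therefore stop at the first empty active layer. Every layer you actually propagate from then has $|L_i(F)|\ge 1$, so its at most $|L_i(F)|+1\le 2|L_i(F)|$ intervals are paid for by its own retrievals. The paper's proof leaves this implicit too.

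\textbf{Soundness of fictive corners.} There is a smaller omission of the same kind. For the maintained list to equal $L_i(F)$, you also need the converse of what you prove: a fictive corner must dominate nothing on $\OPF(F_{i+1})$ beyond what actual retrieved points dominate. This does hold. Take overlapping intervals of $p_a$ and $p_b$, with $p_a$ to the upper left of $p_b$. The extra area of the corner $(p_b.x,p_a.y)$ lies outside the region dominated by the staircase of $\OPF(F_i)$, because a common point of the two intervals lies strictly below-left of $(p_a.x,p_b.y)$. Meanwhile, the staircase of $\OPF(F_{i+1})$ lies strictly inside that dominated region, so it never meets the extra area.
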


\section{A universally optimal reconstruction program for unit squares} 
\label{sec:squares}

Suppose now that the regions in $F^0$ are all unit squares.
Consider Stage~3 of \cref{alg:strategy_simple}:
we retrieve $R_a \in I^0$ if the upward ray from a point $p_i \in \OPF(F)$ hits $R_a$. If $R_a$ does not intersect the staircase of $\OPF(F)$, then
the top-right corner of the unit square $R_i \in F^0$ can not dominate $R_a$. However, this implies that~$R_a$ has an outgoing dependency to $R_i$, contradicting the fact that $R_a \in I_0$.
Thus, for unit squares, Stage~3 does not do any retrievals, so \cref{alg:strategy_simple} is equivalent to the two-stage \cref{alg:strategy_unit_square}.
Let $F^1 = F^0 - I^0$ and let $P^1$ denote points of $P$ corresponding to $F^1$, i.e. $F^1$ is the static family of regions in $F^0$ that are not independent.

\begin{algorithm}[tb]
    \caption{The unit square reconstruction strategy.}
    \label{alg:strategy_unit_square}
    \begin{algorithmic}[1]
        \STATE Preprocessing: $F = F^0 - I^0$
        \WHILE[Stage 1]{there exists a non-point region $R \in \OPF(F)$}
            \STATE Retrieve $R$.
        \ENDWHILE
        \FOR[Stage 2]{ all $R \in \zero$ }
            \IF{the interior (or the top or right facet) of $R$ intersects the staircase $\OPF(F)$}
                \STATE Retrieve $R$.
            \ENDIF
        \ENDFOR
    \end{algorithmic}
\end{algorithm}

\subparagraph{Universal Lower Bounds.} Recall that the universal running time of an algorithm $\cA$ is $\universal(\cA, F^0) = \max_{P \sim F^0} \runtime(\cA, F^0, P)$. We say $f(F^0)$ is a \emph{universal lower bound} if,
for every algorithm $\cA$, we have $\universal(\cA, F^0) \ge f(F^0)$.
Or, equivalently, if for every algorithm $\cA$ there is a $P \sim F^0$ such that $\runtime(\cA, P) \ge f(F^0)$.

\subparagraph{Stage 2.}
We first design an efficient implementation of Stage 2,
that runs in $O(|F^1|)$ time. The following lemma then implies that this is universally optimal.

\begin{lemma} \label{lem:linear_lower_bound}
    There is a universal lower bound of $\Omega(|F^1|)$. 
\end{lemma}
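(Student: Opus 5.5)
The approach is to show that, for every algorithm $\cA$, there is a realization $P \sim F^0$ on which $\cA$ must retrieve every region of $F^1$. Each retrieval costs at least one instruction, so this gives $\runtime(\cA,F^0,P) \ge |F^1|$. Since the running time of a reconstruction program is at least its number of retrievals, it suffices to exhibit, for each fixed $F^0$, a single realization $P$ with $r(F^0,P) \ge c\,|F^1|$ for a constant $c > 0$. Equivalently, we need a $P$ for which every region of $F^1$ is \emph{essential}: leaving any one of them unretrieved yields a family that is not finished.

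The construction uses the definition of $F^1$ directly. Every $R_a \in F^1$ has an outgoing dependency in $F^0$ to some $R_b$, certified by a point $q_a \in R_b$. This point lies above $\IPF(F^0)$, and either $q_a \in R_a$ or $q_a$ is dominated by the top-right corner of $R_a$.

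Choosing the points so that every dependency is live simultaneously is delicate. My first attempt is a "pessimistic" realization: place every $p_i$ at the bottom-left corner of $R_i$. Under this realization each $R_a \in F^1$ is, from the perspective of the other points, uncertain. Its top-right corner would either dominate or tie with points realized near $q_a$, while its bottom-left corner would not. Hence leaving $R_a$ unretrieved admits two completions with different $\precorder(\PF(\cdot))$, exactly as in Case 3 of the instance-optimality theorem and in Lemma~\ref{lem:not_finished_before_independent}.

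I expect the bottom-left placement to fail in general, because certifying points $q_a$ may sit high in $R_b$. In that case I would fall back on the instance-optimality already established. By the instance-optimality theorem, $r(F^0,P) \ge \frac15\,\#\{\text{retrievals of Algorithm~\ref{alg:strategy_simple} on } P\}$. It therefore suffices to find a $P$ on which Stage 1 retrieves all of $F^1$, that is, on which no region of $F^1$ ever becomes dominated before it reaches $\OPF(F)$.

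Realizing each point at the bottom-left corner of its square achieves this. A point at a bottom-left corner lies on the boundary of $\IPF(F^0)$ or inside it, since $\IPF(F^0)$ is bounded by the staircase of bottom-left corners. Consequently no retrieved point ever dominates a region that is not already dominated in $F^0$. By the preprocessing assumption, no region of $F^0$ lies inside $\IPF(F^0)$. Because $\IPF(F^0)$ is open, the corner may lie on its boundary without being dominated, so the regions of $F^1$ are never dominated. Stage 1 then keeps retrieving until $\OPF(F)$ consists only of points, which forces every region of $F^1$ to be retrieved.

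Combining the two steps gives $r(F^0,P) \ge |F^1|/5$, and hence $\universal(\cA,F^0) = \Omega(|F^1|)$ for every $\cA$.

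The main obstacle is checking that dominance really cannot occur. I need that a point exactly at a bottom-left corner, which lies on the closed boundary of the open region $\IPF(F^0)$, never strictly dominates another region's entire square. This uses that domination is defined via the open quadrant, together with the no-dominated-regions assumption. The argument also quietly needs that unit squares cannot nest in a way that breaks this, which I would verify carefully.
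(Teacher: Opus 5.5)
Your final argument is correct and is essentially the paper's own proof. Realize every point at the bottom-left corner of its region, so no retrieved point dominates anything and no region of $F^1$ ever becomes dominated; Stage 1 therefore retrieves all of $F^1$, and instance-optimality transfers this $\Omega(|F^1|)$ retrieval count, and hence the running-time bound, to every algorithm. Your closing worry about unit squares nesting is unnecessary: the argument only uses that no region of $F^0$ lies inside $\IPF(F^0)$, so it works for arbitrary rectangles.
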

\begin{proof}
    Given a family of regions $F^0$, construct $P$ by selecting the bottom-left corner for every region in $F^0$. Recall that $F^0$ does not contain any dominated regions.
    Running \cref{alg:strategy_unit_square} on $(F^0,P)$ will thus retrieve every region in $F^1$ during Stage 1.
    Since \cref{alg:strategy_unit_square} does an instance-optimal number of retrievals,
    any algorithm has to do $\Omega(|\hF|)$ retrievals,
    and hence spend $\Omega(|\hF|)$ time.
\end{proof}

The regions in $I^0$ are pairwise disjoint, and have no dependencies among each other.
Thus, for every region $R \in F^1$ there are only constantly many regions in $I^0$ that intersect the area dominated by the top-right corner of $R$.
In Stage 2, these are the only regions in $I^0$ that could intersect the staircase step of $\OPF(F)$ of a point $p \sim R$.
During preprocessing, we store $I^0$ in an array, sorted by decreasing y-coordinate.
For each region $R \in F^1$, we store a pointer to the predecessor of the top-right corner of $R$ in this array.
In Stage 2, we iterate over the points on $p \in \OPF(F)$.
For each such point $p$, we use the predecessor pointers to check the constantly many regions in $I^0$ that could intersect the staircase step at $p$.
This takes $O(|\OPF(F)|) \subseteq O(|F^1|)$ time.

\subparagraph{Stage 1.}
What remains is to design a universally optimal implementation of Stage 1.
We use a sweep line algorithm for this.

\begin{definition} \label{def:sweep_order}
    The \emph{sweep order} is the partial order defined on the (infinite) set of all unit squares and all points in $\mathbb{R}^2$ by, in decreasing order of priority:
    (1) sorting by decreasing y-coordinate of the point or top-right corner of the square, (2) placing unit squares before points, (3) sorting points by increasing x-coordinate. 
\end{definition}

We design a sweep line algorithm that constructs $\PF(P)$ from top-left to bottom-right
by keeping track of the last known point $p^* \in \PF(P)$:
Consider the regions $R \in F$ in sweep order (\cref{def:sweep_order}).
The region $R$ lies on $\OPF(F)$ if and only if $R$ is not dominated by $p^*$.
In this case, we retrieve $R$ if it is not a point, or set $p^* \gets R$ otherwise.
The resulting \cref{alg:slow_sweep_line} implements Stage 1 of \cref{alg:strategy_unit_square} in $O(n \log n)$ time.
In the next sections, we will speed up this sweep line algorithm.

\begin{algorithm}[tb]
    \caption{The generic sweep line algorithm.}
    \label{alg:slow_sweep_line}
    \begin{algorithmic}
        \STATE $p^* \gets (-\infty, \infty)$
        \STATE $T \gets $ binary search tree of $F^1$ in sweep order
        \STATE $L \gets$ empty linked list
        \FOR{ $R \in T$ in sweep order }
            \IF{$R$ is not dominated by $p^*$}
                \IF{$R$ is a point}
                    \STATE $p^* \gets R$
                    \STATE Append $p^*$ to $L$
                \ELSE
                    \STATE $p \gets$ retrieve $R$
                    \STATE Insert $p$ into $T$
                \ENDIF
            \ENDIF
        \ENDFOR
        \RETURN $L$
    \end{algorithmic}
\end{algorithm}

\subsection{Classifying regions}
To speed up the sweep line algorithm, 
consider an axis aligned unit grid, which splits the plane into (open) \emph{cells}.
We denote each cell by its bottom-left corner $(i, j) \in \bZ^2$.
By shifting $F^0$ (and $P$) slightly, we may assume that no corner of a region in $F^0$ lies on the grid lines.
(Then, the boundary of $\IPF$ intersects the grid lines transverally. Note however that retrieved points may fall on grid lines.)
We associate every region $R \in F^1$ to the cell containing the top-right corner of $R$.
We do \emph{not} change this association upon retrieving a region.
We denote by $F^{i, j}$ the regions in $F^1$ associated to the cell $(i, j)$.
A cell $(i, j)$ is \emph{non-empty} if $F^{i, j}$ is non-empty.

We further split these regions into three types, depending on the inner Pareto front, see Figure~\ref{fig:classification}.
If $\IPF(F^1)$ does not intersect the cell $(i, j)$, all regions in $F^{i, j}$ are of type A.
Otherwise, since we shifted $F^0$ slightly, the boundary of $\IPF(F^1)$ intersects the boundary of the cell $(i, j)$ at exactly two points.
Any region in $F^{i, j}$ that contains at least one such intersection point is of type B.
All other regions are of type C.
For $k \in \{A, B, C\}$, we denote by $F_k$ the set of regions in $F^1$ of type $k$,
and by $F_k^{i, j}$ the set of regions in $F^{i, j}$ of type $k$.
We also put $F_{AB} = F_A \cup F_B$ 
and $F_{AB}^{i,j} = F_A^{i, j} \cup F_B^{i, j}$,
and use $P_{AB}, P_{AB}^{i, j}$ for the corresponding points in $P$. Note that $F_A^{i, j} \cap F_B^{i, j} = \emptyset$, but we will use $F_{AB}^{i,j}$ to handle regions of type A and~B in the same way.
Finally, we denote $s_k^{i, j} = |F_k^{i, j}|$.

\begin{figure}[b]
    \centering
    \includegraphics{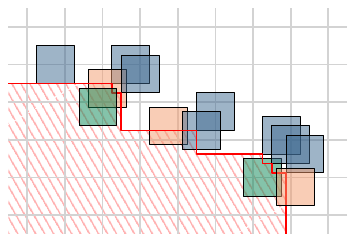}
    \caption{The regions are classified as type A (blue), B (orange), or C (green) depending on their location in relation to the grid and $\IPF(F)$.}
    \label{fig:classification}
\end{figure}

\subparagraph{Preprocessing the grid.}
During preprocessing, we compute $\IPF(F^1)$, overlay the unit grid, and determine the non-empty cells $(i, j)$ and their associated regions.
For each such cell, we classify the regions associated to it using the three types, and store regions of the same type and cell in an array.
We observe that, if $(i, j)$ is a non-empty cell, then at least one of the cells $(i, j), (i-1, j), (i, j-1), (i-1, j-1)$ intersects the boundary of $\IPF(F^1)$. Thus, the non-empty cells lie in a 2-cell wide band along $\IPF(F)$.
We store the non-empty cells in an array, sorted by $j-i$, which allows us to navigate between adjacent non-empty cells in constant time.
In a separate array, we store the non-empty cells sorted by $i+j$. This allows us to sweep these cells from top-right to bottom-left.

\subparagraph{Implementing Stage 1.}
We implement Stage 1 of \cref{alg:strategy_unit_square} by first computing the Pareto front of only the regions of type A and B (Section~\ref{sec:AB}),
and only then considering regions of type C (Section~\ref{sec:C}). This is justified by the following observation

\begin{observation} \label{obs:C_below_AB}
    A region $R$ of type A or B can never lie strictly below the staircase of $\OPF(F_{C})$.
    In particular, $R \in F_{AB}$ lies on $\OPF(F)$ if and only if it lies on $\OPF(F_{AB})$.
\end{observation}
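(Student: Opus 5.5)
The approach is a direct geometric comparison: show that every point of a type-C region lies strictly below the top-right corner of any type-A or type-B region, in the sense that no type-C region can contain a point dominating that corner. The "in particular" part then follows because domination of $R$ in $\OPF(F)$ is decided by the top-right corners (or retrieved points) of other regions. If none of these come from $F_C$, then $R$ is on $\OPF(F)$ exactly when it is on $\OPF(F_{AB})$.

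\textbf{Step 1: location of type-C regions.} Let $R' \in F_C^{i',j'}$. By definition, the cell $(i',j')$ meets $\IPF(F^1)$, but $R'$ contains neither of the two points where $\partial\IPF(F^1)$ crosses the cell boundary. The top-right corner of $R'$ lies in the cell, and $R'$ is a unit square. Hence $R'$ covers the part of the cell that is below-left of its top-right corner and reaches a full unit down and left. I will argue that if $R'$ stuck out above the staircase $\partial\IPF(F^1)$ beyond the cell, then, because the staircase is monotone and enters and leaves the cell exactly once, $R'$ would have to contain one of the two crossing points. This is the step I expect to be the main obstacle, since it needs a careful case analysis on which sides of the cell the staircase crosses. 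The expected conclusion is that the part of $R'$ above $\IPF(F^1)$ lies inside the cell $(i',j')$ and hugs $\partial\IPF$ locally. In particular, its top-right corner lies in a cell through which $\partial\IPF$ passes, and no point of $R'$ exceeds the cell's top or right side.

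\textbf{Step 2: type-A and type-B corners cannot be dominated by type-C points.} Let $R \in F_{AB}$ with top-right corner $c$, and suppose some point $q$ of a type-C region $R'$ dominates $c$. We may take $q$ to be the top-right corner of $R'$, or the retrieved point $p'$, which lies in $R'$.

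If $R$ is of type A, then $c$ lies in a cell not meeting $\IPF$. Since $R$ is not dominated in $F^0$, $c$ is above $\IPF$, so the whole cell of $c$ lies above $\IPF$. By Step 1, $q$ lies in a cell meeting $\IPF$ and strictly inside it. For $q$ to dominate $c$, the cell of $c$ would have to lie weakly below-left of the cell of $q$. But then that cell would sit below a staircase point, which is a contradiction: points in a cell weakly below-left of a cell crossed by the monotone staircase, with the crossing point itself not covered, cannot all be above $\IPF$. I will make this precise using the monotonicity of $\partial\IPF$.

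If $R$ is of type B, then $R$ contains a crossing point $x$ of $\partial\IPF$ with its own cell's boundary, so $c$ dominates or equals $x$. If $q$ dominates $c$, then $q$ strictly dominates a point of $\partial\IPF$ on a grid line. Combining this with Step 1, where $q$ is confined to its cell and near the staircase, $R'$ would have to contain a crossing point of its own cell, which contradicts its type being C.

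\textbf{Step 3: conclusion.} From Step 2, no point of any realization of $F_C$ dominates the top-right corner of a region in $F_{AB}$. Therefore $R$ cannot lie strictly below $\OPF(F_C)$. For the equivalence: $R\in\OPF(F)$ clearly implies $R\in\OPF(F_{AB})$. Conversely, if $R\in\OPF(F_{AB})$, the only possible dominators of $c$ in $F$ come from $F_C$, which Step 2 excludes.
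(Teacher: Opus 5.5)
Your plan is sound and is the natural argument behind this observation, which the paper states without proof. However, the step you postpone is essentially the whole proof, and you formulate it too weakly for how Step~2 uses it.

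Step~1 only controls points of $R'$ itself. In Step~2, the points you must locate are the corner $c$ of $R$ and, for type~B, the crossing point $x_0\in R$. Both are merely strictly dominated by $c'$ and need not lie in $R'$, so ``combining this with Step~1'' does not yet give that $R'$ contains a crossing point of its own cell. The fix is to prove the confinement for the whole quadrant below-left of $c'$, and this needs no case analysis:

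\emph{If $z\le c'$ coordinatewise and $z\notin\IPF(F^1)$, then $z.x>i'$ and $z.y>j'$, so $z$ lies in the open cell $(i',j')$.}

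The cell meets the open lower set $\IPF(F^1)$, so its corner $(i',j')$ lies in $\IPF(F^1)$. Suppose $z.x\le i'$. Then $(i',z.y)\ge z$ is not in $\IPF(F^1)$, which forces $z.y>j'$; otherwise $(i',z.y)\le(i',j')$ would lie in $\IPF(F^1)$. So, going up the left side of the cell from $(i',j')$ to $(i',z.y)$, the boundary $\partial\IPF(F^1)$ is crossed at a height in $(j',z.y]\subseteq(j',c'.y]$. Since $c'.x-1<i'<c'.x$ and $c'.y-1<j'$, this crossing point lies in $R'$, contradicting type~C. The case $z.y\le j'$ is symmetric, using the bottom side.

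With this lemma, Step~2 takes two lines and needs no appeal to monotonicity beyond the above.
\begin{itemize}
\item A point $q\in R'$ dominating $c$ gives $c<q\le c'$ with $c\notin\IPF(F^1)$. Hence $c$ lies in the cell $(i',j')$, which meets $\IPF(F^1)$, so $R$ is not of type~A.
\item If $R$ is of type~B with crossing point $x_0\in R$, then $x_0\le c<c'$ and $x_0\notin\IPF(F^1)$, being a boundary point of an open set. So $x_0$ would lie in an open cell, which is impossible for a point on a grid line.
\end{itemize}

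Finally, your Step~3 argues through the corner $c$, so it proves the equivalence only for unretrieved $R$. That is the case the paper actually uses. For an already retrieved region of type~A or~B the equivalence can fail: its point may fall into a type-C cell and be dominated there by a retrieved type-C point. So state the claim for unretrieved regions only.
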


\subsection{Regions of type A and B}\label{sec:AB}
For regions of type A and B, we process the grid cells from top-right to bottom-left,
and apply a modified version of \cref{alg:slow_sweep_line} to each cell $(i, j)$,
with the goal of computing the section of the Pareto front that lies inside this cell.
Observe that, since the regions are unit squares,
a point in $P$ that lies in the cell $(i, j)$ has to stem from a region in $F^{i, j}$, $F^{i+1, j}$, $F^{i, j+1}$, or $F^{i+1, j+1}$.
Among the points outside the cell $(i, j)$ that dominate some part of this cell,
we only care about the rightmost point above $(i, j)$, and the topmost point right of $(i, j)$.
Thus, we apply the sweep line algorithm to these regions and points, see \cref{alg:cell_sweep_line}. 

\begin{algorithm}[tb]
    \caption{The cell-based sweep line algorithm.}
    \label{alg:cell_sweep_line}
    \begin{algorithmic}
        \FORALL{cells $(i, j)$ in top-right to bottom-left order}
            \STATE $p_{\ell} \gets $ rightmost point among $\{p \in P_{AB} \mid p.x \ge i \land p.y \ge j+1\}$
            \STATE $p_r \gets $ topmost point among $\{p \in P_{AB} \mid p.x \ge i+1 \land p.y \ge j\}$
            \STATE $S \gets F_{AB}^{i, j} \cup \{p_\ell, p_r\} \cup \{p \in P_{AB}^{i+1, j} \cup P_{AB}^{i, j+1} \cup P_{AB}^{i+1, j+1} \mid p \text{ was retrieved}\} $
            \STATE Apply \cref{alg:slow_sweep_line} to $S$.
            \STATE Store the part of the Pareto front that lies inside the cell $(i, j)$.
        \ENDFOR
        \STATE To get $\PF(P^1)$, concatenate the stored Pareto fronts cell-by-cell.
    \end{algorithmic}
\end{algorithm}

\begin{theorem} \label{the:cell_sweep_correct}
    \cref{alg:cell_sweep_line} retrieves only regions of type A, B that lie on $\OPF(F)$,
    and only terminates once no such regions exist.
\end{theorem}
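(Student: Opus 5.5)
The plan is to show that processing cell $(i,j)$ in \cref{alg:cell_sweep_line} simulates exactly the part of the generic sweep (\cref{alg:slow_sweep_line}) on $F_{AB}$ that concerns regions and points whose sweep key lies in the cell. \cref{obs:C_below_AB} then transfers the statement from $\OPF(F_{AB})$ to $\OPF(F)$. The argument is an induction over the cells in the top-right to bottom-left order. The invariant is that when cell $(i,j)$ is processed, every region of type A or B associated to a cell $(i',j')$ with $i'\ge i$, $j'\ge j$, $(i',j')\neq(i,j)$ has already been retrieved or is dominated by a retrieved point, and that the stored fronts of those cells are correct.

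\emph{Soundness.} Let $R\in F^{i,j}_{AB}$ be retrieved while the algorithm sweeps $S$. I will argue that $R$ is not dominated by any retrieved point of $P_{AB}$, since that is exactly what membership in $\OPF(F_{AB})$ requires for a non-point region. Any point dominating the top-right corner of $R$ lies to the upper right of a point inside cell $(i,j)$. Such a point is of one of three kinds: it lies in one of the cells $(i,j),(i+1,j),(i,j+1),(i+1,j+1)$, it lies in the quadrant $x\ge i, y\ge j+1$, or it lies in the quadrant $x\ge i+1, y\ge j$. The unit-square property means a point inside cell $(i,j)$ can only stem from $F^{i,j}$, $F^{i+1,j}$, $F^{i,j+1}$, or $F^{i+1,j+1}$.

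For the two quadrants I will show that $p_\ell$ and $p_r$ dominate every part of cell $(i,j)$ that any point in the respective quadrant dominates. This holds because inside the cell the dominated region of a quadrant point is determined only by its $x$-coordinate (for points above) or its $y$-coordinate (for points to the right). Points outside these quadrants that are not retrieved-so-far in $S$ would be from neighbouring cells still unprocessed. By the sweep order those have top-right corners lower, or to the left, and so they cannot dominate the region.

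One subtle point is that retrieved points from cell $(i,j)$ itself are inserted into $S$ during the sweep. Points from regions of later cells that fall into cell $(i,j)$ have not been retrieved yet, and I will argue they cannot matter. If such a point were to dominate $R$, it would have to come from a region whose top-right corner dominates that of $R$, and that contradicts the sweep order.

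\emph{Completeness.} Here I must show that on termination no non-point region of type A or B remains on $\OPF(F_{AB})$. Take such a region $R\in F^{i,j}_{AB}$. When cell $(i,j)$ was processed, \cref{alg:slow_sweep_line} on $S$ considered $R$ in sweep order. It skipped $R$ only if some $p^*$ in $S$ dominated $R$, and $p^*$ is a genuinely retrieved point of $P_{AB}$, so $R$ is dominated in $F$. Otherwise $R$ was retrieved.

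I also need $p_\ell$ and $p_r$ to be computable correctly at that moment. This means all points of $P_{AB}$ in those quadrants that lie on the front must already be retrieved. That follows from the induction hypothesis, because such points stem from regions associated to cells with larger $i+j$, or to cells adjacent in the band, which are processed earlier in the $i+j$ order.

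\emph{Main obstacle.} The hardest step is this ordering argument. A point in quadrant $x\ge i, y\ge j+1$ may stem from a region of cell $(i-1,j+1)$ or $(i,j+1)$, whose top-right corner could be outside the processed set. I will handle this by noting that a point relevant to cell $(i,j)$ lying in cell $(i-1+\cdot,\,j+1)$ comes from a region with top-right corner in a cell whose index sum is at least $i+j$. Ties between cells with equal $i+j$ can only interact through domination along the anti-diagonal, and I will argue that this is impossible strictly. I would try to make this precise using the fact that the non-empty cells form a 2-wide band along $\IPF$, together with the type A/B restriction.
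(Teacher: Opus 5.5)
\textbf{Verdict: there is a gap, though it is easily closed.} Your overall structure matches the paper's proof:
\begin{itemize}
\item every region of $F_{AB}^{i,j}$ is swept in its own cell, so it is either retrieved or dominated by a retrieved $p^*$;
\item $p_\ell$ and $p_r$ subsume all points in the two quadrants;
\item points inside cell $(i,j)$ stem only from $F^{i,j}$ and the three neighbouring cells, so the relevant ones are in $S$;
\item \cref{obs:C_below_AB} transfers the statement from $\OPF(F_{AB})$ to $\OPF(F)$.
\end{itemize}
However, you leave the step you call the ``main obstacle'' unproven, and your diagnosis of it is wrong.

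A region associated to cell $(i-1,j+1)$ has its top-right corner at $x<i$. So none of its points lie in the quadrant $x\ge i$, and there is nothing ``outside the processed set''. Ties in $i+j$, the 2-wide band, and the type A/B restriction are all irrelevant.

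The missing fact is a one-line monotonicity argument. Suppose a point $q$ dominates a region $R$ whose top-right corner $r$ lies in the open cell $(i,j)$. Then $q.x>r.x>i$ and $q.y>r.y>j$. Any region containing $q$ has its top-right corner coordinatewise at least $q$, and not on a grid line. So that region is associated to a cell $(i',j')$ with $i'\ge i$ and $j'\ge j$. Hence either $(i',j')=(i,j)$, or $i'+j'\ge i+j+1$ and that cell was processed strictly earlier. Equal-sum cells such as $(i+1,j-1)$ can never contribute a dominating point, because their corners have $y<j$. This fact is exactly what justifies your invariant, which you already state for precisely the cells with $i'\ge i$, $j'\ge j$. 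It is also what the paper's terse ``since we already processed $(i',j')$'' relies on.

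\textbf{A second, smaller inaccuracy.} For a non-point region, membership in $\OPF(F_{AB})$ means not being dominated by the top-right corner of \emph{any} other current region. Being undominated by retrieved points alone is not the definition, so you must also exclude unretrieved regions.
\begin{itemize}
\item An unretrieved region from another cell lies, by the monotonicity fact, in an already-processed cell. By your invariant it is dominated by a retrieved point, which then also dominates $R$.
\item An unretrieved region from the same cell whose corner dominates $r$ precedes $R$ in sweep order. So it was either retrieved or dominated by $p^*$ before $R$ was reached; in the latter case $p^*$ also dominates $R$.
\end{itemize}
Likewise, you do not need that ``all points on the front in those quadrants are already retrieved''. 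Since $p_\ell$ and $p_r$ are taken over retrieved points, what you actually need is that no unretrieved region can dominate $R$, and that again is the monotonicity fact. With these repairs your argument becomes essentially the paper's.
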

\begin{proof}
    Every region $R \in F_{AB}^1$ participates in the execution of \cref{alg:slow_sweep_line} in the cell associated to $R$.
    There, the region $R$ either gets retrieved, or it is dominated by some point~$p^*$ and can thus never lie on $\OPF(F)$.
    Thus, when \cref{alg:cell_sweep_line} terminates, there are no non-point regions of type A, B that lie on $\OPF(F)$.

    Suppose for contradiction that, when processing cell $(i, j)$, \cref{alg:slow_sweep_line} retrieves a region $R$
    that does not lie on $\OPF(F)$. Then, there is a region $R'$ on $\OPF(F)$ and a point $p' \in R'$ such that $p'$ dominates $R$.
    By correctness of \cref{alg:slow_sweep_line}, we have $R' \notin S$.
    Hence, $R'$ is associated to a cell $(i', j') \ne (i, j)$. Since we already processed $(i', j')$, the region $R'$ is a point.
    If $R'$ lies on or above $y=j+1$, then $p_\ell$ dominates $R$.
    Otherwise, $R'$ lies on or to the right of $x=i+1$, and $p_r$ dominates $R$.
    In either case, we get a contradiction which shows that \cref{alg:cell_sweep_line}  retrieves only regions of type A or B that lie on $\OPF(F)$.
\end{proof}

To implement \cref{alg:cell_sweep_line} efficiently,
we should only considers cells $(i, j)$ for which $F_{AB}^{i, j} \cup F_{AB}^{i+1, j} \cup F_{AB}^{i, j+1} \cup F_{AB}^{i+1, j+1}$ is non-empty. There are $O(|F_{AB}|)$ such cells.
These cells lie in a strip around $\IPF(F^1)$ that is at most 3 cells wide,
and we can propagate $p_\ell, p_r$ along this strip in $O(|F^1|)$ time in total.

\begin{theorem} \label{the:cell_running_time}
    Put $t^{i, j} = s_{AB}^{i, j} + s_{AB}^{i+1, j} + s_{AB}^{i, j+1} + s_{AB}^{i+1, j+1}$.
    The running time of \cref{alg:cell_sweep_line} is $ O\Big(|F^1| + \sum_{(i, j) \in \mathbb{Z}^2} t^{i, j} \log(2+t^{i, j})\Big)$.
\end{theorem}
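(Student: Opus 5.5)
The proof splits the running time of \cref{alg:cell_sweep_line} into two parts. The first is the global bookkeeping: iterating over relevant cells and propagating $p_\ell, p_r$. The second is the cost of the local invocations of \cref{alg:slow_sweep_line}. The first part is handled in the discussion preceding the statement. Only cells $(i,j)$ for which $F_{AB}^{i,j} \cup F_{AB}^{i+1,j} \cup F_{AB}^{i,j+1} \cup F_{AB}^{i+1,j+1}$ is non-empty are visited, and there are $O(|F_{AB}|) \subseteq O(|F^1|)$ of them. They lie in the 3-cell-wide strip around $\IPF(F^1)$, which is stored in arrays sorted by $i+j$ and $j-i$, so moving between neighbouring cells costs $O(1)$.

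For $p_\ell$ and $p_r$, I would maintain for each cell the rightmost retrieved point in $P_{AB}$ lying above row $j+1$ and right of column $i$, and symmetrically the topmost one. Each cell's value is obtained in $O(1)$ from the values of cells $(i,j+1)$ and $(i+1,j)$, together with the points retrieved in the constantly many neighbouring cells. Here I must check that any point $p$ with $p.x \ge i$ and $p.y \ge j+1$ is either in a neighbouring cell or already accounted for in the propagated value of an adjacent processed cell. This holds because the candidate sets for cell $(i,j)$ are unions of those for $(i,j+1)$ and $(i+1,j)$ plus points in row $j+1$ at columns $\ge i$, and the latter are covered by the recursion along row $j+1$. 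Cells outside the strip carry trivial values, so the recursion can skip them. The total cost of this part is $O(|F^1|)$.

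For the second part, I fix a cell $(i,j)$ and bound the size of the set $S$. It contains $s_{AB}^{i,j}$ regions, the two points $p_\ell, p_r$, and retrieved points from the three neighbouring cells, which number at most $s_{AB}^{i+1,j}+s_{AB}^{i,j+1}+s_{AB}^{i+1,j+1}$. Hence $|S| \le t^{i,j}+2$ initially. During the run, every retrieval inserts exactly one new point into $T$ and removes one region, and each region of $F_{AB}^{i,j}$ is retrieved at most once. So the tree never holds more than $t^{i,j}+2$ elements, and the total number of insertions and iterations is $O(t^{i,j}+2)$. Building the sweep-order tree by sorting costs $O((t^{i,j}+2)\log(2+t^{i,j}))$. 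Each iteration does a constant-time domination test against $p^*$ and possibly one insertion, each costing $O(\log(2+t^{i,j}))$. Storing the part of the front inside the cell takes linear time in $|S|$.

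Summing over the visited cells gives $\sum (t^{i,j}+2)\log(2+t^{i,j})$. The additive $+2$ terms contribute $O(\#\text{cells}) = O(|F^1|)$, and cells with $t^{i,j}=0$ are either skipped or cost $O(1)$ each. Concatenating the per-cell fronts costs time linear in the output, which is $O(|F^1|)$.

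The step I expect to be the main obstacle is the propagation of $p_\ell$ and $p_r$ in $O(1)$ amortized time per cell. It requires care that the dependency order (top-right to bottom-left by $i+j$) guarantees that cells $(i+1,j)$ and $(i,j+1)$, as well as $(i+1,j+1)$, are finished before $(i,j)$. It also requires care that gaps in the strip do not break the recursion. I would handle gaps by treating missing cells as having no retrieved points and taking the propagated value from the nearest processed strip cell in the same row or column.
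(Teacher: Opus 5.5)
Your proof is correct and follows the same route as the paper. You bound $|S|\le t^{i,j}+2$ per visited cell, charge $O(t^{i,j}\log(2+t^{i,j}))$ for sorting and running \cref{alg:slow_sweep_line}, and charge the propagation of $p_\ell,p_r$ along the 3-cell-wide strip $O(|F^1|)$ in total; the paper only asserts that last step, so your recurrence through $(i,j+1)$, $(i+1,j)$ and $O(1)$ nearby cells adds detail rather than a different argument. (Minor: the $+2$ terms cost $O(\log(2+t^{i,j}))$ per cell, not $O(1)$, but every visited cell has $t^{i,j}\ge 1$, so they are absorbed into the main sum.)
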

\begin{proof}
    When processing the cell $(i,j)$, we have $|S| \le t^{i, j}$, so constructing the set $S$
    and running \cref{alg:slow_sweep_line} takes $O(t^{i, j} \log (2+t^{i, j}))$ time.
    Updating $p_{\ell}, p_r$ with the points in $P_{AB}^{i, j}$ takes $O(t^{i, j})$ time.
    Propagating $p_{\ell}, p_r$ takes $O(|F^1|)$ time.
\end{proof}

We now show a matching universal lower bound.

\begin{lemma} \label{lem:lower_bound_tool}
    Let $Q$ be a set of points that lie above $\IPF(F^1)$ such that no point in $Q$ dominates another point in $Q$.
    For each region $R \in \hF$, pick at most one point in $Q$ that lies in the interior of $R$.
    Let $F_q$ denote the set of regions that pick a particular point $q \in Q$.
    Then there are at least
    \[
    O_Q := \prod_{q \in Q} |F_q|!
    \]
    possible outputs. Moreover, there is a universal lower bound of
    \[
    \Omega\Big(\sum_{q \in Q} |F_q| \log(2+|F_q|)\Big).
    \]
\end{lemma}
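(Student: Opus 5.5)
We construct, for each $q \in Q$, a family of realizations in which the points of the regions in $F_q$ all sit in a tiny neighbourhood of $q$ and form a staircase in an arbitrary prescribed order. Fix a small $\varepsilon>0$ such that, for every $q\in Q$, the open $\varepsilon$-ball $B_q$ around $q$ lies in the interior of every region of $F_q$ and above $\IPF(F^1)$. We also require that no point of $B_q$ is dominated by a point of $B_{q'}$ for $q\neq q'$. This is possible because the points of $Q$ are pairwise non-dominating, and strict domination is an open condition; ties in a coordinate can be separated by perturbing along the anti-diagonal.

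For a choice of permutations $(\pi_q)_{q\in Q}$, place the points of $F_q$ on a short segment of slope $-1$ through $q$ inside $B_q$, in the order $\pi_q$. For every region $R\in F^1$ that picked no point, place $p_R$ at the bottom-left corner of $R$. The points of independent regions are placed anywhere fixed; only $F^1$ matters for counting. We then want to argue that the points on the anti-diagonal segments all lie on $\PF(P)$ and appear consecutively in the order $\pi_q$. Hence distinct tuples $(\pi_q)$ yield distinct partial orders $\prec(\PF(P))$, which gives $O_Q$ outputs.

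\textbf{Main obstacle.} The hard step is showing that these segment points are not dominated by anything else. Points from different balls do not dominate each other by the choice of $\varepsilon$. The real issue is a bottom-left corner of some other region dominating a segment point. By definition of $\IPF$, any point above $\IPF(F^1)$ is not dominated by the bottom-left corner of any region of $F^1$, since the region strictly below the staircase of bottom-left corners is exactly $\IPF(F^1)$. So choosing $\varepsilon$ so that $B_q$ stays above $\IPF(F^1)$ settles this. For independent regions, I would instead place their points at their bottom-left corners as well and use $\IPF(F^0)\supseteq \IPF(F^1)$; if needed, I would restrict attention to the order induced on $F^1$-indices, which already differs between distinct tuples.

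\textbf{Lower bound.} I would use the standard information-theoretic argument. A reconstruction program outputs a pointer structure determining $\prec(\PF(P))$, so any comparison- or decision-based RAM program's execution tree, restricted to these inputs, must have at least $O_Q$ distinguishable leaves. Each step branches boundedly, which makes the worst-case depth $\Omega(\log O_Q)=\Omega(\sum_q \log(|F_q|!))=\Omega(\sum_q |F_q|\log |F_q|)$ by Stirling. To get the form $|F_q|\log(2+|F_q|)$, we add the $\Omega(|F_q|)$ term. That term follows from \cref{lem:linear_lower_bound}, since $\sum_q |F_q|\le |F^1|$, and it covers the small cases $|F_q|\le 2$. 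Taking the maximum of the two bounds gives the claim.
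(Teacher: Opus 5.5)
Your plan follows the paper's argument: cluster the points of $F_q$ near $q$ so they can realize any of $|F_q|!$ orders, put all other points on bottom-left corners, and combine $\Omega(\sum_q \log |F_q|!)$ with \cref{lem:linear_lower_bound}. However, the placement step fails in the generality the lemma needs. You require an open ball $B_q$ around $q$ that lies above $\IPF(F^1)$. Such a ball exists only if $q$ is outside the closure of $\IPF(F^1)$. In the paper $\IPF$ is open and ``above'' means ``not in $\IPF$''. Moreover, \cref{lem:universal_bound_B} applies the lemma with $Q$ equal to the intersection points of $\partial\IPF(F^1)$ with the grid lines. Such a $q$ lies in the relative interior of a horizontal or vertical staircase edge (staircase vertices are off the grid), so every ball around it meets $\IPF(F^1)$. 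A slope $-1$ segment through $q$ then has its lower-right half (horizontal edge) or its upper-left half (vertical edge) inside $\IPF(F^1)$. Every point there is strictly dominated by the bottom-left corner $b_R$ of some $R\in F^1$, hence by $p_R$. So it never appears on the front, and the $|F_q|!$ count collapses. Breaking ties by shifting centres along the anti-diagonal has the same problem for boundary points.

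The repair is the paper's choice: put the cluster in the open top-right quadrant of $q$. Since $\IPF(F^1)$ is the union of the open quadrants $\{p : p<b_R\}$, nothing weakly above-right of a point outside $\IPF(F^1)$ lies in it. Concretely, order $Q$ from top-left to bottom-right as $q_1,\dots,q_N$. Place the points of $F_{q_k}$ at $q_k+(k\eta+t,\,(N+1-k)\eta-t)$ for distinct $|t|<\eta/2$. For small $\eta$ these points lie in the interiors of the regions of $F_{q_k}$ and avoid $\IPF(F^1)$. They are also pairwise non-dominating across different $k$, even when points of $Q$ share a coordinate.

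Second, your handling of independent regions is backwards. The inclusion $\IPF(F^0)\supseteq\IPF(F^1)$ points the wrong way: you need the cluster points to avoid $\IPF(F^0)$, and that does not follow from avoiding $\IPF(F^1)$. Restricting to the order on $F^1$-indices does not help either, because a dominated cluster point is simply missing from the front. (The paper's proof passes over this point as well.) For unit squares it can be argued directly. Suppose the bottom-left corner $b_I$ of some $R_I\in I^0$ strictly dominated a point $p\ge q$, where $q$ is in the interior of $R\in F_q$. Then $b_I>q>b_R$, so the top-right corner of $R_I$ strictly dominates the top-right corner of $R$. The latter lies outside $\IPF(F^0)$ because no region of $F^0$ is contained in $\IPF(F^0)$. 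This yields an outgoing dependency from $R_I$ to $R$, contradicting independence. With these two repairs, the rest of your argument goes through as in the paper: distinct tuples of permutations give distinct orders, and the information-theoretic bound combines with the $\Omega(|F^1|)$ term.
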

\begin{proof}
    For any region that didn't pick a point in $Q$, place its point on the bottom-left corner of the region. These points cannot dominate any points in $Q$, since all points in $Q$ are above the inner Pareto front.
    For a single point $q \in Q$, consider the regions in $F_q$ and place their points near $q$, but in the top-right quadrant of $q$. This way, these points can form at least $|F_q|!$ different Pareto fronts.
    Moreover, since no point in $Q$ strictly dominates another point in $Q$,
    we can do this placement for all point $q$ simultaneously, to get $O_Q$ different possible outputs.
    By an information theoretic lower bound, we get a universal lower bound of $\Omega (\sum_{q \in Q} \log(|F_q|!) )$.
    \cref{lem:linear_lower_bound} implies an universal lower bound of $\Omega(\sum_{q \in Q} |F_q|)$, and
    \[
    \sum_{q \in Q} \log(|F_q|!) +  \sum_{q \in Q} |F_q| \in \Omega\Big(\sum_{q \in Q} |F_q| \log(2+|F_q|)\Big).\qedhere
    \]
\end{proof}

\begin{lemma} \label{lem:universal_bound_A}
    There is a universal lower bound of $\displaystyle\Omega\Big(\sum_{(i, j) \in \mathbb{Z}^2} s_A^{i, j} \log(2+s_A^{i, j})\Big).$
\end{lemma}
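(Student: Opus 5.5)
We reduce to Lemma~\ref{lem:lower_bound_tool} by choosing, for the type-A regions, a suitable antichain $Q$ of points above $\IPF(F^1)$ and an assignment of regions to points. For a non-empty cell $(i,j)$ with $s_A^{i,j}>0$, the cell does not meet $\IPF(F^1)$, and since it contains a top-right corner of some region of $F^1$ (which is above the inner Pareto front, as $F^0$ contains no dominated regions), the whole open cell lies above $\IPF(F^1)$. Every region $R\in F_A^{i,j}$ is a unit square whose top-right corner lies in the open cell $(i,j)$. Hence $R$ contains the grid point $(i,j)$ in its interior: its bottom-left corner lies in cell $(i-1,j-1)$, strictly below-left of $(i,j)$. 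So we want to take $q^{i,j}$ to be a point just above-right of $(i,j)$ inside the cell, which then lies in the interior of every region of $F_A^{i,j}$ once chosen close enough to $(i,j)$. Since there are finitely many regions, there is an $\varepsilon>0$ such that $q^{i,j}=(i+\varepsilon,j+\varepsilon)$ works for all of them.

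The main obstacle is that Lemma~\ref{lem:lower_bound_tool} requires $Q$ to be an antichain, while the cells containing type-A regions need not be. I would handle this by splitting the cells into a constant number of classes such that within each class the chosen points are pairwise non-dominating, then applying the lemma to each class and taking the maximum. Because the sum over all cells is at most a constant times the largest class sum, this costs only a constant factor.

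To find such a split, observe that all non-empty cells lie in the 2-cell-wide band along $\IPF(F^1)$, and that cells with type-A regions lie above it. If $q^{i,j}$ dominates $q^{i',j'}$, then $i>i'$ and $j>j'$. I would try to argue that $(i',j')$ then lies below the region where type-A cells can occur: the top-right corner of some region in $F^{i',j'}$ is above $\IPF$, yet cell $(i,j)$ sits strictly above-right and must also touch the band. This suggests that domination forces $i-i'$ and $j-j'$ to be bounded, and even that only diagonal-neighbour-like pairs can occur. If the strip is not monotone enough for that, I would fall back to a partition by residues, for instance classifying cells by $(i \bmod 3, j\bmod 3)$, or by $(i+j)\bmod c$ together with the ordering by $j-i$, so that any two cells in the same class are too far apart along the band to dominate one another. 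I expect verifying this bounded-domination claim from the geometry of the band to be the hardest part.

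Finally, each region belongs to exactly one cell and so picks at most one point. Lemma~\ref{lem:lower_bound_tool} applied to each class gives $\Omega(\sum s_A^{i,j}\log(2+s_A^{i,j}))$ over that class, and taking the maximum over the constant number of classes yields the claimed universal lower bound.
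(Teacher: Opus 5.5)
\textbf{There is a gap: the antichain hypothesis of Lemma~\ref{lem:lower_bound_tool} is never verified.} Your reduction otherwise matches the paper's. You choose one point per type-A cell at, or just above-right of, the grid point $(i,j)$. That point lies in the interior of every region of $F_A^{i,j}$ and outside $\IPF(F^1)$, and you then invoke Lemma~\ref{lem:lower_bound_tool}. But you leave the antichain step as ``I would try to argue\ldots'' and call it ``the hardest part''. Your fallback does not close it either. Two cells with $i\equiv i'$ and $j\equiv j' \pmod 3$, such as $(i,j)$ and $(i+3,j+3)$, can a priori still be in a dominance relation. ``Too far apart along the band'' is not an argument. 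Ruling such pairs out needs exactly the geometric fact you have not established, so as written the lemma's hypothesis is never checked.

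\textbf{The missing fact.} The type-A points already form an antichain, so no partition into classes is needed. Suppose $q^{i,j}$ dominates $q^{i',j'}$, so $i\ge i'+1$ and $j\ge j'+1$. Take any $R\in F^{i,j}$, with top-right corner $(x,y)$, where $x\in(i,i+1)$ and $y\in(j,j+1)$. Its bottom-left corner $(x-1,y-1)$ satisfies $x-1>i-1\ge i'$ and $y-1>j-1\ge j'$. So it strictly dominates $(i',j')$, and hence every point of the open cell $(i',j')$ sufficiently close to $(i',j')$. Each such point is dominated by every possible $p\in R$, so it lies in $\IPF(F^1)$. Then cell $(i',j')$ meets $\IPF(F^1)$ and cannot be of type A, which is a contradiction.

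\textbf{Completing the proof.} With this, Lemma~\ref{lem:lower_bound_tool} applies directly with $F_q=F_A^{i,j}$, which is the paper's one-line proof. Your $\varepsilon$-shift is harmless but unnecessary. The grid point $(i,j)$ is already in the interior of each region of $F_A^{i,j}$. It is also not in the open set $\IPF(F^1)$, because that set misses the open cell whose corner $(i,j)$ is.
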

\begin{proof}
    Let $Q$ be set of points $(i, j)$ such that there are regions of type A associated to the cell $(i, j)$. The result follows from \cref{lem:lower_bound_tool}.
\end{proof}

\begin{lemma} \label{lem:universal_bound_B}
    There is a universal lower bound of $ \displaystyle\Omega\Big(\sum_{(i, j) \in \mathbb{Z}^2} s_B^{i, j} \log(2+s_B^{i, j})\Big). $
\end{lemma}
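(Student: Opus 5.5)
We again invoke \cref{lem:lower_bound_tool}; the work is in choosing a suitable antichain $Q$ above $\IPF(F^1)$.

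For each cell $(i,j)$ with $s_B^{i,j}>0$, the boundary of $\IPF(F^1)$ crosses the cell boundary at exactly two points, say $u^{i,j}$ (top-left) and $v^{i,j}$ (bottom-right). Every type B region of this cell contains at least one of them. So for each cell, either at least $s_B^{i,j}/2$ regions contain $u^{i,j}$, or at least $s_B^{i,j}/2$ contain $v^{i,j}$. We pick the more popular point $w^{i,j}$ and let exactly those regions pick it.

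These points lie on the boundary of $\IPF$, not strictly above it, and \cref{lem:lower_bound_tool} needs $Q$ above $\IPF(F^1)$ and in the interior of the picking regions. We therefore perturb each $w^{i,j}$ by a tiny $\varepsilon$ up and to the right, to a point $q^{i,j}$. This point is above $\IPF(F^1)$. Since the general-position shift puts no corner on grid lines and the crossing is transversal, $q^{i,j}$ stays in the interior of every region that contained $w^{i,j}$, for $\varepsilon$ small. (A region containing $w^{i,j}$ may contain it on its boundary, but then it contains an open neighbourhood intersected with the top-right side, because its top-right corner lies strictly inside the cell, up-right of the crossing.)

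The main obstacle is ensuring that no point of $Q$ dominates another. The points $w^{i,j}$ all lie on the staircase boundary of $\IPF(F^1)$, and distinct points on one staircase are mutually non-dominating. The only danger is ties after perturbation, or the same boundary point being shared by two adjacent cells: a grid crossing point is on the boundary of two or four cells. If two cells choose the same crossing point, we merge their sets $F_q$; this only increases $|F_q|$ and hence the bound. After perturbing every distinct boundary point by the same $\varepsilon$ along the diagonal, the antichain property of a staircase is preserved. If strict domination is an issue for points with equal coordinates, we use the paper's convention that domination is in the open quadrant.

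Each region of type B is associated with a single cell, so it picks at most one point. \cref{lem:lower_bound_tool} then gives a lower bound of $\Omega(\sum_q |F_q|\log(2+|F_q|))$. For each cell this is at least $\Omega(s_B^{i,j}/2\cdot\log(2+s_B^{i,j}/2))$, which is $\Omega(s_B^{i,j}\log(2+s_B^{i,j}))$. Merging sets only helps, by superadditivity of $x\log(2+x)$. Summing over cells yields the claim.
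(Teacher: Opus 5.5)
Your proposal is the paper's argument: take $Q$ to be the crossings of $\partial\IPF(F^1)$ with the grid, let each type-B region pick one of the two crossings of its cell, use pigeonhole to put at least $\lceil s_B^{i,j}/2\rceil$ regions on one crossing, and apply \cref{lem:lower_bound_tool}. Your $\varepsilon$-shift and antichain check are extra care the paper skips, with one caveat that the paper shares: your parenthetical is false when the crossing lies on a region's top (or right) facet. In that case the corner is level with the crossing, so $w^{i,j}+(\varepsilon,\varepsilon)$ leaves the region; the paper's one-line proof silently ignores this degenerate case too. It can be patched by placing the points of $F_q$ along the segment of $\partial\IPF(F^1)$ through $q$, rather than strictly up-right of $q$.
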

\begin{proof}
    Let $Q$ be the set of intersection points between $\IPF(F^1)$ and the unit grid.
    The result follows from \cref{lem:lower_bound_tool},
    where every cell of type B pick an intersection point of the cell it is associated to.
    Since there are two possible intersection points per cell,
    one of these intersection points get picked by at least $\lceil s_B^{i, j}/2\rceil$ regions.
\end{proof}

\begin{theorem} \label{the:grid_lower_bound}
    Put $t^{i, j} = s_A^{i, j} + s_{AB}^{i+1, j} + s_{AB}^{i, j+1} + s_{AB}^{i+1, j+1}$.
    There is a universal lower bound of $ \Omega\Big(\sum_{(i, j) \in \mathbb{Z}^2} t^{i, j} \log(2+t^{i, j})\Big). $
    In particular, \cref{alg:cell_sweep_line} runs in universally optimal time.
\end{theorem}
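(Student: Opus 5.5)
The plan is to reduce the lower bound with $t^{i,j}$ to the two single-cell lower bounds already established, \cref{lem:universal_bound_A} and \cref{lem:universal_bound_B}. Write $t^{i,j} = s_A^{i,j} + s_{AB}^{i+1,j} + s_{AB}^{i,j+1} + s_{AB}^{i+1,j+1}$ as a sum of four nonnegative terms $a_1,\dots,a_4$. The function $x \mapsto x\log(2+x)$ is convex and increasing, so
\[
t\log(2+t) \le 4\max_k a_k \log(2+4\max_k a_k) = O\Big(\sum_{k} a_k \log(2+a_k)\Big).
\]
Each term contributes at most a constant factor. Therefore
\[
\sum_{(i,j)} t^{i,j}\log(2+t^{i,j}) = O\Big(\sum_{(i,j)} s_A^{i,j}\log(2+s_A^{i,j}) + 3\sum_{(i,j)} s_{AB}^{i,j}\log(2+s_{AB}^{i,j})\Big),
\]
because each shifted sum over $(i,j)\in\mathbb{Z}^2$ reindexes to the unshifted sum.

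Next we split $s_{AB}^{i,j} = s_A^{i,j} + s_B^{i,j}$ and apply the same convexity bound once more. This bounds the whole expression by $O\big(\sum s_A^{i,j}\log(2+s_A^{i,j}) + \sum s_B^{i,j}\log(2+s_B^{i,j})\big)$. Each of these two sums is a universal lower bound, by \cref{lem:universal_bound_A} and \cref{lem:universal_bound_B} respectively. The maximum of two universal lower bounds is again a universal lower bound, and it is at least half their sum. This gives the claimed $\Omega$-bound.

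For the ``in particular'' part, we compare against \cref{the:cell_running_time}. Its running time uses $t^{i,j}$ with $s_{AB}^{i,j}$ in place of $s_A^{i,j}$. The previous paragraph shows that this version of the sum is also within a constant factor of the lower bound. The remaining additive $O(|F^1|)$ term is matched by \cref{lem:linear_lower_bound}. Adding the two lower bounds (again via a max) shows that \cref{alg:cell_sweep_line} is universally optimal.

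The only delicate step is the inequality $\big(\sum_k a_k\big)\log\big(2+\sum_k a_k\big) = O\big(\sum_k a_k\log(2+a_k)\big)$ for a constant number of terms. I would prove it by bounding $\sum_k a_k \le 4\max_k a_k$ and using $\log(2+4x) \le \log 4 + \log(2+x) \le 3\log(2+x)$. I do not expect a genuine obstacle: the statement is essentially a bookkeeping combination of earlier lemmas.
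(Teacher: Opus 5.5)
Your proposal is correct and follows the paper's proof. The paper likewise combines \cref{lem:universal_bound_A,lem:universal_bound_B} via the inequality $x\log(2+x)+y\log(2+y)\ge\frac14(x+y)\log(2+x+y)$, and your max-based bound for four terms plus reindexing of the shifted sums plays the same role; it then matches \cref{the:cell_running_time} using \cref{lem:linear_lower_bound}. You also make explicit that the running-time bound uses $s_{AB}^{i,j}$ rather than $s_A^{i,j}$ in $t^{i,j}$, a point the paper leaves implicit.
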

\begin{proof}
    To get the lower bound, combine \cref{lem:universal_bound_A,lem:universal_bound_B} with the fact that $x \log (2+x) + y \log (2+y) \ge \frac{1}{4}(x+y) \log (2+x+y)$ for $x, y \ge 0$. Together with \cref{the:cell_running_time,lem:linear_lower_bound}, this shows the universal optimality of \cref{alg:cell_sweep_line}.
\end{proof}

\subsection{Regions of type C}\label{sec:C}

To deal with regions of type C, we no longer use the unit grid.
Instead we run a variant of \cref{alg:slow_sweep_line} with a finger search tree~\cite{brodal2018finger}
and derive a universal lower bound from the order in which the regions and points are considered.
If a point $p$ is a concave corner point of $\IPF(F^1)$ and $p$ lies both on the right facet of one region, and on the top facet of another region, then there is too little freedom to place points near $p$ that appear on the Pareto front, which complicates a lower bound construction.
Instead, we avoid this case by not inserting such points into the finger search tree.

\begin{definition}
    A point $p \sim R$ is \emph{fiddly} if $p$ is a concave corner point of $\IPF(F^1)$
    and $p$ lies on the top facet of $R$.
\end{definition}

If we are processing a region $R$ and the retrieval of $p \sim R$ returns a fiddly point, then, among points above $\IPF(F^1)$, $p$ is the successor of $R$ in sweep order (\cref{def:sweep_order}).
Thus, we can immediately process $p$, without doing a finger insertion. 
See \cref{alg:finger_sweep_line}.

\begin{algorithm}[tb]
    \caption{Finger sweep line algorithm.}
    \label{alg:finger_sweep_line}
    \begin{algorithmic}
        \STATE $p^* \gets (-\infty, \infty)$
        \STATE $S \gets$ points of $P_{AB}$ on $\PF(P_{AB})$, in sweep order
        \STATE $T \gets$ finger search tree of $F_{C}$, sorted by sweep order
        \STATE $L \gets$ empty linked list
        \FOR{ $R \in S \cup T$ in sweep order }
            \IF{$R$ is not dominated by $p^*$}
                \IF{$R$ is a point}
                    \STATE $p^* \gets R$
                    \STATE Append $p^*$ to $L$
                \ELSE
                    \STATE $p \gets$ retrieve $R$.
                    \IF {$p$ is not a fiddly point}
                        \STATE Insert $p$ into $T$ using $R$ as a finger.
                    \ELSIF{$p$ is not dominated by $p^*$}
                        \STATE $p^* \gets p$
                        \STATE Append $p^*$ to L
                    \ENDIF
                \ENDIF
            \ENDIF
        \ENDFOR
        \RETURN $L$
    \end{algorithmic}
\end{algorithm}

\subparagraph{Running time.}
The running time of \cref{alg:finger_sweep_line} is $O(|F^1|)$ plus the time taken for finger insertions.
To analyze these, consider the content of $T$ at the end of the algorithm.
Put $m = |T|$.
Let $R_1, \dots, R_k \in F_C$ be the regions for which we do a finger insertion, with corresponding points $p_1, \dots, p_k \in P$.
Let $a_i$ be the rank of $R_i$ in $T$ and let $b_i$ be the rank of $p_i$ in $T$.
Then $1 \le a_i < b_i \le m$ for all $i \in [k]$.
The finger searches take $O(k + \sum_{i=1}^{k} \log(b_i - a_i))$ time in total.
The following lemma of~\cite{cardinal_sorting_2013} bounds this sum:

\begin{restatable}[A generalization of Lemma 3.2 in~\cite{cardinal_sorting_2013}]{lemma}{intervallemma} \label{lem:interval}
    Let $\{ [a_i, b_i ] \}$ be a set of $k$ intervals where each interval has size at least $1$,
    and $[a_i, b_i] \subseteq [0, m]$.
    Let $Z$ be the set of linear orders realizable by real numbers $z_i \in [a_i, b_i]$.
    Then $ \sum\limits_{i \in [n]} \log( b_i - a_i) \in O(k \log (\frac{em}{k}) + \log |Z|)$.
\end{restatable}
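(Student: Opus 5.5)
The plan is to reduce to the counting argument behind Lemma 3.2 of~\cite{cardinal_sorting_2013}: lower-bound $|Z|$ by the volume of a subset of the box $B = \prod_i [a_i,b_i]$, and lose only the additive $O(k\log(em/k))$ term. Throughout, $\log$ is base $2$, so $\log(k!) \ge k\log k - k\log e$.

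\textbf{Step 1 (volume bound).} Consider the product measure on $B$ of uniform distributions, so $\operatorname{vol}(B)=\prod_i (b_i-a_i)$. Partition $[0,m]$ into $N = \lceil m\rceil$ unit cells. Each $z\in B$ without ties determines a linear order $\sigma(z)\in Z$. For a fixed order $\sigma$, bound the volume of $\{z \in B : \sigma(z)=\sigma\}$ from above. The set of $z\in[0,m]^k$ inducing $\sigma$ is a simplex-like region of volume $m^k/k!$. Since $B\subseteq[0,m]^k$, each order class has volume at most $m^k/k!$.

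\textbf{Step 2 (combine).} Summing over the orders gives
\[
\prod_i (b_i-a_i)\le |Z|\, m^k/k! .
\]
Taking logarithms,
\[
\sum_i \log(b_i-a_i)\le \log|Z| + k\log m - \log k!\le \log|Z| + k\log\tfrac{em}{k}.
\]
This already gives the stated bound. The condition that each interval has size at least $1$ is not even needed here; it serves only to keep each term nonnegative in the intended application. Ties form a measure-zero set, so they do not affect the volume count.

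\textbf{Step 3 (sanity checks).} Verify that $Z$ is exactly the set of strict orders realizable. If $Z$ includes weak orders, it only grows, so the bound still holds. Also check the degenerate case $k=0$, and the stray index $i\in[n]$ in the statement, which should read $i\in[k]$.

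The expected obstacle is whether the statement intends $\log|Z|$ to count orders of the $z_i$ consistent with some extra structure. For example, in the application the $z_i$ include the fixed ranks $a_i$ of the regions, so one might need $2k$ items. If so, I would redo Step 1 with $2k$ points, where half are fixed. Only the free coordinates carry volume, and the per-order bound becomes $m^k/k!$ in the free variables conditioned on the fixed ones. That gives the same estimate.
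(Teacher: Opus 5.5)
Your proof is correct and is the paper's argument, phrased with volumes instead of probabilities. The paper draws each $z_i$ uniformly from $[0,m]$ and notes that the induced order is uniform over all $k!$ orders and lies in $Z$ whenever $z\in\prod_i[a_i,b_i]$; this gives $\prod_i (b_i-a_i)/m^k \le |Z|/k!$, which is your bound $\operatorname{vol}(B)\le |Z|\,m^k/k!$, and it then uses the same Stirling estimate. The unit-cell partition in Step~1 and the speculation about $2k$ items in Step~3 are never used and can be dropped.
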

We defer the proof of \cref{lem:interval} to Appendix~\ref{sec:interval_proof}.

\begin{lemma} \label{lem:ipf_embedding}
    Let $k, R_i, p_i, a_i, b_i$ be defined as above. Then there are distinct points $q_1, \dots, q_{m}$ on the boundary $\partial \IPF(F^1)$ in top-left to bottom-right order,
    such that every region $R_i$ contains the points $q_{a_i}, q_{a_i+1}, \dots, q_{b_i}$.
    Moreover, there is a map $\Psi : [1, m] \to \partial \IPF(F^1)$ that is injective and top-left to bottom-right monotone,
    such that every region $R_i$ contains $\Psi([a_i, b_i])$.
\end{lemma}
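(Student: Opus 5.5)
Assign to each element of $T$ at the end of \cref{alg:finger_sweep_line} (a type-C region or an inserted non-fiddly point) a point on $\partial\IPF(F^1)$, and show that these points are monotone along the boundary in the order of $T$. Then every $R_i$ contains the whole block of points between its own rank and the rank of $p_i$.

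\textbf{Step 1: the anchor of a type-C region.} A type-C region $R$ associated to cell $(i,j)$ has its top-right corner in a cell met by $\partial\IPF(F^1)$. It is not dominated in $F^0$, and its bottom-left corner lies on or below $\partial\IPF(F^1)$, so $R$ meets $\partial\IPF(F^1)$. Since $R$ is a unit square, $R\cap\partial\IPF(F^1)$ is a connected monotone arc. It is nonempty, and because $R$ contains neither of the two grid-crossing points of its cell, it stays inside cell $(i,j)$ and its neighbours. Let $\alpha(R)$ be the point of this arc at which a fixed monotone parametrisation of $\partial\IPF(F^1)$ meets the horizontal line through the top of $R$. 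If the top of $R$ is above the arc, use instead the intersection with the right facet.

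\textbf{Step 2: the anchor of an inserted point.} For a non-fiddly point $p\sim R$ that is inserted, let $\alpha(p)$ be the point of $\partial\IPF(F^1)\cap R$ at height $p.y$, or the rightmost such point. Such a point exists because $p$ lies above $\IPF(F^1)$; otherwise it would be discarded or dominated and never inserted. Non-fiddliness is what rules out a tie at a concave corner, where the point sits on a top facet and would collide with the anchor of another region.

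\textbf{Step 3: monotonicity.} Show that sweep order (decreasing $y$; squares before points; then increasing $x$) restricted to $T$ agrees with top-left to bottom-right order of the anchors, after a small perturbation that makes equal anchors distinct. The main case is an element $Y$ with $\alpha(Y)$ on the arc of $R_i$ that sits between $R_i$ and $p_i$ in sweep order, so its $y$ lies in $[p_i.y, \mathrm{top}(R_i)]$. Here the monotonicity of the staircase $\partial\IPF(F^1)$ shows that its anchor lies between $\alpha(R_i)$ and $\alpha(p_i)$ along the arc, and hence inside $R_i$. This needs the fact that $\partial\IPF$ is a staircase, so an axis-parallel square meets it in one contiguous piece.

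\textbf{Step 4: the points $q_\ell$ and the map $\Psi$.} Define $q_\ell$ as the perturbed anchor of the rank-$\ell$ element. Perturb equal anchors along the boundary within the relevant open square interiors; this is possible because regions are closed and the grid shift gives transversality. Then extend $\Psi$ to $[1,m]$ by moving monotonically along the boundary arc between $q_\ell$ and $q_{\ell+1}$. Each such arc lies in every $R_i$ containing both endpoints, since $R_i\cap\partial\IPF$ is connected. Injectivity follows from distinctness of the $q_\ell$ and monotonicity.

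I expect the main obstacle to be Step 3, in particular the degenerate situations. These are elements with equal $y$, points lying on grid lines or on facets, and the fiddly-point exclusion. Verifying that the perturbation keeps every $q_\ell$ with $a_i\le\ell\le b_i$ inside the closed square $R_i$ needs careful casework on whether $\alpha$ came from the top or the right facet.
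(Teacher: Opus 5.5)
Your plan is the paper's plan. You anchor each element of $T$ on $\partial\IPF(F^1)$ at its own height, so that decreasing $y$, the primary key of the sweep order, makes the anchors monotone. Containment then follows from connectivity of $R_i\cap\partial\IPF(F^1)$. Where you depart from the paper, though, the argument breaks.

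\textbf{Anchors of inserted points.} You anchor an inserted point at the \emph{rightmost} point of $\partial\IPF(F^1)\cap R$ at height $p.y$. This depends on $R$, and it fails exactly in the tie case you defer. Take $p,p'$ on the same horizontal edge of $\partial\IPF(F^1)$ with $p.x<p'.x$, and suppose $R$ reaches further right than $R'$. Then $\alpha(p)$ lies to the right of $\alpha(p')$, although $p$ precedes $p'$ in sweep order. The rank of $p$ lies in $[a_{R'},b_{R'}]$, because squares precede points at equal height. So $\alpha(p)$ would have to lie in $R'$, yet it can fall outside $R'$. Perturbing \emph{equal} anchors cannot repair a \emph{reversed} order.

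The paper makes the anchor a function of height alone: $\tilde q_\ell$ is the leftmost point of $\partial\IPF(F^1)$ at height $T[\ell].y$. Coincident anchors are separated by rank, shifting $\ell\varepsilon$ to the bottom-right, or $(m+1-\ell)\varepsilon$ to the top-left when $\tilde q_\ell$ is a concave corner. Monotonicity then holds by construction. Excluding fiddly points is what keeps a top-left-shifted anchor inside the region of a point sitting at that corner. These coincidences really are delicate: even the paper's rule needs extra care when a square's top facet passes through a concave corner.

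\textbf{Use of type C.} Your Step 3 silently needs $\alpha(R_i)$ to be the leftmost point of $\partial\IPF(F^1)$ at height $\mathrm{top}(R_i)$. That is, the boundary must enter $R_i$ through its top facet, and this is exactly what type C provides. The boundary enters the cell of $R_i$ through the top edge, or through the left edge above $\mathrm{top}(R_i)$. The top-right corner of $R_i$ is not in the open $\IPF(F^1)$, so the boundary reaches height $\mathrm{top}(R_i)$ inside that cell, at or to the left of that corner. Hence it meets the top facet.

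So your right-facet fallback in Step 1 never occurs, and it must not. If it did, regions whose tops lie slightly below $\mathrm{top}(R_i)$ would be anchored to the left of $R_i$, and your containment argument would fail. You never derive this consequence of type C, and it is the one place where that hypothesis enters the proof.
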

\begin{proof}
    Figure~\ref{fig:interval_lemma} illustrates the construction. We use $T[i]$ to denote the point (or top-right corner of the region) at index $i$ in~$T$.
    Let $\varepsilon > 0$ be small enough.
    To define $q_i$, start from the leftmost point $\tilde{q}_i$ on $\partial \IPF(F^1)$ at y-coordinate $T[i].y$.
    Then, place $q_i$ near $\tilde{q_i}$ at a distance of $i \cdot \varepsilon$ along $\partial \IPF(F^1)$ to the bottom-right,
    unless $\tilde{q_i}$ is a concave corner of $\IPF(F^1)$, in which case you place $q_i$ near $\tilde{q_i}$ at a distance of $(m + 1 - i) \cdot \varepsilon$ along $\partial \IPF(F^1)$ to the top-left. Finally, define $\Psi(i) = q_i$ and interpolate to an injective map to $\partial \IPF(F^1)$. 

    \begin{figure}
        \centering
        \includegraphics{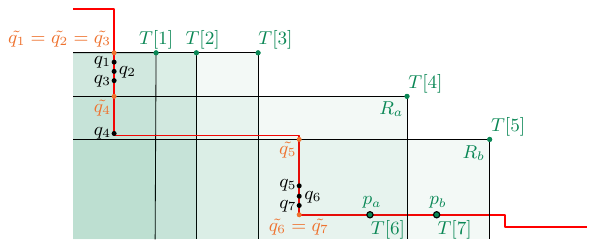}
        \caption{Example where the tree $T$ contains seven regions/points. The point $T[i]$ is the point or top-right corner of the region at index $i$ in $T$, which in turn defines the points $\tilde{q_i}$ and $q_i$. Retrieving the regions  $R_a$ and $R_b$ resulted in the points $p_a = T[6]$ and $p_b = T[7]$ being inserted into $T$. }
        \label{fig:interval_lemma}
    \end{figure}

    Every region $R_i$ contains the points $\tilde{q}_{a_i}, \tilde{q}_{b_i}$, since $R_i$ is a type $C$ region and both points lie to left of $R_i$ and $p_i$.
    Thus, $R_i$ also contains the points in between.
    Since fiddly points do not cause finger insertions, $R_i$ also contains $q_{a_i}, \dots, q_{b_i}$,
    and $\Psi([a_i, b_i])$.
\end{proof}

\begin{theorem} \label{the:cell_sweep_optimal}
    \cref{alg:cell_sweep_line} runs in universally optimal time.
\end{theorem}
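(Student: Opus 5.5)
The plan is to match the running-time bound of \cref{the:cell_running_time} term by term against universal lower bounds already established. There is one bookkeeping wrinkle. The upper bound uses $t^{i,j}=s_{AB}^{i,j}+s_{AB}^{i+1,j}+s_{AB}^{i,j+1}+s_{AB}^{i+1,j+1}$, whereas the lower bound of \cref{the:grid_lower_bound} is stated for $\tilde t^{i,j}=s_A^{i,j}+s_{AB}^{i+1,j}+s_{AB}^{i,j+1}+s_{AB}^{i+1,j+1}$. These differ by exactly $s_B^{i,j}$, so $t^{i,j}=\tilde t^{i,j}+s_B^{i,j}$, and the argument must absorb this extra type-B term.

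The first step uses that $x\mapsto x\log(2+x)$ is increasing and convex on $x\ge 0$. From this I will derive
\[(x+y)\log(2+x+y)\le 2\max(x,y)\log(2+2\max(x,y))\le 4\bigl(x\log(2+x)+y\log(2+y)\bigr).\]
Applying this with $x=\tilde t^{i,j}$ and $y=s_B^{i,j}$ gives
\[\sum_{(i,j)} t^{i,j}\log(2+t^{i,j}) \le 4\sum_{(i,j)}\tilde t^{i,j}\log(2+\tilde t^{i,j}) + 4\sum_{(i,j)} s_B^{i,j}\log(2+s_B^{i,j}).\]
The first sum on the right is a universal lower bound by \cref{the:grid_lower_bound}, and the second by \cref{lem:universal_bound_B}. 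A sum of two universal lower bounds is, up to a factor $2$, again a universal lower bound: the maximum of the two is a lower bound and dominates half their sum. The remaining additive $O(|F^1|)$ term in \cref{the:cell_running_time}, which comes from propagating $p_\ell,p_r$ and from the empty cells along the strip, is covered by \cref{lem:linear_lower_bound}. Together these show that for every $F^0$ the worst-case running time of \cref{alg:cell_sweep_line} over $P\sim F^0$ is within a constant factor of the best possible.

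Since \cref{alg:cell_sweep_line} is used as the first phase of Stage~1, I would also record that the subsequent phase does not spoil optimality. The finger insertions cost $O\bigl(k+\sum_i\log(b_i-a_i)\bigr)$. By \cref{lem:interval} this is $O\bigl(k\log(em/k)+\log|Z|\bigr)$, where $k\log(em/k)\le m\le|F^1|$ is covered by \cref{lem:linear_lower_bound}. Via the monotone injective map $\Psi$ of \cref{lem:ipf_embedding}, every order in $Z$ can be realised by placing $p_i$ near $\Psi(z_i)$ inside $R_i$ and just above $\partial\IPF(F^1)$. This yields $|Z|$ distinct Pareto fronts, so $\log|Z|$ is an information-theoretic universal lower bound.

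I expect the main obstacle to be this realisation step. One must check that points placed near the boundary of $\IPF(F^1)$ in the prescribed order really appear on the front in that order. Fiddly concave corners are exactly where this could fail, which is why they were excluded from finger insertion; the $\varepsilon$-perturbation in \cref{lem:ipf_embedding} is what I would lean on there. For \cref{alg:cell_sweep_line} itself, the only delicate point is the inequality above, which I will verify directly from monotonicity and convexity.
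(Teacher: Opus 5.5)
Your proof is correct for the statement as literally written, but the paper's proof of this theorem concerns a different algorithm. The theorem sits in the type-C section, and its proof analyses \cref{alg:finger_sweep_line}. The final section also cites it for the type-C phase. The reference to \cref{alg:cell_sweep_line} is therefore a slip: universal optimality of the cell sweep is already the last sentence of the proof of \cref{the:grid_lower_bound}.

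Your main paragraph re-derives exactly that earlier conclusion, and it is sound. Handling the $s_A^{i,j}$ versus $s_{AB}^{i,j}$ mismatch by adding \cref{lem:universal_bound_B} works, as does $(x+y)\log(2+x+y)\le 4(x\log(2+x)+y\log(2+y))$; monotonicity alone gives that inequality, and convexity is not needed.

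Your second paragraph is the paper's actual proof. It bounds the finger-search cost by $O(k+\sum_i\log(b_i-a_i))$ and applies \cref{lem:interval} to reach $O(|F^1|+\log|Z|)$. It then uses the embedding $\Psi$ of \cref{lem:ipf_embedding} to realise every order in $Z$, giving an information-theoretic $\Omega(\log|Z|)$ bound.

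You diverge from the paper only at the realisation step, and there you make things harder than necessary. The paper simply sets $p_i=\Psi(z_i)$ exactly on $\partial\IPF(F^1)$. That point lies in $R_i$ by \cref{lem:ipf_embedding}. It is not strictly dominated by any bottom-left corner of a region of $F^1$, because the inner Pareto front is open. Distinct points on a monotone staircase never strictly dominate each other, so the $p_i$ appear on the front in the order of the $z_i$.

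Pushing the points ``just above'' the boundary causes two problems:
\begin{itemize}
\item It can leave $R_i$ when $\Psi(z_i)$ lies on the top or right facet of $R_i$.
\item Unless the shift is uniform, it can make one $p_i$ strictly dominate another on a shared edge.
\end{itemize}
The fiddly-point exclusion and the $\varepsilon$-perturbation are already used up inside \cref{lem:ipf_embedding}, to make the $q_i$ distinct and contained in $R_i$. Once that lemma is available, the realisation needs no further care.

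Two minor points, both costing only constants: $k\log(em/k)\le m$ holds only for the natural logarithm, and $m\le 2|F^1|$ rather than $m\le|F^1|$, since $T$ also holds the inserted points.
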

\begin{proof}
    Let $a_i, b_i$ be defined as above.
    The running time of \cref{alg:finger_sweep_line} is $O(|F^1| + \sum_{i=1}^{k} \log( b_i - a_i))$.
    By \cref{lem:ipf_embedding}, this is $O(|F^1| + \log |Z|)$ since $k \log(\frac{e m}{k}) \le 2 m \le 2|F^1|$.
    \cref{lem:linear_lower_bound} shows a universal lower bound of $\Omega(|F^1|)$.
    What remains is to show a universal lower bound of $\Omega(\log |Z|)$.
    
    Let $z_i \in [a_i, b_i]$ be distinct real numbers.
    Invoke \cref{lem:ipf_embedding} and put $p_i = \Psi(z_i)$ for $i \in [k]$.
    For all other regions, pick their bottom-left corner as their point.
    This defines a point set $P$ such that $\{p_1, \dots, p_k\}$
    appear on $\PF(P)$ in the linear order formed by the $z_i$.
    Thus, an information theoretic lower bound shows a universal lower bound of $\Omega(\log |Z|)$.
\end{proof}

\subsection{The complete retrieval program for unit squares}

By combining subroutines from the previous section,
we obtain a retrieval program for unit squares in \cref{alg:full_unit_square}.
By \cref{obs:C_below_AB,the:cell_sweep_correct}, this executes \cref{alg:strategy_unit_square}. We thus perform an instance-optimal number of retrievals and correctly compute $\PF(P)$.
The preprocessing takes $O(n \log n)$ time and $O(n)$ space by Lemma~\ref{lem:compute_independent}.
By \cref{lem:linear_lower_bound,the:cell_sweep_optimal,the:grid_lower_bound}, the retrieval phase takes universally optimal time and $O(n)$ space.

\begin{algorithm}[h]
    \caption{Full reconstruction program for unit squares.}
    \label{alg:full_unit_square}
    \begin{algorithmic}
        \STATE Compute $I^0$ and put $F^1 = F^0 - I^0$. \COMMENT{Preprocessing}
        \STATE Sort $I^0$ in top-left to bottom-right order.
        \STATE For every region in $F^1$, compute its predecessor in $I^0$.
        \STATE Overlay the unit grid, compute the non-empty cells $(i, j)$ and store them twice: sorted by $i-j$, and sorted by $i+j$.
        \STATE Compute $\IPF(F^1)$, classify regions into $F_A^{i, j}, F_B^{i, j}, F_C^{i, j}$.
        \STATE Build a finger search tree for $F_C$ in sweep order.
        \STATE Run \cref{alg:cell_sweep_line} to obtain $\PF(P_{AB})$ in sorted order. \COMMENT{Stage 1}
        \STATE Run \cref{alg:finger_sweep_line} to obtain $\PF(P^1)$ in sorted order. 
        \STATE $L \gets $ Merge $I^0$ with $\PF(P^1)$ via the predecessor pointers \COMMENT{Stage 2}
        \FORALL{staircase steps $S$ of $\PF(P^1)$ }
            \STATE Retrieve the regions in $I^0$ that intersect $S$ and remove them from $L$ if dominated.
        \ENDFOR
        \RETURN $L$
    \end{algorithmic}
\end{algorithm}

\begin{theorem}\label{thm:unit_squares}
    Let $F$ be a family of $n$ axis-aligned unit squares. We can process $F$ in $O(n \log n)$ time and $O(n)$ space, such that given $P\sim F$ we reconstruct the Pareto front $\PF(P)$ in universally optimal time.
\end{theorem}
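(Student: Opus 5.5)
The proof assembles the pieces of Section~\ref{sec:squares} into \cref{alg:full_unit_square}, verifying three things: correctness, the preprocessing bound, and universal optimality of the reconstruction phase.

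\textbf{Correctness.} \cref{alg:full_unit_square} executes \cref{alg:strategy_unit_square}, which is equivalent to \cref{alg:strategy_simple} for unit squares (Stage~3 makes no retrievals, as argued at the start of the section).
\begin{itemize}
\item For Stage~1, \cref{the:cell_sweep_correct} shows that \cref{alg:cell_sweep_line} retrieves exactly the non-point type A/B regions on $\OPF(F)$ until none remain. By \cref{obs:C_below_AB}, being on $\OPF(F_{AB})$ is equivalent to being on $\OPF(F)$ for such regions, so handling A/B before C is legitimate.
\item \cref{alg:finger_sweep_line} is then the generic sweep of \cref{alg:slow_sweep_line}, seeded with $\PF(P_{AB})$, run on the type C regions. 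The fiddly-point shortcut is sound because such a point is the sweep-order successor of its region.
\item Stage~2 uses the predecessor pointers. Only $O(1)$ regions of $I^0$ can meet each staircase step, since $I^0$ regions are pairwise disjoint unit squares with no mutual dependencies.
\end{itemize}
By \cref{lem:finished} the result is finished, and by the instance-optimality theorem (and \cref{cor:newstr}) the number of retrievals is optimal.

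\textbf{Preprocessing.} The steps and their costs are:
\begin{itemize}
\item computing $I^0$ by \cref{lem:compute_independent};
\item sorting $I^0$ and computing predecessors by binary search;
\item computing $\IPF(F^1)$ as a staircase;
\item bucketing top-right corners into grid cells by hashing or sorting, with cells lying in a thin band;
\item classifying regions into types A/B/C by checking the two $\partial\IPF$ crossings per cell;
\item building the finger tree on $F_C$ in sorted order.
\end{itemize}
Each step takes $O(n\log n)$ time and $O(n)$ space.

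\textbf{Universal optimality.} The reconstruction time is the sum of three terms:
\begin{itemize}
\item Stage~2 costs $O(|F^1|)$, which matches \cref{lem:linear_lower_bound}.
\item The A/B sweep costs $O(|F^1|+\sum t^{i,j}\log(2+t^{i,j}))$ by \cref{the:cell_running_time}, which matches \cref{the:grid_lower_bound}. There is a mismatch here: the upper bound has $s_{AB}^{i,j}$ where the lower bound has $s_A^{i,j}$. I would handle it by also using \cref{lem:universal_bound_B} with superadditivity $x\log(2+x)+y\log(2+y)\ge\frac14(x+y)\log(2+x+y)$.
\item The C sweep costs $O(|F^1|+\log|Z|)$ by \cref{lem:interval} and \cref{lem:ipf_embedding}, which matches the $\Omega(\log|Z|)$ bound of \cref{the:cell_sweep_optimal}.
\end{itemize}
A sum of three terms, each bounded by a universal lower bound, is bounded by three times their maximum, which is itself a universal lower bound. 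So the total is within a constant factor of universal optimality.

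The main obstacle is the C-sweep lower bound. One must confirm that the realization built from $\Psi(z_i)$ in \cref{lem:ipf_embedding} really puts all $p_i$ on $\PF(P)$ in the claimed order. This needs three checks:
\begin{itemize}
\item no bottom-left corner or type A/B point dominates them;
\item $\Psi$ is injective and monotone, so the $p_i$ are pairwise non-dominating;
\item the fiddly exclusion is what keeps $[a_i,b_i]$ inside $R_i$.
\end{itemize}
I would verify this first, since the remaining pieces are bookkeeping over earlier theorems.
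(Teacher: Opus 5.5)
Your proposal is correct and follows the paper's proof, which likewise assembles \cref{alg:full_unit_square} from three groups of results: \cref{obs:C_below_AB,the:cell_sweep_correct} for correctness and instance-optimal retrievals, \cref{lem:compute_independent} for preprocessing, and \cref{lem:linear_lower_bound,the:grid_lower_bound,the:cell_sweep_optimal} for universal optimality; your repair of the $s_A$ versus $s_{AB}$ mismatch via \cref{lem:universal_bound_B} and the $\frac14$-inequality is exactly what the paper's proof of \cref{the:grid_lower_bound} does. The only slip is citing \cref{cor:newstr}, which concerns the layered rectangle variant and is unnecessary here: any order of Stage~1 retrievals of non-point regions on $\OPF(F)$ is already an execution of \cref{alg:strategy_simple}.
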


\newpage
\bibliographystyle{plainurl}
\bibliography{references}

\newpage
\appendix

\section{Proof of \cref{lem:interval}} \label{sec:interval_proof}

The interval lemma is a powerful tool introduced by \cite{cardinal_sorting_2013} that keeps on giving \cite{van2019preprocessing,van2024tight,haeupler2025fast,van2025simpler,Haeupler25}:

\begin{lemma} [Lemma 3.2 in~\cite{cardinal_sorting_2013}] \label{lemm:original_interval}
    Let $\{ [a_i, b_i ] \}$ be a set of $n$ intervals where each interval has size at least $1$,
    and $[a_i, b_i] \subseteq [0, n]$.
    Let $Z$ be the set of linear orders realizable by real numbers $z_i \in [a_i, b_i]$.
    Then $ \sum\limits_{i \in [n]} \log( b_i - a_i) \in O(n + \log |Z|)$.
\end{lemma}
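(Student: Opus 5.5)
This is the original interval lemma of Cardinal et al. I would prove it by an entropy (volume) argument. First, assign each linear order the probability that it is induced by independent uniform choices, and bound that probability from below.

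Let $z_i$ be drawn independently and uniformly from $[a_i,b_i]$, and let $\sigma$ be the resulting linear order, which is almost surely strict. Since $\sigma$ is supported on $Z$, its Shannon entropy satisfies $H(\sigma) \le \log |Z|$. It therefore suffices to show
\[
H(\sigma) \;\ge\; \sum_i \log(b_i-a_i) - O(n).
\]

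To do this, I would view $z=(z_1,\dots,z_n)$ as a uniform random point of the box $B=\prod_i [a_i,b_i]$. Its differential entropy is $h(z)=\sum_i \log(b_i-a_i)$. By the chain rule, $h(z) = H(\sigma) + h(z\mid\sigma)$. Conditioned on $\sigma$, the vector $z$ lies in the order cone $C_\sigma=\{z_{\sigma(1)}<\dots<z_{\sigma(n)}\}$ intersected with $[0,n]^n$. The volume of that set is $n^n/n!$. Hence
\[
h(z\mid\sigma) \le \log(n^n/n!) \le n\log e = O(n).
\]
Combining the three facts gives $\sum_i\log(b_i-a_i) \le \log|Z| + n\log e$, which is the claim.

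If I prefer to avoid differential entropy, I would run the same argument in discrete form. Partition $B$ into the pieces $B\cap C_\sigma$ for $\sigma\in Z$. Each piece has volume at most $n^n/n!$, so
\[
\prod_i (b_i-a_i) = \mathrm{vol}(B) \le |Z|\cdot n^n/n! \le |Z|\, e^n .
\]
Taking logarithms finishes the proof. This version is simpler, and I would actually write it.

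The main point to check is the volume bound for a simplex-like region, namely $\mathrm{vol}(\{0\le x_1<\dots<x_n\le n\}) = n^n/n!$, together with Stirling's bound $n^n/n!\le e^n$. Both are standard.

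The hypothesis that each interval has size at least $1$ is not needed for this inequality. It only ensures that every term $\log(b_i-a_i)$ is nonnegative, so the bound is meaningful term by term. The only remaining subtlety is that the pieces $C_\sigma$ with $\sigma \notin Z$ have zero intersection with $B$ by the definition of $Z$; ties occur only on a set of measure zero.
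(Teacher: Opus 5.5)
Your proof is correct and is essentially the paper's argument. The paper proves the generalization, and for $k=m=n$ it draws $z$ uniformly from $[0,n]^n$. It notes that the induced order is uniform over all $n!$ orders and lies in $Z$ whenever $z\in\prod_i[a_i,b_i]$, which gives exactly your inequality $\prod_i(b_i-a_i)\le |Z|\,n^n/n!$ before applying Stirling. Your volume-partition version is the deterministic phrasing of that probability bound, and your remark that the size-at-least-$1$ hypothesis is unused applies to the paper's proof as well.
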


We introduce a generalization of this lemma where the number of intervals need not be the same as the size of the universe these intervals live in. This makes the lemma more convenient to use.

\intervallemma*

\begin{proof}
    Many existing proofs of \cref{lemm:original_interval} can be adapted to work in this setting.
    For example, we tweak the proof of \cite{haeupler2025fast}:
    For each $i$ between $1$ and $k$ inclusive, choose a real number $z_i$ uniformly at random from the real interval $[0, m]$, independently for each $i$. With probability $1$, the $r_i$ are distinct. 
    Let $L$ be the permutation of $[k]$ obtained by sorting $[k]$ by $z_i$. Each possible permutation is equally likely.
    If each $z_i \in [a_i, b_i]$ then $L$ is in $Z$.
    The probability of this happening is $\prod_{i=1}^k (b_i-a_i)/m$.  It follows that $|Z| \geq k! \cdot \prod_{i=1}^k (b_i-a_i)/m$.  Taking logarithms gives $\log |Z| \geq \sum_{i=1}^k \log (b_i-a_i) + \log k! - k\log m$.  By Stirling's approximation of the factorial, $\log k! \geq k\log k - k\log e$. Thus,
    \[
    \sum_{i=1}^k \log (b_i-a_i) \le \log |Z| + k \log m - \log k! \le \log |Z| + k \log \Big(\frac{e m}{k}\Big).
    \]
\end{proof}

\subparagraph{Warning.} For $m \le k$, one can show the stronger bound $\sum_{i \in [k]} \log(b_i - a_i) \in O(\log |Z|)$, see Lemma~3 in~\cite{van2025simpler}.
However, this bound no longer true for $m \ge k$. For example, consider $m = 2k$ and the intervals $[0, 2], [2, 4], \dots, [2k-2, 2k]$,
then $|Z| = 1$ but $\sum_{i \in [k]} \log(b_i - a_i) = k \log(2) \in \Theta(k)$.

\end{document}